\def\DEBUG{true}
\newtheorem{invariant}{Invariant}
\newtheorem{theorem}{Theorem}
\newtheorem{question}{Question}
\newtheorem{lemma}[theorem]{Lemma}
\newtheorem{corollary}[theorem]{Corollary}
\newtheorem{observation}[theorem]{Observation}
\newcommand{\eps}{\varepsilon}
\newcommand{\mx}{\max}
\newcommand{\mout}{\Delta(\overrightarrow{G})}
\newcommand{\polylog}{\operatorname{polylog}}
\newcommand{\poly}{\operatorname{poly}}
\title{Adaptive Out-Orientations with Applications}
\title{Adaptive Out-Orientations with Applications}
\author[1]{Aleksander B. G. Christiansen} \author[2]{Jacob Holm} \author[1]{Ivor van der Hoog} \author[1]{Eva Rotenberg} \author[3]{Chris Schwiegelshohn}
\affil[1]{Technical University of Denmark\thanks{Partially supported by Independent Research Fund Denmark grant 2020-2023 (9131-00044B) ``Dynamic Network Analysis'', the VILLUM Foundation grant 37507 ``Efficient Recomputations for Changeful Problems'' and the EUROTECH cofund (110531) .}}
\affil[2]{University of Copenhagen, Denmark\thanks{Partially supported by the VILLUM Foundation grant 16582, ``BARC".}}
\affil[3]{Aarhus University, Denmark\thanks{Supported by a Independent Research Fund Denmark (DFF) Sapere Aude Research
Leader grant No 1051-00106B}}
\date{}
\date{}
\begin{document}
\thispagestyle{empty}
\maketitle

\begin{abstract}
We give improved algorithms for maintaining edge-orientations of a fully-dynamic graph, such that the out-degree of each vertex is bounded. On one hand, we show how to orient the edges such that the out-degree of each vertex is proportional to the arboricity $\alpha$ of the graph, in a worst-case update time of $O(\log^3 n \log \alpha)$. On the other hand, motivated by applications including dynamic maximal matching, we obtain a different trade-off, namely the improved worst case update time of $O(\log ^2 n \log \alpha)$ for the problem of maintaining an edge-orientation with at most $O(\alpha + \log n)$ out-edges per vertex.
Since our algorithms have update times with worst-case guarantees, the number of changes to the solution (i.e. the recourse) is naturally limited.
Our algorithms 
adapt to the current arboricity of the graph, and yield improvements over previous work:

Firstly, we obtain an $O(\varepsilon^{-6}\log^3 n \log \rho)$ worst-case update time algorithm for maintaining a $(1+\varepsilon)$ approximation of the maximum subgraph density, $\rho$, improving upon the $O(\varepsilon^{-6}\log ^4 n)$ algorithm by Sawlani and Wang from STOC 2020.

Secondly, we obtain an $O(\varepsilon^{-6}\log^3 n \log \alpha)$  worst-case update time algorithm for maintaining a $(1 + \varepsilon)\textnormal{OPT} + 2$ approximation of the optimal out-orientation of a graph with adaptive arboricity $\alpha$, improving the $O(\varepsilon^{-6}\alpha^2 \log^3 n)$ algorithm by Christiansen and Rotenberg from ICALP 2022. 
This yields the first worst-case polylogarithmic dynamic algorithm for decomposing into $O(\alpha)$ forests.
Thirdly, we obtain arboricity-adaptive fully-dynamic deterministic algorithms for a varierty, of problems including maximal matching, $\Delta+1$ coloring, and matrix vector multiplication. All update times are worst-case $O(\alpha+\log^2n \log \alpha)$, where $\alpha$ is the current arboricity of the graph. 
Specifically for the maximal matching problem, this improves for $\alpha \in \Omega( \log n\sqrt{\log\log n})$, on the deterministic algorithms by Kopelowitz, Krauthgamer, Porat, and Solomon from ICALP 2014 running in time $O(\alpha^2 + \log^2 n)$ and by Neiman and Solomon from STOC 2013 running in time $O(\sqrt{m})$.
\end{abstract}
\setcounter{page}{0}

\thispagestyle{empty}

\newpage 
\section{Introduction}
\label{sec:intro}
In dynamic graphs, one wishes to update a data structure over a graph $G(V,E)$ (or an answer to a specified graph problem) as the graph undergoes local updates such as edge insertions and deletions. 
One of the fundamental problems is to maintain an orientation of the edges such that the maximal out-degree over all nodes is minimized.
While the problem is interesting in its own right, bounded out-degree orientations have a number of applications. First, the problem is closely related to the task of finding the densest subgraph; indeed if the edges can be fractionally oriented, the maximum out-degree is equal to the density $\rho := \frac{|E\cap (S\times S)|}{|S|}$ of the densest subgraph $S\subseteq V$.
Secondly, bounded out-degree orientations appear frequently as subroutines for other problems. In particular, there exist a large body of work parameterizing the update time of dynamic algorithms for many fundamental problems such as shortest paths \cite{FrigioniMN03,KowalikK06}, vertex cover \cite{PelegS16,Solomon18}, graph coloring \cite{henzinger2020explicit,SolomonW20}, independent set \cite{OnakSSW20}, and, most prominently, maximum matching \cite{BernsteinS15,BernsteinS16,GSSU22,HeTZ14,NeimanS16,PelegS16,Solomon18} in terms of the arboricity $\alpha:=\max_{S\subseteq V,|S|\geq2} \lceil\frac{|E\cap (S\times S)|}{|S|-1}\rceil$.

In light of their widespread applicability, maintaining an orientation of the edge minimizing the maximal outdegree is extremely well motivated. In particular, we are interested in algorithm with worst-case deterministic update times, as these can be immediately used as black-box subroutines. In a recent breakthrough result, \cite{sawlani2020near} showed that it is possible to maintain an estimate for the smallest maximal outdegree in $\polylog(n)$ worst case deterministic time by maintaining an estimate for the density of the densest subgraph. Nevertheless, all known results for maintaining an orientation require at least update time $\Omega(\rho) = \Omega(\alpha)$ worst case update time, regardless of whether the algorithm is randomized or not \cite{KopelowitzKPS13}. For dense graphs, this bound may be arbitrarily close to $n$. Thus, it raises the following question:

\begin{question}
    Is it possible to maintain an (approximate) out-degree orientation in sublinear deterministic worst case update time?
\end{question}

\subsection{Our Contribution}

In this paper, we answer the aforementioned question in the affirmative. Specifically, we provide a framework for maintaining approximate out-orientations with various trade-offs between the quality of the out-degree orientation and update time. All of the algorithms are deterministic and have a $\polylog(n)$ worst case update time.
For the problem of maintaining an out-orientation we obtain:
\begin{enumerate}
    \item An orientation with maximal out-degree $O(\alpha)$ in update time $O(\log^3 n \log \alpha)$.
    \item An orientation with maximal out-degree $(1+\varepsilon)\alpha +2$ in update time $O(\varepsilon^{-6}\log^3 n \log \alpha)$.
    \item An orientation with maximal out-degree $O(\alpha +\log n)$ in update time $O(\log ^2 n \log \alpha)$.
\end{enumerate}

\noindent All times are worst-case. 
The recourse, i.e. the number of re-orientations of edges, is in all three cases a $\log n$ factor lower than the running time.
We apply our framework to various other problems (see Table \ref{tab:results}.

\subparagraph{Densest Subgraph}
Using the duality between out-degree orientations and maximum subgraph density, we obtain a $(1+\varepsilon)$ approximate estimate for maximum subgraph density $\rho$ in update time $O(\varepsilon^{-6}\log^3 n \log \rho)$. 
This recovers (and in fact moderately improves) the recent algorithm by \cite{sawlani2020near} that has an update time of $O(\varepsilon^{-6}\log^4 n)$.

\subparagraph{Arboricity Decomposition}
An arboricity decomposition partitions the edge set into a minimal number of forests. The best known dynamic algorithms for maintaining an $O(\alpha)$ 
arboricity decomposition has an amortized deterministic update time of $O(\log^2 n)$ due to \cite{henzinger2020explicit} and a $O(\sqrt{m}\log n)$ worst case deterministic update time due to \cite{Banerjee}. 
Distinguishing between arboricities $1$ and $2$ requires $\Omega(\log n)$ time \cite{Patrascu10,Banerjee}.
Using our framework, we substantially improve the worst case update time to $O(\log^3 n \log \alpha)$.

\subparagraph{Dynamic Matrix Vector Multiplication}
In the Dynamic Matrix Vector Multiplication problem, we are given an $n\times n$ matrix $A$ and an $n$-vector $x$. Our goal is to quickly maintain $y=Ax$ in the sense that we can quickly query every entry $y_i = (Ax)_i$, subject to additive updates to the entries of $x$ and $A$. 
Interpreting $A$ as the adjacency matrix of a graph, \cite{KopelowitzKPS13} presented an algorithm that supports updates to $A$ in time $O(\alpha^2+\log^2 n)$ and updates to $x$ in time $O(\alpha + \log n)$. We improve the update times to $A$ to $O(\alpha + \log ^2 n \log \alpha)$, which is better when $\alpha\in\Omega(\log n\sqrt{\log\log n})$.

\subparagraph{Maximal Matching}
A matching is a set of node disjoint edges. A matching $M$ is maximal if no edge of the graph can be added to it without violating the matching property. 
More so than perhaps any other problem, there exists a large gap between the performance of the state of the art deterministic algorithms vs the state of the art randomized algorithms. Using randomization, one can achieve a $O(1)$ amortized \cite{Solomon16} and a $\polylog(n)$ worst case update time \cite{BernsteinFH21}. Deterministic algorithms so far can only achieve a $O(\sqrt{m})$ update time for arbitrary graphs \cite{NeimanS16}, a $O(\alpha^2 + \log^2 n)$ update time where $O(\alpha)$ is the current arboricity of the graph \cite{KopelowitzKPS13}. 
Using our framework, we improve on known deterministic algorithms whenever $\alpha \in \Omega(\log n \cdot \sqrt{\log\log n})$ by achieving an update time of $O(\alpha + \log^2 n\log \alpha)$.

\subparagraph{$\Delta+1$ Coloring}
A fundamental question in many models of computation is how to efficiently compute a $\Delta+1$ coloring where $\Delta$ is the maximum degree of the graph. 
We present a deterministic algorithm that maintains a $\Delta+1$ coloring in $O(\alpha+\log^2 n\log \alpha)$ worst case update time. To the best of our knowledge, this is the first such algorithm that beats the trivial $O(\Delta)$ update time for uniformly sparse graphs. All other results~\cite{BhattacharyaGKL22,HenzingerP22,BhattacharyaCHN18,SolomonW20} require randomization, amortization, or do not yield a $\Delta+1$ coloring.

\subsection{Related Work}

\paragraph{Out-orientations.}
Historically, four criteria are considered when designing dynamic out-orientation algorithms: the maximum out-degree, the update time (or the recourse), amortized versus worst-case updates, and the adaptability of the algorithm to the current arboricity. 

Brodal and Fagerberg~\cite{Brodal99dynamicrepresentations} were the first to consider the out-orientation problem in a dynamic setting. 
They showed how to maintain an $\mathcal{O}(\alpha_{\max})$ out-orientation with an amortized update time of $\mathcal{O}(\alpha_{\max}+ \log{n})$, where $\alpha_{\max}$ is the maximum arboricity throughout the entire update sequence.
Thus, their result is adaptive to the current arboricity as long as it only increases. 
He, Tang, and Zeh~\cite{HeTZ14} and Kowalik~\cite{10.1016/j.ipl.2006.12.006} provided different analyses of Brodal and Fagerbergs algorithm resulting in faster update times at the cost of worse bounds on the maximum out-degree of the orientations. 
Henzinger, Neumann, and Wiese~\cite{henzinger2020explicit} gave an algorithm able to adapt to the current arboricity of the graph, achieving an out-degree of $\mathcal{O}(\alpha)$ and an amortized update time \emph{independent} of $\alpha$, namely $O(\log^2 n)$.
Kopelowitz, Krauthgamer, Porat, and Solomon~\cite{KopelowitzKPS13} showed how to maintain an $\mathcal{O}(\alpha+\log n)$ out-orientation with a worst-case update time of $\mathcal{O}(\alpha^2 + \log^2 n)$ fully adaptive to the arboricity. 
Christiansen and Rotenberg~\cite{christiansenICALP,christiansenMFCS} lowered the maximum out-degree to $(1+\varepsilon)\alpha+2$ incurring a worse update time of $\mathcal{O}(\varepsilon^{-6}\alpha^2\log^3 n)$.
Finally, Brodal and Berglin~\cite{berglinetal:LIPIcs:2017:8263} gave an algorithm with a different trade-off; they show how to maintain an $\mathcal{O}(\alpha_{\max}+\log n)$ out-orientation with a worst-case update time of $\mathcal{O}(\log n)$. This update time is faster and independent of $\alpha$, however the maximum out-degree does not adapt to the current value of $\alpha$.
Work on densest subgraph \cite{BahmaniKV12,BhattacharyaHNT15,EpastoLS15} can be used to estimate maximum degree of the best possible out orientation.
Sawlani and Wang~\cite{sawlani2020near} maintain a $(1 - \eps)$-approximate densest subgraph in worst-case time $O(\eps^{-6}\log^4 n )$ per update where they maintain an \emph{implicit} representation of the approximately-densest subgraph. They write that they can, in $O(\log n)$ time, identify the subset $S \subseteq V$ where $G[S]$ is the approximately-densest subgraph and they can report it in $O(|S|)$ additional time. Through a weak duality between densest subgraphs and  out-orientations, they can implicitly maintain a $(2 + \eps)$-approximate minimal fractional out-orientation in $O(\log^4 n \eps^{-6})$ per update. 

The out-orientation results by Brodal and Fagerberg~\cite{Brodal99dynamicrepresentations} and
Kopelowitz, Krauthgamer, Porat, and Solomon~\cite{KopelowitzKPS13} entail, through a well known reduction to the maximal matching problem (compare Theorem~\ref{thm:folklore} in the appendix), a dynamic maximal matching algorithm.
As mentioned, Theorem~\ref{thm:folklore} does not apply to the amortized update-time algorithms. 
In addition, the result by Sawlani and Wang~\cite{sawlani2020near} cannot be applied to maximal matchings. 
This is because their result does not maintain an explicit out-orientation. 
Rather, they guess $O(\log n)$ values $\alpha_{guess}$ for $\alpha$, and maintain an implicit out-orientation for each  $\alpha_{guess}$.
As the arboricity changes, one would have to change between versions of $\alpha_{guess}$, making it hard to bound the recourse $r_u$ needed in Theorem~\ref{thm:folklore}.
It is not possible to efficiently maintain a maximal matching for each of the $O(\log n)$ candidate out-orientations.

\paragraph{Maximal Matching}
Matchings have been widely studied in dynamic graph models. Under various plausible conjectures, we know that a maximum matching cannot be maintained even in the incremental setting and even for low arboricity graphs (such as planar graphs) substantially faster than $\Omega(n)$ update time \cite{AbboudD16,AbboudW14,HenzingerKNS15,KopelowitzPP16,Dahlgaard16}.
Given this, we typically relax the requirement from maximum matching to maintaining matchings with other interesting properties. 
Such a relaxation is to require that the maintained matching is only \emph{maximal}. The ability to retain a maximal matching is frequently used by other algorithms, notably it immediately implies a $2$-approximate vertex cover. 
In incremental graphs, maintaining a maximal matching is trivially done with the aforementioned greedy algorithm. 
For decremental\footnote{Maintaining an \emph{approximate maximum matching} decrementally is substantially easier than doing so for fully dynamic graphs. Indeed, recently work by \cite{AssadiBD22} matches the running times for approximate maximum matching in incremental graphs \cite{GLSSS19}. However, for maximal matching, we are unaware of work on decremental graphs that improves over fully dynamic results.} or fully dynamic graphs, there exist a number of trade-offs (depending on whether the algorithm is randomized or determinstic, and whether the update time is worst case or amortized).  Baswana, Gupta, and Sen~\cite{BaswanaGS15} and
Solomon~\cite{Solomon16} gave randomized algorithms maintaining a maximal matching with $O(\log n)$ and $O(1)$ amortized update time. These results were subsequently deamortized by Bernstein, Forster, and Henzinger \cite{BernsteinFH21} with only a $\polylog n$ increase in the update time. For deterministic algorithms, maintaining a maximal matching is substantially more difficult.
Ivkovic and Lloyd~\cite{IvkovicL93} gave a deterministic algorithm with $O((n+m)^{\sqrt{2}/2})$ worst case update time. This was subsequently improved to $O(\sqrt{m})$ worst case update time by Neiman and Solomon \cite{NeimanS16}, which remains the fastest deterministic algorithm for general graphs.

Nevertheless, there exist a number of results improving this result for low-arboricity graphs. Neiman and Solomon \cite{NeimanS16} gave a deterministic algorithm that, assuming that the arboriticty of the graph is always bounded by $\alpha_{\max}$, maintains a maximal matching in amortized time $O(\min_{\beta>1}\{\alpha_{\max} \cdot \beta + \log_{\beta} n\})$, which can be improved to $O(\log n/\log\log n)$ if the arboricity is always upper bounded by a constant. Under the same assumptions, He, Tang, and Zeh \cite{HeTZ14} improved this to $O(\alpha_{\max} + \sqrt{\alpha_{\max}\log n})$ amortized update time.
Without requiring that the arboricity be bounded at all times, the work by Kopelowitz, Krauthgsamer, Porat, and Solomon~\cite{KopelowitzKPS13} implies a deterministic algorithm with $O(\alpha^2 + \log^2 n)$ worst case update time, where $\alpha$ is the arboricity of the graph when receiving an edge-update.


\paragraph{Arboricity decomposition}
While an arboricity decomposition of a graph; a division of its edges into minimally few forests; is conseptually easy to understand, computing an arboricity decomposition is surprisingly nontrivial. Even computing it exactly has received much attention~\cite{GabowW,Gabow95,Edmonds1965MinimumPO,PicardQueyranne82}. 
The state-of-the-art for computing an exact arboricity decomposition runs in $\tilde{O}(m^{3/2})$ time \cite{GabowW,Gabow95}.
In terms of not-exact algorithms there is a 2-approximation algorithm~\cite{ArikatiMZ97,Eppstein94} as well as an algorithm for computing an $\alpha+2$ arboricity decomposition in near-linear time~\cite{blumenstock2019constructive}.

For dynamic arboricity decomposition, Bannerjee et al.~\cite{Banerjee} give a dynamic algorithm for maintaining the current arboricity. The algorithm has a near-linear update time. They also provide a lower bound of $\Omega(\log{n})$. 
Henzinger Neumann Wiese~\cite{henzinger2020explicit} provide an $O(\alpha)$ arboricity decomposition in $O(\poly(\log n , \alpha))$ time; their result also goes via out-orientation, and they provide a dynamic algorithm for maintaining a $2\alpha'$ arboricity decomposition, given access to any black box dynamic $\alpha'$ out-degree orientation algorithm. 
Most recently, there are algorithms for maintaining $(\alpha+2)$ forests in $O(\operatorname{poly} (\log(n) , \alpha))$ update-time~\cite{christiansenICALP}, and $(\alpha + 1)$ forests in $\tilde{O}(n^{3/4}\operatorname{poly}(\alpha))$ time~\cite{christiansenMFCS}.

\begin{table}[h]
\vspace{1em}
    \centering
    \begin{tabularx}{1.1\textwidth}{c|c|c|c|l}
      \makecell{\textbf{Dynamic} \\
     \textbf{Problem} 
     } & \textbf{Guarantee} & \textbf{Worst-case Time} & \textbf{Thm.} & \textbf{State-of-the-art comparison} \\
         \hline
         \hline
     %
     %
    \makecell{out-orient \\ /density}
     & 
     \makecell{  $O(\alpha)$ \\  $O(\rho)$}      & $O( \log^3 n \log \alpha)$ & Thm.~\ref{thm:fastnonadd} & 
     \makecell[l]{
       $O(\alpha_{\mx}) \textnormal{ in } O(\alpha_{\mx} + \log n) \sim \textnormal{amor.~\cite{Brodal99dynamicrepresentations}}$ \\
      $O(\alpha) \textnormal{ in } O(\log^2 n) \sim \textnormal{amor.~\cite{henzinger2020explicit}}$ \\
      $O(\alpha) \textnormal{ in } O(\log^4 n) \sim \textnormal{implicit~\cite{sawlani2020near}}$
     }\\
     \hline
     %
     %
     \makecell{out-orient \\ /density} & 
      \makecell{  $O(\alpha + \log n)$ \\  $O(\rho + \log n)$} & $O(\log ^2 n \log \alpha)$ & Thm.~\ref{thm:fast_butworse} &      
     \makecell[l]{
       $O(\alpha_{\mx} + \log n) \textnormal{ in } O(\log n)  \textnormal{~\cite{berglinetal:LIPIcs:2017:8263}}$ \\
      $O(\alpha + \log n) \textnormal{ in } O(  \alpha^2 + \log^2 n ) \textnormal{~\cite{KopelowitzKPS13}}$\\
      $O(\alpha) \textnormal{ in } O(\log^4 n) \sim \textnormal{implicit~\cite{sawlani2020near}}$
} \\
     \hline
     %
     %
     density & $ (1 + \eps) \rho $ & $O\left(\varepsilon^{-6}\log^3 n \log \rho \right)$ & Cor.~\ref{cor:eps_subgraphdensity} & 
     \makecell[l]{
      $(1 + \eps) \rho \textnormal{ in } O(\eps^{-6} \log^4 n  ) \textnormal{~\cite{sawlani2020near}} $
     } \\
     \hline
    %
     %
    out-orient & $(2 + \eps) \alpha + 1$ & $O\left(\varepsilon^{-6}\log^3 n \log \alpha \right)$ & Obs.~\ref{obs:rounding} & 
        $(2 + \eps) \alpha \textnormal{ in } O(\eps^{-6} \log^4 n  ) \sim \textnormal{implicit~\cite{sawlani2020near}}$
     \\
     \hline
     out-orient & $(1 + \eps)\alpha + 2$ & $O( \eps^{-6} \log^3 n \log \alpha)$ & Thm.~\ref{thm:epsapprox} & 
     $
     O( \eps^{-6} \alpha^2 \log^3 n) \textnormal{~\cite{christiansenICALP}}
     $ \\
     %
     %
     \hline
      \makecell{maximal \\ matching}  & -- & $O(\alpha + \log ^2 n \log \alpha)$ & Cor.~\ref{cor:match} & 
     \makecell[l]{$%
    O(\alpha_{\mx} + \log n) \textnormal{~ \cite{berglinetal:LIPIcs:2017:8263}}$ \\
       $O(\alpha_{\mx}+\sqrt{\alpha_{\mx}\log n} ) \sim\textnormal{amor.~\cite{HeTZ14}}$\\
       $O(\sqrt{m} )  \textnormal{~\cite{NeimanS16}}$ \\
     $O(\alpha^2 + \log^2 n) \textnormal{~\cite{KopelowitzKPS13}}%
     $} \\
     %
     %
       \hline
     \makecell{ maintain \\
     $A \cdot \overrightarrow{x}$} & -- & \makecell{$O(\log ^2 n \log \alpha)$ 
     for $A$ 
     \\
     $O(\alpha + \log n)$  \textnormal{ for }  $\overrightarrow{x}$ \\
     $O(\alpha + \log n)$  \textnormal{ query } 
     } & Cor.~\ref{cor:dynamicmatrix} &
    $\begin{cases}
     O(\alpha^2 + \log^2 n) \textnormal{ updating } A \\
     O(\alpha + \log n) \textnormal{ query+upd } \overrightarrow{x} 
     \end{cases}$
     \cite{KopelowitzKPS13}
      \\
     %
     %
     \hline
     \makecell{ arboricity \\
     decomp } 
     & $O(\alpha)$ & $O(\log^3 n \log \alpha)$ & Cor.~\ref{cor:decomp} &
     \makecell[l]{
     $O(\alpha) \textnormal{ in } O(\log^2 n) \sim \textnormal{amor. }$
     \cite{henzinger2020explicit} \\ 
     $O(\sqrt{m}\log n)$ \cite{Banerjee}
     }
     \\
     \hline
    \makecell{ $\Delta+1$ \\
     coloring } 
     &  & $O(\alpha+\log^2 n\log \alpha)$ & Cor.~\ref{cor:col} &
     $O(\Delta)$ (folklore)
    \end{tabularx}
    \caption{
    Our results, compared to the state-of-the art. Unless mentioned otherwise, all algorithms are deterministic, have a worst case running times, and are fully-adaptive to the aboricity, and all orientations are stored explicitly.
    We denote by $\alpha$ the arboricity of the graph and by $\rho$ the maximal subgraph density. 
    In the column comparing with state-of-the-art, we have indicated with ``$\sim$ implicit'' when the cited paper only gives implicit access to the solution it is maintaining, and we have indicated with ``$\sim$ amor.'' when the cited paper only gives an amortized running time. 
    \label{tab:results}
    }
\end{table}


\subsection{Preliminaries and Parameterization of the Algorithm}\label{sec:prelim}
Let $G = (V, E)$ be an undirected graph with $n$ vertices and $m$ edges. 
We study $G$ subject to edge insertions and deletions.
The \emph{density} of a graph $G$ is defined as $\rho := \frac{|E[G]|}{|V(G)|}$, and the \emph{maximum subgraph density} of $G$ is then the density of the densest subgraph of $G$. A closely related measure of uniform sparsity is the \emph{arboricity} of a graph, defined as:
\[
\alpha := \max \limits_{H \subseteq G , |V(H)| \geq 2} \left \lceil \frac{|E(H)|}{|V(H)|-1} \right \rceil
\]

\noindent
For a graph $G$, denote by $\overrightarrow{G}$ an \emph{orientation} of $G$: a version of $G$ which imposes upon each edge of $G$ a direction.
For any vertex $u \in V$, we subsequently denote by $N^+(u)$ all vertices $w$ with $\overrightarrow{uw} \in \overrightarrow{G}$ and by $N^-(u)$ all vertices $w$ with $\overrightarrow{wu} \in \overrightarrow{G}$.
We denote the out-degree $d^+(u)$ as all edges directed from $u$, and the in-degree $d^-(u)$ as all edges directed towards $u$. Observe that whenever $G$ contains multiple edges between the same two vertices, then $d^+(u)$ can be larger than $|N^+(u)|$. 
The maximum out-degree of $\overrightarrow{G}$ is defined as: $\Delta(\overrightarrow{G}) := \max_{v \in V} d^+(v)$. It was shown by Picard \& Queyranne~\cite{PicardQueyranne82} that $\lceil \rho \rceil = \min_{\overrightarrow{G}} \Delta(\overrightarrow{G})$, and so it follows that $\rho \leq \Delta(\overrightarrow{G})$ always.



\paragraph{Simultaneous independent parallel work} by Chekuri and Quanrud on subgraph density estimation~\cite{ChekuriKent} 
contains overlapping techniques with the paper at hand. 


\section{Overview of techniques}\label{sec:tech}
An important theoretical insight that helps shed light on the topic of this work, is the following linear program for minimizing the maximal fractional out-degree of an edge orientation:
%
%
\begin{align*}
\min \rho & ~~~~\text{ s.t.}\\
\alpha_{u,v} + \alpha_{v,u} \geq 1 & ~~~~\forall \{u,v\}\in E \\
\rho \geq \sum_{v\in V} \alpha_{u,v}  &~~~~~\forall u\in V \\
\rho,\alpha \geq 0 & 
\end{align*}

\noindent
The LP induces a (fractional) orientation of the edges. Specifically, $\alpha_{u,v}=1$ implies that the edge is oriented from $u$ to $v$. Similarly, we define $fd^+(u):=\sum_{v\in V} \alpha_{u,v}$ to be the fractional out-degree of the node $u$.
There always exist an out-orientation with maximum outdegree $\lceil \rho \rceil$. Note that by duality, the optimal value of $\rho$ is also equal to the density of the densest subgraph.

Our goal is to (i) maintain a fractional out-orientation and (ii) round the fractional out-orientation with low deterministic update time per update.

\paragraph{Maintaining $\rho$ with Small Recourse.}
 Let $G_1,\ldots, G_t$ be a sequence of graphs where $G_i$ and $G_{i+1}$ differ in precisely one edge. Our first challenge is to find a sequence of feasible dual solutions $D_i=\begin{pmatrix}\rho_i \\ \alpha_i\end{pmatrix}$ such that the number of changes $\|\alpha_i-\alpha_{i+1}\|_0$  following an edge update are bounded, while ensuring that the density $\rho_i$ is close to the optimal density of $G_i$. 
Suppose that we are given an upper bound $\rho_{\max}$ on the minimal maximum outdegree for every $G_i$ . We relax the constraint $\rho \geq \sum_{v\in V} \alpha_{u,v}$ to:
\begin{equation}
\label{eq:roughconstraint}
 (1+\varepsilon)\cdot \rho_{\max} \geq \sum_{v\in V} \alpha_{u,v} 
\end{equation}
 Clearly, if the $i$th edge  update deletes an edge $\{u,v\}$, then the fractional solution $D_{i+1}$ (obtained by changing only $\alpha_{u,v}=\alpha_{v,u} \gets 0$) is still feasible. However, if an edge $\{u,v\}$ is inserted, for any choice of $\alpha_{u,v}+\alpha_{v,u}\geq 1$, either $fd^+(u)$ or $fd^+(v)$ may be greater than $(1+\varepsilon)\rho_{\max}$. To illustrate the subsequent ideas, we assume that the fractional orientation is integral (that is $\alpha_{a,b}$ is either $1$ or $0$ for any edge $\{a,b\}$) 
 For any node $u$, let $V_j(u)$ be the set of nodes in the directed graph given by the orientation 
 at distance $j$ from $u$. Let $k$ be the smallest value such that there exists a $w\in V_k(u)$ with $fd^+(w) < (1+\varepsilon)\rho_{\max} -1$. Suppose we re-orient all edges on the path from $u$ to $w$. The outdegree of every interior node along this path does not change. $fd^+(u)$ decreases by $1$ and $fd^+(w)$ increases by $1$, so we once again have a feasible solution. Since $\rho$ is assumed to be bounded by $\rho_{\max}$, this implies that the size of the sets $V_k$ grow exponentially, i.e. $|V_k|>(1+\varepsilon)^k$. Thus, the length of the path from $u$ to $w$ is at most $O(\varepsilon^{-1}\log n)$. A similar argument can be made in the case of fractional orientations. Essentially, the main idea is to fix the fractional values of $\alpha_{u,v}$ to be $\{0,\frac{1}{b}, \frac{2}{b},\frac{3}{b},\ldots ,1\}$ for some positive integer $b\in \polylog(n)$ and simulate a fractional orientation by duplicating an edge $b$ times, rescaling $\rho_{\max}$ by $b$ and inserting each duplication. Effectively, this shows that there always exist an update to the dual variables with bounded recourse, while retaining a nearly optimal out-orientation (and thus density estimation). What remains to be shown is that such an update can be computed quickly.

\paragraph{From Small Recourse to Small Update Time.}
Starting with an arbitrary feasible dual, or even an optimal, but otherwise arbitrary dual, it is still difficult to find $w$ without scanning the entire graph. Thus, a natural invariant is to impose the condition $d^+(u) \leq d^+(v) + c$ for some appropriately chosen value of $c$. Kopelowitz, Krauthgamer, Porat and Solomon \cite{KopelowitzKPS13} use $c=O(1)$. Unfortunately, this does not bound the recourse as our approach will. Thus, this approach immediately leads to an update time of at least $O(\rho)$. Sawlani and Wang \cite{sawlani2020near} avoid this by imposing the condition:
\begin{equation}
    \label{eq:additivelarge}
    d^+(u) \leq d^+(v) + \epsilon \rho_{\max},
\end{equation}
for any two neighboring nodes $u$ and $v$, instead of Equation \ref{eq:roughconstraint}. This ensures that the outdegree between neighbors cannot be too different. Indeed, whenever we process an edge, re-imposing this condition triggers a sequence of outdegree modifications that balance the outdegrees automatically throughout the graph. Specifically, if after insertion of $\{u,v\}$: $d^+(u)>(1+\varepsilon)\cdot \rho_{\max}$, the key structural implication of Equation \ref{eq:additivelarge} is that there exists an out-neighbor $w$ of $u$ with $d^+(w) \leq \rho_{\max}-1$. It is instructive to note that a similar condition cannot be enforced when using \ref{eq:roughconstraint}. Of course, merely flipping the orientation of $\{u,w\}$ might lead to new violations of Equation \ref{eq:additivelarge}, but the argument of always being able to find a low-outdegree neighbor can be applied inductively, ensuring that the overall number of edge re-orientations remains $O(\varepsilon^{-1}\log n)$.
The more straightforward update time requires scanning all out-neighbors for every node along the path of edge re-orienations, resulting in an update time of $O(\rho_{\max} \cdot \varepsilon^{-1} \log n)$ (see for example Kopelowitz, Krauthgamer, Porat and Solomon \cite{KopelowitzKPS13}).
If $\rho_{\max}$ is in $O(\log n)$, we retain an acceptable update time. 
If $\rho_{\max}$ is large, Sawlani and Wang \cite{sawlani2020near} show how to replace the dependency on $\rho_{\max}$ to $\poly(\varepsilon^{-1},\log n)$ by showing that (i) accurate estimates for the out-degree are sufficient for the analysis to go through and (ii) that the estimates can be maintained by lazily informing the out-neighbors in a round robin fashion, see also \cite{BernsteinS15,GSSU22} for similar ideas to inform neighbors of approximate degrees.

\paragraph{From Density to Maintaining Out-Orientations}
Unfortunately, using Equation \ref{eq:additivelarge} leads to additional problems. If $\rho$ is substantially smaller than $\rho_{\max}$, the approximation factor becomes poor. Secondly (and arguably more importantly) if $\rho$ is substantially larger than $\rho_{\max}$, edges can no longer be inserted such that the constraints can be satisfied. 
Sawlani and Wang \cite{sawlani2020near} handle this by selecting multiple estimates of $\rho_{\max}$ and attempting to find feasible solutions for such estimate. If for some choice of $\rho_{\max}$, Equation \ref{eq:additivelarge} cannot be satisfied any longer, subsequent edge insertions are postponed and only processed once $\rho$ decreases again. This leads to additional overhead in the running time. But most importantly, it also means that when switching between different estimates of $\rho$ at the $i$th edge update, $\|\alpha_i-\alpha_{i+1}\|_0$ is no longer bounded. Thus, while we are able to maintain an estimate for $\rho$ in this way, maintaining an edge orientation is not possible.
Thus the key new insight of our work is to impose the condition:
\begin{equation}
    \label{eq:inv1}
    d^+(u) \leq d^+(v) (1+ \varepsilon) + \theta
\end{equation}
for any constant $\theta \geq 0$.
With this condition we are able to maintain a single dual solution that has bounded recourse; even when $\rho$ changes.
This naturally avoids coordinating between different estimates of $\rho$.
As a consequence, we maintain a (nearly optimal) fractional edge orientation in worst-case deterministic update time.
To round the fractional orientation to an integral $2$-approximate one, we simply direct an edge $\{u,v\}$ towards $v$ if $\alpha_{u,v} \geq 0.5$ \footnote{If $\alpha_{u,v}=0.5$, this would imply that the edge is directed towards both $u$ and $v$. This does not affect the bound on the maximum out-degree. In case a true orientation is desired, one can break ties arbitrarily.}.
Better approximation factors can be obtained by combining our techniques with recent work of \cite{christiansenICALP}, see Section \ref{sec:onepluseps}.

\paragraph{Further Trade-Offs and Applications}
Maintaining a nearly optimal edge orientation is interesting in its own right. Nevertheless, we obtain further results. First, by leveraging the duality between the densest subgraph problem and edge orientations, we are able to maintain a $(1+\varepsilon)$-approximate densest subgraph, improving on previous work by Sawlani and Wang \cite{sawlani2020near}.

Second, out-orientations are frequently used as a data structure subroutine in other algorithms. 
For example, maintaining an $\beta$ out-degree orientation with worst-case update time $T$ immediately gives rise to an $O(\beta + T)$ update time for maintaining a maximal matching. By tweaking the parameters of our algorithms and in particular with a careful choice of the invariant in Eq. \ref{eq:inv1}, we can obtain different trade-offs between update time $T$ and outdegree. Specifically, we decrease $T$ to $O(\log n \log \alpha)$, while increasing the maximum out-degree to $O(\alpha + \log n)$. 
For maximal matching, since our update time $T$ is at least $O(\log n)$, the weaker guarantee on the out-degree is preferable. 


\subsection{Parameterisation of the Algorithm}
We now introduce several components of the algorithm and analysis that can be specified to obtain various trade-offs between quality of the out-orientation and update time.
Our algorithms have the three main parameters: $\eta,\gamma>0$ and a positive integer $b$. 
Recall that in Section~\ref{sec:tech} we explained that we can maintain a fractional orientation. 
The role of $b$ corresponds to the number of times we duplicate every edge such that the orientation that we maintain is integral at all times.
Specifically, we maintain a graph $G^b$ where every edge is duplicated $b$ times, and an out-orientation $\overrightarrow{G}^b$ over this graph. 
The orientation $\overrightarrow{G}$ is subsequently obtained by rounding the corresponding fractional orientation.
The role of $\eta$ is inherent in the definition of the following invariants:
\begin{invariant}
\label{inv:degrees}
At all times, we maintain an orientation $\overrightarrow{G}^b$ where for every directed edge $\overrightarrow{uv}$ in $\overrightarrow{G}^b$: 
\[
d^+(u) \leq (1 + \eta \cdot b^{-1}) \cdot d^+(v). 
\]
\end{invariant}

\noindent
For $\eta$ small enough, we note that the Invariant~\ref{inv:degrees} cannot be satisfied in general\footnote{As a simple example, a star will require some node to have an outdegree of $0$.}, unless the number of duplicate edges $b$ is large enough. However, the number of duplicate edges required reflects poorly upon our running time. 
An alternative way of guaranteeing satisfiability is the following:
\begin{invariant}
\label{inv:degrees_additive}
At all times, we maintain an orientation $\overrightarrow{G}^b$ where for every directed edge $\overrightarrow{uv}$ in $\overrightarrow{G}^b$: 
\[
d^+(u) \leq (1 + \eta \cdot b^{-1}) \cdot d^+(v) +2 . 
\]
\end{invariant}

\noindent
Throughout the paper, we denote $\theta = 0$ if we are maintaining Invariant~\ref{inv:degrees} and $\theta = 1$ otherwise.
This way, we can write that we maintain: $d^+(u) \leq (1 + \eta \cdot b^{-1}) \cdot d^+(v) + 2\theta$ instead.
The tighter the inequalities are, the closer the maximum out-degree of the maintained out-orientation is to the arboricity.
Hence, setting $\theta = 0$ will give a better approximation than $\theta = 1$.

\newpage
\section{A Structural Theorem}\label{sec:struc}

In this section, we formally establish the relationship between maintaining Invariant~\ref{inv:degrees} or \ref{inv:degrees_additive}, and estimating density and arboricity of the graph.

\begin{theorem}
\label{thm:structural}
Let $G$ be a graph and let $G^b$ be $G$ with each edge duplicated $b$ times. Let $\rho_b$ be the maximum subgraph density of $G^b$.
Let  $\overrightarrow{G}^b$ be any orientation of $G^b$ which has the following invariant: for every directed edge $\overrightarrow{uv}$ it must be that $d^+(u) \leq (1 + \eta \cdot b^{-1}) \cdot d^+(v)+c$ for some constant $c$. 
Then for any constant $\gamma > 0$ 
there exists a value $k_{\max} \leq \log_{1 + \gamma} n$ for which:
\[
(1+\eta\cdot b^{-1})^{-k_{\max}}\Delta(\overrightarrow{G}^b) \leq (1+\gamma)\rho_b +c(\eta^{-1}\cdot b+1).
\]
\end{theorem}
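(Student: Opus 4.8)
The plan is to exploit the invariant $d^+(u) \le (1+\eta b^{-1})d^+(v)+c$ to show that, starting from a vertex $u^\ast$ achieving the maximum out-degree $\Delta(\overrightarrow{G}^b)$, the out-degrees cannot decay too fast as we move along directed paths, so that the out-neighborhoods within BFS-distance $k$ from $u^\ast$ grow geometrically. First I would set $\beta := 1+\eta b^{-1}$, fix $u^\ast$ with $d^+(u^\ast)=\Delta(\overrightarrow{G}^b)$, and for $j\ge 0$ let $V_j$ be the set of vertices reachable from $u^\ast$ by a directed path of length exactly $j$ in $\overrightarrow{G}^b$ (using only out-edges). The invariant, applied along any such path, gives a lower bound on $d^+(w)$ for $w \in V_j$: unrolling the recursion $d^+(u) \le \beta d^+(v)+c$ yields $d^+(w) \ge \beta^{-j}\Delta(\overrightarrow{G}^b) - c(\beta^{-1}+\beta^{-2}+\cdots+\beta^{-j}) \ge \beta^{-j}\Delta(\overrightarrow{G}^b) - c\cdot\frac{\beta^{-1}}{1-\beta^{-1}} = \beta^{-j}\Delta(\overrightarrow{G}^b) - c(\eta^{-1}b)$, using $\tfrac{\beta^{-1}}{1-\beta^{-1}} = \tfrac{1}{\beta-1} = \eta^{-1}b$.

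Next I would run the following dichotomy. Suppose for contradiction that for \emph{every} $k \le k_{\max} := \lceil\log_{1+\gamma} n\rceil$ we have $\beta^{-k}\Delta(\overrightarrow{G}^b) > (1+\gamma)\rho_b + c(\eta^{-1}b+1)$. In particular, for each such $k$, every vertex $w\in V_k$ satisfies $d^+(w) > (1+\gamma)\rho_b + c(\eta^{-1}b+1) - c\eta^{-1}b = (1+\gamma)\rho_b + c > \rho_b$. Now consider the subgraph $H$ induced by $V_0\cup V_1\cup\cdots\cup V_{k_{\max}}$ (or, more carefully, by the edges on these directed BFS layers). Every vertex in the first $k_{\max}$ layers has large out-degree \emph{within} the reachable set, which forces the edge count of $H$ to exceed $\rho_b\cdot|V(H)|$ — contradicting that $\rho_b$ is the maximum subgraph density of $G^b$, \emph{provided} the layers keep growing so that we actually reach layer $k_{\max}$ with vertices still present. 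The growth itself comes from the same out-degree bound: if $|V_0\cup\cdots\cup V_k| \le n$ and each vertex in layer $k$ has out-degree exceeding $\rho_b \ge$ (some quantity $\ge 1+\gamma$ after absorbing constants), a counting/averaging argument shows $|V_0\cup\cdots\cup V_{k+1}| \ge (1+\gamma)|V_0\cup\cdots\cup V_k|$ as long as the density bound has not yet been violated; after $k_{\max}$ steps this would force more than $n$ vertices, a contradiction. Hence some $k\le k_{\max}$ must fail the assumed inequality, which is exactly the claim.

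The main obstacle I anticipate is making the ``geometric growth of the reachable set'' argument airtight while simultaneously deriving the density contradiction — in particular, being careful that the subgraph whose density we bound is the one \emph{induced} by the reachable vertices (so all counted out-edges stay inside it), handling the boundary layer $V_{k_{\max}}$ whose vertices need not have large out-degree, and correctly tracking where the additive slack $c(\eta^{-1}b+1)$ versus $c(\eta^{-1}b)$ comes from (the ``$+1$'' should be the margin that converts ``$d^+(w) > \rho_b$'' into ``$d^+(w) \ge \rho_b + c$'', or alternatively absorbs the rounding in $k_{\max}$). A secondary subtlety is that $V_j$ should be defined via directed reachability, and an edge could a priori connect non-consecutive layers or go backwards; I would argue that every out-edge from a layer-$j$ vertex lands in $V_{\le j+1}$, so the induced subgraph on $V_{\le k_{\max}}$ captures enough out-edges, and that suffices for the density bound. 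Once these bookkeeping points are settled, the computation reduces to the geometric-sum identity already noted, and setting $\gamma$ to control $\log_{1+\gamma}n$ gives the stated bound on $k_{\max}$.
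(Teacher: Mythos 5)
Your argument is correct, but it takes a genuinely different route from the paper's. The paper does not use BFS layers from the maximum-out-degree vertex: it defines degree-threshold sets $T_i = \bigl\{v : d^+(v) \geq (1+\eta b^{-1})^{-i}\Delta(\overrightarrow{G}^b) - c\sum_{j=1}^{i}(1+\eta b^{-1})^{-j}\bigr\}$, takes $k$ to be the first index with $|T_{k+1}| < (1+\gamma)|T_k|$ (so $k \leq \log_{1+\gamma}n$, since the $T_i$ are nested, start nonempty, and have size at most $n$), shows via the invariant that every out-edge of a vertex of $T_k$ lands inside $T_{k+1}$, and reads off $\rho_b \geq |E[T_{k+1}]|/|T_{k+1}| \geq \sum_{u\in T_k}d^+(u)/((1+\gamma)|T_k|)$ directly. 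Your reachability balls $U_k = V_0\cup\dots\cup V_k$ are subsets of the corresponding $T_k$ (the same unrolled recursion gives the same degree lower bound), and your proof is essentially the contrapositive of the paper's: instead of locating the first index where growth stalls and exhibiting the dense set there, you assume the conclusion fails everywhere and overflow $n$. Both yield the stated constant; yours in fact gives the marginally stronger $c\,\eta^{-1}b$ in place of $c(\eta^{-1}b+1)$, since your geometric sum starts at $j=1$. One place to tighten your write-up: the claim that $|E[H]|$ for $H$ induced on $U_{k_{\max}}$ outright exceeds $\rho_b|V(H)|$ is not correct as stated, and the dichotomy you then fall back on is unnecessary --- since $|E[U_{k+1}]|\leq\rho_b|U_{k+1}|$ holds for every vertex set by the definition of maximum subgraph density, combining it with $|E[U_{k+1}]|\geq\sum_{w\in U_k}d^+(w) > (1+\gamma)\rho_b|U_k|$ forces $|U_{k+1}|>(1+\gamma)|U_k|$ unconditionally, and iterating past $\log_{1+\gamma}n$ steps is the whole contradiction. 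The paper's threshold-set formulation buys something downstream that a reachability-based set does not: $T_k$ is exactly the vertex set that Lemma~\ref{lemma:SDE} later computes (by counting vertices above a degree threshold) and outputs as the approximately densest subgraph.
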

\begin{proof}
We define for non-negative integers $i$ the sets: 
\[
T_i := \left\{ v \Bigm| d^+(v) \geq \Delta(\overrightarrow{G}^b) \cdot \left(1 + \eta\cdot b^{-1}\right)^{-i} - c\sum_{j = 1}^i \left(1 + \eta\cdot b^{-1}\right)^{-j} \right\}
\]

\noindent
Observe that for all non-negative integers $i < j$, $T_i \subseteq T_j$.
Moreover, observe that $T_1$ contains at least one element (the element of $\overrightarrow{G}^b$ with maximal out-degree), and each $T_i$ at most $n$ elements (since they can contain at most all vertices of $G$). 
Let $k$ be the smallest integer such that $|T_{k+1}| < (1 + \gamma) |T_k|$.
It follows that $k$ is upper bounded by the value $k_{max} = \log_{(1 + \gamma)} n$.

In order to bound the maximum out-degree of $\overrightarrow{G}^b$, we want to show that no edges can be oriented from $T_{k}$ to a vertex not in $T_{k+1}$. To do so, we assume two such candidates $u \in T_k$ and $v \not \in T_{k+1}$, and show that $\overrightarrow{uv}$ violates: $d^+(u) \leq (1 + \eta \cdot b^{-1}) \cdot d^+(v)+c$. 
Per assumption we have:
\[
d^+(v) < (1 + \eta\cdot b^{-1})^{-1} (1 + \eta\cdot b^{-1})^{-k} \Delta(\overrightarrow{G}^b) -  c\sum_{j = 1}^{k+1} (1 + \eta\cdot b^{-1})^{-j} 
\]
and
\[
d^+(u) \geq (1 + \eta\cdot b^{-1})^{-k} \Delta(\overrightarrow{G}^b)  - c\sum_{j = 1}^{k} (1 + \eta\cdot b^{-1})^{-j}
\]
It follows that
\begin{align*}
    (1 + \eta\cdot b^{-1}) d^+(v) + c &< (1 + \eta\cdot b^{-1})^{-k} \Delta(\overrightarrow{G}^b) - c\sum_{j = 0}^{k} (1 + \eta\cdot b^{-1})^{-j} + c \\
    &\leq (1 + \eta\cdot b^{-1})^{-k} \Delta(\overrightarrow{G}^b)  - c\sum_{j = 1}^{k} (1 + \eta\cdot b^{-1})^{-j} \\
    &\leq d^+(u)
\end{align*}
This would violate the assumed invariant of $\overrightarrow{G}^b$.
Hence for any $u \in T_k$ and any edge $\overrightarrow{uv}$, we have $v \in T_{k+1}$ and thus: 
$
\sum_{u \in T_k} d^+(u) \leq |E[T_{k+1}]|.
$
Finally, we can bound the density as:
\[
\rho_b \geq \rho(T_{k+1}) = \frac{|E[T_{k+1}]|}{|T_{k+1}|} \geq \frac{\sum_{u \in T_k} d^+(u) }{(1+\gamma) |T_k| } \geq \frac{|T_k| \cdot \left( \left(1 + \eta\cdot b^{-1}\right)^{-k} \Delta\left(\overrightarrow{G}^b\right) - c\sum_{j = 1}^k \left(1 + \eta\cdot b^{-1}\right)^{-j} \right) }{(1+\gamma)|T_k|} 
\]
As a result we find:
\[
\left(1+\gamma\right)\rho_b +\frac{c}{1 - \frac{1}{1 + \eta\cdot b^{-1}}}  \geq  \left(1 + \eta\cdot{} b^{-1}\right)^{-k_{\max}} \Delta\left(\overrightarrow{G}^b\right) 
\]
which concludes the proof.\end{proof}

\newpage



\newpage

\section{A Simple Algorithm for Maintaining the Invariants}
\label{sec:basic}
We show how to maintain Invariant~\ref{inv:degrees} or \ref{inv:degrees_additive} for the graph $\overrightarrow{G}^b$ in two steps: here, we give a very simple algorithm with update time $O(\rho \log \rho \cdot \polylog(n))$ (where $\rho$ is the maximal subgraph density). The subsequent section then improves the running time.
We present an algorithm to dynamically maintain either Invariant~\ref{inv:degrees} or~\ref{inv:degrees_additive} (i.e. we maintain one chosen invariant through the same algorithm). 
Let $G^b$ be the graph $G$ with edges duplicated $b$ times (where we want to maintain the invariant). 
For every vertex $u$ we store: 
\begin{enumerate}[(a), noitemsep, nolistsep]
    \item The value $d^+(u)$ of the current orientation $\overrightarrow{G}^b$,
    \item The set $N^+(u)$ in arbitrary order, and
    \item The set $N^-(u)$ in a doubly linked list of non-empty buckets $B_j$ sorted on $j$.
    Each $B_j$ contains all $w \in N^-(u)$ with $d^+(w) = j$ as a doubly linked list in arbitrary order.
\end{enumerate}

\noindent
Finally, we distinguish between an insertion and deletion (which adds or removes a directed edge; possibly altering some $d^+(u)$ and recursing) and the \textbf{add} and \textbf{remove} operations (which adds or removes a directed edge to offset a recent insertion or deletion).
In our recursive algorithm we assume that for any edge insertion $(u, v)$ in $G^b$, we call $\textnormal{Insert}(\overrightarrow{uv})$ whenever $d^+(u) \leq d^+(v)$. 
Recall that $\theta$ indicates we maintain Invariant~\ref{inv:degrees}
or \ref{inv:degrees_additive} and that $\eta$ and $b$ are parameters set beforehand.

\renewcommand\algorithmicthen{}

\noindent\begin{minipage}[t]{.5 \textwidth}
\null 
 \begin{algorithm}[H]
    \caption{Insertion}
    \label{alg:insertion}
    \begin{algorithmic}
      \STATE $x \gets arg \, \min \{ d^+(w) \mid w \in N^+(u) \}$ 
      \IF{ 
      $
        d^+(u) + 1 > (1 + \eta b^{-1}) \cdot d^+(x) + 2 \theta
      $
     }
      \STATE Remove$(\overrightarrow{ux})$
      \STATE \emph{Insert($\overrightarrow{xu}$)}
      \ELSE
      \STATE $d^+(u) \gets d^+(u) + 1$
      \FORALL{$w \in N^+(u)$}
      \STATE Update $d^+(u)$ in LinkedList($N^-(w)$) 
      \ENDFOR
      \ENDIF
    \end{algorithmic}
  \end{algorithm}
\end{minipage}~%
\begin{minipage}[t]{.5\textwidth}
\null
 \begin{algorithm}[H]
    \caption{Deletion}
    \label{alg:deletion}
    \begin{algorithmic}
      \STATE $x \gets \textnormal{First( First( LinkedList of $ N^-(u)$))} $
      \IF{
       $d^+(x) > (1 + \eta b^{-1}) \cdot ( d^+(u) - 1)  + 2 \theta
      $
}
            \STATE Add$(\overrightarrow{ux})$
            \STATE \emph{Delete($\overrightarrow{xu}$})
      \ELSE
      \STATE $d^+(u) \gets d^+(u) - 1$
      \FORALL{$w \in N^+(u)$}
      \STATE Update $d^+(u)$ in LinkedList($N^-(w)$) 
      \ENDFOR
      \ENDIF
    \end{algorithmic}
  \end{algorithm}
\end{minipage}
    
\subsection{Maintaining Invariant~\ref{inv:degrees_additive}.}
We show that we can dynamically maintain Invariant~\ref{inv:degrees_additive} through the following theorem:
    
\begin{theorem}
\label{thm:invariant2}
Let $G$ be a graph subject to edge insertions and deletions.
Choose $c = 2$, $b = 1$,  $\eta = \frac{1}{2 \log_e n}$, and $\gamma > 0$ constant such that $\eta b^{-1} < \frac{1}{\log n \max \{  \gamma^{-1}, 1\}}$.
We can choose $\gamma$ to maintain an out-orientation $\overrightarrow{G}^b = \overrightarrow{G}$ in $O\left( (\rho + \log n) \cdot   \log n \cdot \log \rho \right)$ time per operation in $G$ such that  Invariant~\ref{inv:degrees_additive} holds for $\overrightarrow{G}$ and:
\begin{itemize}[noitemsep, nolistsep]
 \item $\forall u$, the out-degree $d^+(u)$ in $\overrightarrow{G}$ is at most $O( \rho + \log n)$, \quad \quad \small{(i.e. $\Delta( \overrightarrow{G}) \in O( \rho + \log n)$)}
\end{itemize}
where $\rho$ is the density of $G$ at the time of the update.
\end{theorem}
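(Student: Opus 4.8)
The plan is to prove two things: correctness (the algorithm always restores Invariant~\ref{inv:degrees_additive} and terminates), and the claimed running-time and out-degree bounds. Since $b=1$, the graph $G^b=G$ and the orientation $\overrightarrow{G}^b = \overrightarrow{G}$ is integral, so there is no rounding step to worry about. First I would argue correctness of a single insertion call. When $\mathrm{Insert}(\overrightarrow{uv})$ is called (with $d^+(u)\le d^+(v)$ at call time), the algorithm picks the out-neighbor $x$ of minimum out-degree. In the ``else'' branch it simply increments $d^+(u)$; I must check this does not break the invariant on any edge $\overrightarrow{uw}$ with $w\in N^+(u)$: the guard $d^+(u)+1 \le (1+\eta b^{-1}) d^+(x) + 2\theta \le (1+\eta b^{-1}) d^+(w)+2\theta$ gives exactly this (using minimality of $x$), and edges into $u$ are only helped by increasing $d^+(u)$. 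In the ``if'' branch we flip $\overrightarrow{ux}$ to $\overrightarrow{xu}$ and recurse with $\mathrm{Insert}(\overrightarrow{xu})$; here I need that $d^+(x)\le d^+(u)$ still holds after the flip so the recursive precondition is met — this follows since $x$ was chosen with $d^+(x)\le d^+(u)$ and the flip decreases nothing relevant. The symmetric argument handles deletions: pick the in-neighbor $x$ of maximum out-degree (the first bucket's first element), and either decrement $d^+(u)$ safely or flip $\overrightarrow{xu}\to\overrightarrow{ux}$ and recurse on the deletion.

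The heart of the argument is bounding the recursion depth, and this is where Theorem~\ref{thm:structural} (the structural theorem) does the work. The key observation is that each recursive step moves "up" the out-degree gradient: in an insertion chain $u=u_0, u_1=x, u_2, \dots$, each $u_{i+1}$ satisfies $d^+(u_i)+1 > (1+\eta b^{-1}) d^+(u_{i+1}) + 2\theta$, i.e. the out-degree strictly decreases along the chain in a multiplicative-plus-additive fashion, so the chain can only be as long as the number of distinct "levels" between $\Delta(\overrightarrow{G})$ and the minimum out-degree reachable. Invoking Theorem~\ref{thm:structural} with $c=2$: it gives $k_{\max}\le \log_{1+\gamma} n$ with $(1+\eta b^{-1})^{-k_{\max}}\Delta(\overrightarrow{G}) \le (1+\gamma)\rho + 2(\eta^{-1} b + 1)$. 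With the chosen $\eta = \tfrac{1}{2\log_e n}$ and $b=1$, we have $\eta^{-1}b = 2\log_e n$, and the condition $\eta b^{-1} < \tfrac{1}{\log n\,\max\{\gamma^{-1},1\}}$ ensures $(1+\eta b^{-1})^{k_{\max}}$ stays $O(1)$ (since $(1+x)^{1/x'} $ is bounded when $x \le x'/\log n$ and there are only $\log_{1+\gamma} n$ factors each close to $1$). Hence $\Delta(\overrightarrow{G}) = O(\rho + \log n)$, proving the out-degree bound. For the recursion depth: if the chain were longer than roughly $k_{\max}\cdot(\text{something})$ steps, we would reach a vertex with out-degree below the minimum guaranteed by the structural theorem's telescoping, contradicting that $\Delta(\overrightarrow{G})$ is attained; more precisely each step strictly reduces the out-degree by at least $1$ after accounting for the multiplicative slack, and the out-degrees live in $\{0,1,\dots,O(\rho+\log n)\}$, but a cruder argument suffices — the chain of $u_i$'s with geometrically-decreasing $d^+$ has length $O(\log_{1+\eta b^{-1}}(\rho+\log n)) = O(\eta^{-1}\log(\rho+\log n)) = O(\log n \log \rho)$. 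That is the recourse bound, a $\log n$ factor below the running time as advertised in the introduction.

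Finally, the per-step cost. Each recursive call does: one $\arg\min$ over $N^+(u)$ (insertion) or one lookup of the first element of the first bucket of $N^-(u)$ (deletion), then updates $d^+(u)$ and pushes this new value into the bucketed list $N^-(w)$ for each $w\in N^+(u)$. The bucket structure (doubly linked list of nonempty buckets) supports moving an element between adjacent buckets in $O(1)$; when $d^+(u)$ changes by $1$ we move $u$ from bucket $B_j$ to $B_{j\pm1}$ in each $N^-(w)$, creating or deleting a bucket in $O(1)$ amortized — or worst-case with a standard pointer trick, since consecutive buckets differ by $1$. So one recursive step costs $O(|N^+(u)|) = O(\Delta(\overrightarrow{G})) = O(\rho+\log n)$, plus the $\arg\min$ which is also $O(|N^+(u)|)$ if done by scanning, or $O(1)$ if $N^+(u)$ is itself kept bucketed — scanning suffices for the stated bound. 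Multiplying the per-step cost $O(\rho+\log n)$ by the recursion depth $O(\log n\log\rho)$ gives the claimed $O((\rho+\log n)\cdot\log n\cdot\log\rho)$ time per operation.

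The main obstacle I anticipate is the bookkeeping in the structural-theorem invocation: one must check that the algorithm's guards (with the $+1$ from the inserted edge, the $2\theta$ additive term, and the asymmetry between who calls $\mathrm{Insert}$) line up exactly with the hypothesis "$d^+(u)\le(1+\eta b^{-1})d^+(v)+c$ for $c=2$", both before and after each flip, and in particular that the invariant is never transiently violated in a way the potential/telescoping argument cannot absorb. The arithmetic with $\eta = 1/(2\log_e n)$ and the constraint on $\gamma$ is routine once that alignment is nailed down, so I would spend most of the care on the case analysis for a single flip and on confirming the recursive precondition $d^+(x)\le d^+(u)$ (resp. $d^+(x)\ge d^+(u)$ for deletions) is preserved.
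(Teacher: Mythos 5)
Your proposal follows essentially the same route as the paper's proof: invoke Theorem~\ref{thm:structural} with $c=2$ to get $\Delta(\overrightarrow{G})\in O(\rho+\log n)$, verify that one of the two orientations of a new edge is always admissible and that the greedy flip chain visits strictly decreasing out-degrees, bound the chain length via the multiplicative gap, and charge $O(\rho+\log n)$ per step for the $\arg\min$ scan and bucket updates. The one imprecision is your final equality $O(\eta^{-1}\log(\rho+\log n))=O(\log n\log\rho)$, which fails when $\rho\ll\log n$ (e.g.\ constant $\rho$ gives $\log\log n$ versus $O(1)$); the paper fixes this with a case distinction, using the ``out-degree drops by at least $1$ per step'' argument you already mention to get $T=O(\log n)$ when $\rho<\log n$, and the geometric argument only when $\rho\ge\log n$.
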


\begin{proof}
Invariant~\ref{inv:degrees_additive} demands that $\forall \overrightarrow{uv}$ we maintain $d^+(u) < (1 + \eta b^{-1}) d^+(v) + 2$.
First, we prove that the invariant gives us the desired upper bound on the out-degree of each vertex. 
Then, we show that we can maintain the invariant in three steps where we prove: 

\begin{description}[noitemsep, nolistsep]
    \item[Correctness:]  for all $G^b$, there exists an orientation $\overrightarrow{G}^b$ such that Invariant~\ref{inv:degrees_additive} holds,
        \item[Recursions:] the algorithm recurses at most $\left( \eta b^{-1} \right)^{-1} \cdot \log \left( \rho \right)$ times to realize $\overrightarrow{G}^b$, and 
    \item[Time:]  the algorithm spends $O(\rho + \log n)$ time before entering the recursion.
\end{description}
\noindent
We prove these three properties for insertions only. The proof for deletions is symmetrical.

\subparagraph{Invariant~\ref{inv:degrees_additive} implications.}
By choice of $\gamma$, we have $k_{\max} \leq \frac{ \log_e n}{  \log_e (1+\gamma)}$ (proof of Theorem~\ref{thm:structural}):

\noindent $
    (1+\eta\cdot b^{-1} )^{-k_{\max}} = \exp(-\log_e(1+\eta b^{-1} )\cdot k_{\max}) \geq  \exp( - ( \eta b^{-1} ) \cdot k_{\max}) \Rightarrow \textnormal{ \tiny  (Using $ x \geq \log(1 + x)$)}$

\noindent $
(1+\eta\cdot b^{-1} )^{-k_{\max}}   \geq  \exp( - ( 2 \log_e n )^{-1}  \cdot \frac{\log_e n}{\log_e (1 + \gamma)}) =  \exp( - \frac{2 }{\log_e (1 + \gamma)}) \geq \Omega(1) \textnormal{ \tiny (noting that $\gamma$ constant)}  $

\noindent
We apply Theorem~\ref{thm:structural} (setting $c = 2$) to obtain:
$(1+\eta\cdot b^{-1})^{-k_{\max}}\Delta(\overrightarrow{G}) \leq (1+\gamma)\rho +c(\eta^{-1}\cdot b+1)$.
This, together with our inequality, implies that  $\Delta(\overrightarrow{G})\in O(\rho+\log n)$.

\subparagraph{Correctness.}
For any $\overrightarrow{G}^b$, for any edge $(u, v)$, we may always orient the edge from either $u$ to $v$ or $v$ to $u$. 
Indeed, suppose that we cannot insert $\overrightarrow{vu}$ (i.e.  that $d^+(v) + 1 > (1 + \eta b^{-1}) d^+(u) + 2$). Then since $(1 + \eta b^{-1}) > 1$ 
we may insert $\overrightarrow{uv}$ without violating Invariant~\ref{inv:degrees_additive} between $u$ and $v$.
This implies that for every graph $G$, there exists a graph $\overrightarrow{G}^b$ for which Invariant~\ref{inv:degrees_additive} holds. 

What remains is to show that we can realize this graph.
For any graph $G$, we claim that we can create a directed graph $\overrightarrow{G}$ which does not violate Invariant~\ref{inv:degrees_additive} by inserting (and orienting) all the edges of $G$ one by one, and greedily flipping edges which violate the invariant. 
To prove the correctness of our approach, we prove that greedily flipping violating edges after an insertion always terminates.
First note that whenever our greedy algorithm flips an edge $\overrightarrow{xy}$, we decrease $d^+(x)$ to restore it to the value that it was before the insertion of $\overrightarrow{uv}$.
Hence afterwards there exists no edge $\overrightarrow{zx}$ or $\overrightarrow{xz}$ which can violate Invariant~\ref{inv:degrees_additive}. 
Flipping $\overrightarrow{xy}$ increases $d^+(y)$ by one, and thus afterwards we may violate Invariant~\ref{inv:degrees_additive} only for edges $\overrightarrow{yz}$.
We greedily select one such edge $\overrightarrow{yz}$ to flip, and recurse. 
Whenever this occurs, it must be that before inserting $\overrightarrow{uv}$: $d^+(y) > (1 + \eta b^{-1}) d^+(z) + 1$ and thus $d^+(y) > d^+(z) + 1$. 
This implies that our algorithm always greedily flips a `directed chain' of edges which never visits a vertex twice (thus, our algorithm terminates). 
We upper bound its length:

\subparagraph{Recursions.}
Suppose that we insert an edge $\overrightarrow{uv}$, and that we recursively flip a directed chain of edges of length $T$ to reach a vertex $x_T$. 
We now make a case distinction:

Let $\rho < \log n$. The above analysis shows that: the out-degree $d^+(u)$ is at most $O(\log n)$ and for any two consecutive vertices $(y, z)$ on the chain: $d^+(z) < d^+(y)$. Thus, $T \in O(\log n)$.

Let $\rho \geq \log n$. The above analysis shows that: out-degree $d^+(u)$ is at most $O(\rho)$ and for any two consecutive vertices $(y, z)$ on the chain: $d^+(y) > (1 + \eta b^{-1}) d^+(z)$. 
It follows that $d^+(u) > (1 + \eta b^{-1})^T \cdot d^+(x_T)$. 
We can choose $\gamma$ such that $\eta b^{-1}$ is sufficiently small (and use the Taylor expansion of $\frac{1}{\log (1 + \eta b^{-1})}$) such that the number $T$ of recursions is upper bound by:
\[
T \leq \frac{ O( \log  \rho)  }{\log(1 + \eta b^{-1} )} \leq 2 \frac{ O( \log \rho ) }{\eta b^{-1}} = O \left(  \left( \eta b^{-1} \right)^{-1} \cdot \log \rho   \right) = O\left( \log n \log \rho \right).
\]

\subparagraph{Time spent (before recursion).}
Whenever we insert $\overrightarrow{uv}$, we increase $d^+(u)$. 
Increasing $d^+(u)$ may cause us to violate Invariant~\ref{inv:degrees_additive} for some edge $\overrightarrow{uw}$ for $w \in N^+(u)$.
Whenever this is the case, it must be the case for $x \gets \arg \min \{ d^+(w) \mid w \in N^+(u) \}$.
We obtain $x$ in $O(\rho + \log n)$ time by checking all out-edges of $u$. 
If the invariant is violated, we recurse. 
If the invariant is not violated, we update for each $w \in N^+(u)$, the  corresponding linked lists for $w$. 
In this structure, the vertex $u$ is moved from the bucket $B_{d^+(u)}$ to the bucket $B_{d^+(u) + 1}$ and thus this move can be performed in $O(1)$ time. 
It follows that for each insertion, we spend at most $O( \rho + \log n)$ time before we recurse.  

\subparagraph{Conclusion.}
Our algorithm spends $O( \rho + \log n)$ before recursion, and recurses $O( \log n \log \rho )$ times. 
The proofs for deletions are identical. In this case, the vertex $x$ can be found in $O(1)$ time instead (by selecting for the first bucket, the first element, in $O(1)$ time).
\end{proof}

\subsection{Maintaining Invariant~\ref{inv:degrees}.}
Recall that $G^b$ is the graph $G$ with its edges duplicated $b$ times (see Theorem~\ref{thm:structural}). 
By setting $b \in O(\log n)$, we will guarantee that we can maintain Invariant~\ref{inv:degrees} for $\overrightarrow{G}^b$.
We convert this into an orientation $\overrightarrow{G}$ by choosing $\overrightarrow{uv}$ whenever $\overrightarrow{G}^b$ contains more edges from $u$ to $v$.

\begin{theorem}
\label{thm:invariant1}
Let $G$ be a  graph subject to edge insertions and deletions.
Choose  $\eta > 2$, $\gamma$ constant, and $b$ be such that $\eta b^{-1} < \frac{1}{100 \log n \max \{  \gamma^{-1}, 1\}}$. 
Let $G^b$ be the graph where each edge in $G$ is duplicated $b$ times.  
Algorithms~\ref{alg:insertion} and \ref{alg:deletion}  maintain out-orientations $\overrightarrow{G}^b$ and $\overrightarrow{G}$ s.t.:
\begin{itemize}[noitemsep, nolistsep]
    \item   Invariant~\ref{inv:degrees} holds for $\overrightarrow{G}^b$, 
    \item $\forall v$, the out-degree $d^+(v)$ in $\overrightarrow{G}^b$ is at most $O( b  \cdot \rho)$, and
    \item $\forall u$, the out-degree of $u$ in $\overrightarrow{G}$ is at most $O(\rho)$.
\end{itemize}
We use 
$O\left( \rho \cdot   b^3 \cdot \log \rho \right) = O( \rho \cdot \log^3 n \log \rho)$ time per operation in the original graph $G$.
Here, $\rho$ is the density of $G$ at the time of the update.
\end{theorem}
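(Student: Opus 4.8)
The plan is to mirror the structure of the proof of Theorem~\ref{thm:invariant2}, but now working with the duplicated graph $G^b$ and the multiplicative-only Invariant~\ref{inv:degrees} (i.e. $\theta = 0$, $c = 0$). First I would verify the three promised out-degree bounds. The bound $\Delta(\overrightarrow{G}^b) \in O(b\rho)$ follows by applying Theorem~\ref{thm:structural} with $c = 0$: the additive term $c(\eta^{-1}b+1)$ vanishes, and the choice $\eta b^{-1} < \frac{1}{100\log n\max\{\gamma^{-1},1\}}$ makes $(1+\eta b^{-1})^{-k_{\max}} = \Omega(1)$ exactly as in the ``Invariant~\ref{inv:degrees_additive} implications'' paragraph, so $\Delta(\overrightarrow{G}^b) \le O((1+\gamma)\rho_b) = O(\rho_b)$; finally $\rho_b = b\rho$ since duplicating every edge $b$ times scales every subgraph density by $b$. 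The bound $\Delta(\overrightarrow{G}) \in O(\rho)$ then follows because rounding ``$\overrightarrow{uv}$ if more copies go $u\to v$'' sends at most $d^+(u)$ edges of $G$ out of $u$ in $\overrightarrow{G}$ while the multiplicity out of $u$ in $\overrightarrow{G}^b$ towards those neighbours is at least $d^+_{\overrightarrow{G}}(u) \cdot \lceil b/2\rceil$; hence $d^+_{\overrightarrow{G}}(u) \le \frac{2}{b}d^+_{\overrightarrow{G}^b}(u) = O(\rho)$.

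Next, \textbf{Correctness}: I would argue that for any $G^b$ an orientation satisfying Invariant~\ref{inv:degrees} exists and is reached by inserting the $b$ copies of each edge one at a time and greedily flipping violating edges. The key subtlety — and this is where $\eta > 2$ and $b$ large matter — is showing that a single edge insertion $\overrightarrow{uv}$ can always be oriented one way or the other without an immediate irreparable violation, and that the greedy flip chain terminates. As in Theorem~\ref{thm:invariant2}, flipping $\overrightarrow{xy}$ restores $d^+(x)$ to its pre-insertion value, so only the edge out of $y$ can newly violate; and a violation after the flip forces $d^+(y) > (1+\eta b^{-1})d^+(z) \ge d^+(z)+1$ (here I use that $d^+(y)\ge 1$ so the multiplicative gap is at least one integer step — this is the place the additive $\theta$ was doing work before, and now the gap $\eta b^{-1}\cdot d^+(y)$ must be made to exceed $1$, which is why we need $d^+(y)$ bounded below, equivalently why the chain only flips at vertices of out-degree $\ge b/\eta = \Omega(\text{something})$; one has to check the boundary case $d^+(z)$ small separately, possibly absorbing it into the $O(b\rho)$ slack). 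So the chain is strictly out-degree-decreasing and hence simple, so it terminates.

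Then \textbf{Recursions}: along a flip chain $u = x_0, x_1,\dots,x_T$ every consecutive pair satisfies $d^+(x_{i}) > (1+\eta b^{-1})d^+(x_{i+1})$, so $d^+(u) > (1+\eta b^{-1})^T d^+(x_T) \ge (1+\eta b^{-1})^T$; combined with $d^+(u) = O(b\rho)$ this gives $T \le \frac{O(\log(b\rho))}{\log(1+\eta b^{-1})} \le \frac{O(\log(b\rho))}{\eta b^{-1}/2} = O(b\log\rho)$ (using $\log n = \Theta(\log(b\rho))$ up to the regime where $\rho$ is tiny, handled as in Theorem~\ref{thm:invariant2}), and since $b = \Theta(\log n)$ this is $O(\log n\log\rho)$ flips on $G^b$. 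Finally \textbf{Time}: each flip scans $N^+$ of the current vertex to find $x = \arg\min\{d^+(w)\}$, costing $O(\Delta(\overrightarrow{G}^b)) = O(b\rho)$, plus $O(\Delta(\overrightarrow{G}^b))$ to update the buckets $B_j$ of each out-neighbour's $N^-$-list; one original-graph update triggers $b$ insertions into $G^b$, each of which recurses $O(b\log\rho)$ times, for a total of $O(b \cdot b\log\rho \cdot b\rho) = O(\rho b^3\log\rho) = O(\rho\log^3 n\log\rho)$. Deletions are symmetric, with $x$ found in $O(1)$ via the first nonempty bucket. The main obstacle I anticipate is the correctness argument at \emph{small} out-degrees: unlike in Theorem~\ref{thm:invariant2}, there is no additive $+2$ to guarantee an edge can always be placed, so one must argue carefully that whenever Invariant~\ref{inv:degrees} is unsatisfiable-looking locally the out-degrees involved are large enough (of order $b$) that $\lceil\cdot\rceil$-rounding and the $\eta > 2$ slack rescue feasibility — equivalently, that the unavoidable ``star-like'' obstructions only ever involve vertices whose out-degree is $0$ or bounded, which the $b$-fold duplication smooths out.
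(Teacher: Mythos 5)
Your proposal follows essentially the same route as the paper's proof: the same application of Theorem~\ref{thm:structural} with $c=0$ to get $\Delta(\overrightarrow{G}^b)=O(\rho_b)=O(b\rho)$, the same majority-rounding argument giving $d^+_{\overrightarrow{G}}(u)\le \tfrac{2}{b}d^+_{\overrightarrow{G}^b}(u)=O(\rho)$, the same greedy flip-chain correctness and $O(b\log\rho)$ recursion bound, and the same $O(b\cdot b\log\rho\cdot b\rho)$ time accounting. The ``obstacle'' you flag at small out-degrees is resolved in the paper exactly as you anticipate — by the claim that under Invariant~\ref{inv:degrees} every non-isolated vertex of $\overrightarrow{G}^b$ has out-degree at least $(\eta b^{-1})^{-1}=b/\eta$, which both makes the multiplicative gap exceed one integer step along the chain and tightens your recursion bound from $O(b\log(b\rho))$ to $O(b\log\rho)$ — so the only thing to add is that one-line lower bound.
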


\begin{proof}
For every edge insertion in $G$, we insert $b$ edges in $G^b$, one by one. 
We use Algorithms~\ref{alg:insertion} and \ref{alg:deletion} on $G^b$ to maintain $\overrightarrow{G}^b$.
We store for every edge $(u, v)$ in $G$, how many edges in $\overrightarrow{G}^b$ point from $u$ to $v$ (and, vice versa). 
Finally, for each edge $(u, v)$ we include the edge $\overrightarrow{uv}$ in $\overrightarrow{G}$ whenever there are more edges from $u$ to $v$ in $\overrightarrow{G}^b$. 
What remains, is to prove that these digraphs have the desired properties (analogue to the proof of Theorem~\ref{thm:invariant2}).

\subparagraph{Invariant~\ref{inv:degrees} implications.}
By choice of $\gamma$, we have $k_{\max} \leq \log_e n / \log_e (1+\gamma).$ Thus using $\log_e(1+x)\geq x/2$ whenever $x\leq 1$:
$
    (1+\eta\cdot b^{-1} )^{-k_{\max}} = \exp(-\log_e(1+\eta b^{-1} )\cdot k_{\max}) \geq  \exp( - ( \eta b^{-1} ) \cdot k_{\max}) = \Omega(1)$.
We now apply Theorem~\ref{thm:structural} to conclude that $\Omega(1) \Delta( \overrightarrow{G}^b) \leq  (1 + \gamma) \rho_b \leq O( \Delta( \overrightarrow{G}^b ) )$. 
Finally, we note that per definition of subgraph density, $\rho_b = O( b \cdot \rho)$.
For all $\overrightarrow{uv}$ in $\overrightarrow{G}$, there must be at least $\frac{1}{2} b \geq \left( \eta b^{-1} \right)^{-1}$ edges in $\overrightarrow{G}^b$ from $u$ to $v$ (else, $\overrightarrow{G}$ would include the edge $\overrightarrow{vu}$ instead). It immediately follows that the out-degree of $u$ is at most $    d^+(u) \cdot \left( \eta b^{-1} \right) \in O(b^{-1} \cdot \rho_b) = O( \rho)$.

\subparagraph{Correctness.}
For any $\overrightarrow{G}^b$ where  Invariant~\ref{inv:degrees} holds,
for any vertex $v$ incident to at least one edge, $d^+(v)$ is at least $\left( \eta b^{-1} \right)^{-1}$.
It follows that for any $(u, v)$, we may insert either $\overrightarrow{uv}$ or $\overrightarrow{vu}$ without violating Invariant~\ref{inv:degrees}.
Indeed, suppose that we cannot insert $\overrightarrow{vu}$.
Then   $d^+(v) + 1 \geq (1 + \eta b^{-1}) \cdot d^+(u) \geq d^+(u) + 1$.
Thus $d^+(u) \leq (1 + \eta b^{-1}) \cdot d^+(v)$ and we may insert $\overrightarrow{uv}$ without violating~\ref{inv:degrees} between $u$ and $v$. 
What remains is to show that for any graph $G$. we can realize $\overrightarrow{G}^b$. 
Just as in Theorem~\ref{thm:invariant2} we simply insert edges of $G$ one by one (inserting $b$ edges in $\overrightarrow{G}^b$: splitting their direction 50/50). 
Then we greedily flip a directed chain of edges in $\overrightarrow{G}^b$ which violate the invariant until we terminate. 
Consider an edge $\overrightarrow{yz}$ in this chain that must be flipped. Before before inserting $\overrightarrow{uv}$, it must be that $d^+(y) >  (1 + \eta b^{-1} ) \cdot d^+(z) - 1 \Rightarrow d^+(y) \geq (1 + \eta b^{-1}) \cdot d^+(z)$. 
Since $z$ is incident to at least one edge it must have at least $\left( \eta b^{-1} \right)^{-1}$ outgoing edges and thus $d^+(y) \geq d^+(z) + 1$.
This implies that each `directed chain' of flipped edges is finite. Suppose a chain has length $T$:

\subparagraph{Recursions.}
Let $x_T'$ be the second to last vertex of a chain.
Our analysis implies that before inserting $\overrightarrow{uv}$, $d^+(u) > (1 + \eta b^{-1})^T d^+(x_T')$.
$d^+(u)$ is at most $O( \rho \left( \eta b^{-1}\right)^{-1} )$ and $d^+(x_T')$ is at least $O(\left( \eta b^{-1}\right)^{-1} ))$. 
Now $\eta b^{-1} < 1$ implies that $T \in O(b \log \rho) = O(\log n \log \rho)$.

\subparagraph{Time and conclusion.}
Each operation in $G$ triggers $O(b)$ operations in $G^b$. We spend $O( b \rho)$ time per operation in $G^b$ before we recurse. 
We recurse $O(b \log \rho))$ times.
Thus, we spend $O( \rho \cdot  b^3  \cdot \log \rho) = O( \rho \log^3 n \log \rho)$ time per insertion in $G$. 
Deletions are near-identical.
\end{proof}

\section{Improved algorithms}
\label{app:improved}

We adapt the algorithm of Section~\ref{sec:basic} to speed up the running time.
Specifically, we take three steps:
First, instead of maintaining for each vertex $u$, for each in-neighbor $w$ of $u$ the \emph{exact} value $d^+(w)$, we maintain some \emph{perceived} value $d^+_u(w)$ (which we show is sufficiently close to $d^+(w)$). 
Second, whenever a vertex $w$ updates for its out-neighbors $u$ the perceived value $d^+_u(w)$, 
it does so in round-robin, informing the next $\lceil \frac{128}{\eta b^{-1}} \rceil$ neighbors of the new degree of $u$.

\noindent\begin{minipage}[t]{.5 \textwidth}
\null 
 \begin{algorithm}[H]
    \caption{Insertion}
    \label{alg:insertion_alphaless}
    \begin{algorithmic}
    \FOR{$x$ in next $\lceil \frac{128}{\eta b^{-1}} \rceil$ neighbors in $N^+(u)$} 
      \IF{ 
      $
      d^+(u) + 1 > (1 + \eta b^{-1} / 2) \cdot d^+(x) + \theta
      $
     }
      \STATE Remove$(\overrightarrow{ux})$
      \STATE \emph{Insert($\overrightarrow{xu}$)}
      \STATE Return
      \ELSE
      \STATE Update $d^+_x(u)$ in LinkedList$(N^-(x))$
      \ENDIF
      \ENDFOR
      \STATE $d^+(u) \gets d^+(u) + 1$
    \FORALL {$w \in  RoundRobin\left(N^+(u), \lceil \frac{128}{\eta b^{-1}} \rceil \right)$
    }
    \STATE Update $d^+_w(u)$ in LinkedList$(N^-(w))$
    \ENDFOR

    \end{algorithmic}
  \end{algorithm}
\end{minipage}~%
\begin{minipage}[t]{.5\textwidth}
\null
 \begin{algorithm}[H]
    \caption{Deletion}
    \label{alg:deletion_alphaless}
    \begin{algorithmic}
      \STATE $x \gets \textnormal{First( First( LinkedList of $ N^-(u)$))} $
      \IF{
      $
       d^+_u(x) > (1 + \eta b^{-1} / 2) \cdot ( d^+(u) - 1)  +  \theta
      $
}
            \STATE Add$(\overrightarrow{ux})$
            \STATE \emph{Delete($\overrightarrow{xu}$})
      \ELSE
      \STATE $d^+(u) \gets d^+(u) - 1$
    \FORALL {$w \in RoundRobin\left(N^+(u), \lceil \frac{128}{\eta b^{-1}} \rceil \right)$ }
    \STATE Update $d^+_w(u)$ in LinkedList$(N^-(w))$
    \ENDFOR
      \ENDIF
    \end{algorithmic}
  \end{algorithm}
\end{minipage}

\noindent
To facilitate these steps, we alter our data structure. For all vertices $u$ in $\overrightarrow{G}^b$ we store:
\begin{enumerate}[(a), noitemsep, nolistsep]
    \item The \emph{exact} value $d^+(u)$ of the current orientation $\overrightarrow{G}^b$,
    \item The set $N^+(u)$ in a linked list and a pointer to the current position in the linked list. Whenever we change $d^+_u$, we inform at most $\lceil \frac{128}{\eta b^{-1}} \rceil$ successors of the pointer (telling them the new degree of $u$). Any node added to $N^+(u)$ is inserted into the linked list immediately prior to the position of the current counter.

    \item The set $N^-(u)$ in a doubly linked list of buckets $B_j$ sorted by $j$ from high to low.
    Each bucket $B_j$ contains all $w \in N^-(u)$ as a linked list in arbitrary order where:
    $
    w \in B_j \Leftrightarrow 
    d^+_u(w) \in \left[ \left(1 + \frac{\eta b^{-1} }{64} \right)^j, \left(1 + \frac{\eta b^{-1} }{64} \right)^{j+1} \right].
     $
\end{enumerate}

\noindent
As the third step, recall that we maintain a graph $G$ and $G^b$.
Let $\theta \in \{ 0, 1 \}$ indicate whether we aim to maintain Invariant~\ref{inv:degrees} or \ref{inv:degrees_additive}. 
Our algorithm below uses the perceived values in a stricter fashion and will reorient an edge $\overrightarrow{xy}$ 
in $\overrightarrow{G}^b$ if it notices that:
\[
d^+(x) + 1 > (1 + \eta b^{-1} / 2)  \cdot d^+(y) + \theta \quad \textnormal{ and } \quad d^+_y(x) > (1 + \eta b^{-1} / 2) \cdot (d^+(y)-1) + \theta.
\]
The first comparison uses the actual out-degrees of $x$ and $y$, whereas the second uses the perceived out-degree of $x$ at the vertex $y$. 
We prove that this ensures that Invariant~$\theta$ is  maintained for the graph $\overrightarrow{G}^b$ (Lemma~\ref{lemma:invariant_maintenance}). 
Finally, we upper bound the running time and the resulting graph density of $\overrightarrow{G}$ (Theorem~\ref{thm:fast_butworse} + \ref{thm:fastnonadd}).
In the algorithm description, we again distinguish between an insertion and deletion and the \textbf{add} and \textbf{remove} operations.

Note that our algorithm uses two types of round robin schemes: 
during insertions, we iterate over $N^+(u)$, visiting up to $\lceil \frac{128}{\eta b^{-1}} \rceil$ out-neighbors (possibly less). 
At the end of the recursion, the last vertex informs exactly $\lceil \frac{128}{\eta b^{-1}} \rceil$ of its out-neighbors of the new out-degree.


\begin{lemma}
\label{lemma:offset_alltimes}
During all steps of Algorithm~\ref{alg:deletion_alphaless}, for all $v$, $\forall u \in N^-(v)$:
\[
d^+(u) < d^+_v(u) + \left( d^+(u) \frac{\eta b^{-1} }{64} \right)
\]
\end{lemma}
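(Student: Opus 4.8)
The plan is to show that the perceived value $d^+_v(u)$ is never too far below the true value $d^+(u)$, by tracking how the gap $d^+(u) - d^+_v(u)$ evolves between successive updates that $u$ sends to $v$. The key observation is that the round-robin scheme guarantees $v$ is informed of $u$'s current out-degree at least once every $|N^+(u)| / \lceil \frac{128}{\eta b^{-1}} \rceil$ times that $d^+(u)$ changes; equivalently, between two consecutive moments at which $u$ refreshes $d^+_v(u)$, the true degree $d^+(u)$ can have changed by at most roughly $\lceil \frac{128}{\eta b^{-1}} \rceil$ steps per "lap", but more usefully, each out-neighbor of $u$ gets refreshed once per full pass of the pointer. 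So I would first argue that whenever $u$ sets $d^+_v(u) \gets d^+(u)$, the gap is $0$, and that thereafter, before the next refresh at $v$, the pointer of $u$ makes at most one full cycle, during which $d^+(u)$ increases or decreases by at most $\lceil \frac{128}{\eta b^{-1}} \rceil$ — wait, that is not quite the right accounting; rather, in one full lap the pointer advances past all of $N^+(u)$, and in doing so $d^+(u)$ can change by at most $|N^+(u)|$ in the worst case, so I need the sharper claim that between two visits to $v$, at most $|N^+(u)|$ updates to $d^+(u)$ occur, each changing it by $1$, and each such update advances the pointer by $\lceil \frac{128}{\eta b^{-1}} \rceil$ positions, so the number of updates between visits is at most $|N^+(u)| / \lceil \frac{128}{\eta b^{-1}} \rceil$.

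Thus between consecutive refreshes at $v$ the quantity $d^+(u)$ changes by at most $\Delta := |N^+(u)| \big/ \lceil \frac{128}{\eta b^{-1}} \rceil \le \frac{\eta b^{-1}}{128}\, d^+(u)$, using $|N^+(u)| \le d^+(u)$ (here one must be slightly careful when $G^b$ has multiedges, but $|N^+(u)| \le d^+(u)$ still holds). This gives a one-step bound $d^+(u) - d^+_v(u) \le \frac{\eta b^{-1}}{128} \cdot (\text{value of } d^+(u) \text{ during that interval})$, and since $d^+(u)$ only moves by a small fraction of itself, the target bound $d^+(u) - d^+_v(u) < \frac{\eta b^{-1}}{64}\, d^+(u)$ follows with room to spare — the factor-two slack between $\frac{1}{128}$ and $\frac{1}{64}$ absorbs the discrepancy between "the value of $d^+(u)$ when the gap was opened" and "the current value of $d^+(u)$". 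I would make this precise by an induction over the update sequence: assume the bound holds just before an update, and check it is preserved by (i) a refresh at $v$, which resets the gap to $0$; (ii) an increment of $d^+(u)$, which enlarges the gap by $1$ but also enlarges the right-hand side, and where one invokes the round-robin counting to bound how much the gap can have accumulated since the last refresh; (iii) any operation that does not touch $u$ or the pointer of $u$, which is trivial.

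The main obstacle I anticipate is the bookkeeping around \emph{when} exactly refreshes happen relative to increments, and in particular handling the interaction with the recursive chain of flips: a single update in $G$ can trigger a chain of $\mathrm{Insert}/\mathrm{Delete}$ calls, each of which may modify some $d^+(\cdot)$ and may or may not reach the round-robin step at the end. I would want to argue that a vertex $u$ whose degree changes either (a) sits in the interior of the flip chain, in which case its degree returns to its pre-update value and the perceived values at its out-neighbors need not be touched at all without harming the invariant, or (b) is an endpoint, in which case the terminal round-robin step fires and informs $\lceil \frac{128}{\eta b^{-1}} \rceil$ of its out-neighbors. The cleanest route is probably to phrase the whole lemma as an invariant maintained \emph{after} each complete update to $G$ (and verify it is not violated at any intermediate step of a chain, since intermediate degree changes are transient), rather than tracking every micro-step, and to combine the round-robin progress bound with the fact that across one update $d^+(u)$ changes by at most... one needs to be a little careful here too, since $b$ duplicate edges are inserted per update, so $d^+(u)$ can change by up to $b$; but $b$ is small compared to $d^+(u)/\log n$ in the regime of interest, and in any case the telescoping argument over the refresh interval only needs the \emph{aggregate} change, which is still bounded by $|N^+(u)| / \lceil \frac{128}{\eta b^{-1}} \rceil$ by the pointer-advancement count. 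Modulo that care, the statement follows by a routine induction.
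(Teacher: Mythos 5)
Your proposal is correct and follows essentially the same route as the paper's (much terser) proof: charge the gap $d^+(u)-d^+_v(u)$ to the number of increments of $d^+(u)$ since $v$ was last refreshed, bound that number by $|N^+(u)|/\lceil 128/(\eta b^{-1})\rceil \le d^+(u)\cdot\frac{\eta b^{-1}}{128}$ via the pointer-advancement count, and let the slack between $\tfrac{1}{128}$ and $\tfrac{1}{64}$ absorb the drift; the paper additionally makes explicit the one detail you gesture at, namely that newly added out-neighbors are inserted just before the pointer and hence cannot delay $v$'s refresh within the current lap.
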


\begin{proof}
Consider the time where the algorithm sets the value of $d^+_v(u)$.
Every time the value $d^+(u)$ increases (through Algorithm~\ref{alg:insertion_alphaless}), it updates $\frac{128}{\eta b^{-1}}$ of its neighbors. Note that by moving the counter along the linked list, $v$ will be informed about the new degree prior to any node added to $N^+(u)$ subsequently to $v$.
Hence, the out-degree of $u$ can never be increased by Algorithm~\ref{alg:insertion_alphaless} by $ d^+(u) \frac{\eta b^{-1} }{64}$ or more, without Algorithm~\ref{alg:insertion_alphaless} re-setting the value  $d^+_v(u)$.
\end{proof}

\begin{lemma}
\label{lemma:offset_insertion}
Let for a vertex $u$, $d^+(u)$ be incremented by Algorithm~\ref{alg:insertion_alphaless}. 
Moreover let $\theta = 1$ or $\eta \geq 1280$ and $\eta\cdot{}b^{-1} < 1$.
Then for all $ u \in N^-(v)$:
\[
d^+(u) < d^+_v(u) + \left( d^+_v(u) \frac{\eta b^{-1} }{64} \right) + \theta
\]
\end{lemma}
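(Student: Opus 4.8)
The plan is to bootstrap Lemma~\ref{lemma:offset_insertion} from Lemma~\ref{lemma:offset_alltimes}, which already gives $d^+(u) < d^+_v(u) + d^+(u)\cdot\tfrac{\eta b^{-1}}{64}$. The only difference is that the error term is measured against the \emph{true} degree $d^+(u)$ rather than the \emph{perceived} degree $d^+_v(u)$; so the task is really to relate $d^+(u)$ and $d^+_v(u)$ to absorb the gap, paying the extra additive $\theta$. First I would start from Lemma~\ref{lemma:offset_alltimes} and rearrange: $d^+(u)\left(1 - \tfrac{\eta b^{-1}}{64}\right) < d^+_v(u)$, hence $d^+(u) < d^+_v(u)\left(1 - \tfrac{\eta b^{-1}}{64}\right)^{-1}$. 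Since $\eta b^{-1} < 1$, the factor $\left(1-\tfrac{\eta b^{-1}}{64}\right)^{-1}$ is at most $1 + \tfrac{\eta b^{-1}}{32}$ (using $(1-x)^{-1}\le 1+2x$ for $x\le 1/2$, with $x = \eta b^{-1}/64 \le 1/64$), so $d^+(u) < d^+_v(u)\left(1 + \tfrac{\eta b^{-1}}{32}\right) = d^+_v(u) + d^+_v(u)\cdot\tfrac{\eta b^{-1}}{32}$.

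That already looks like the claim but with $\tfrac{\eta b^{-1}}{32}$ in place of $\tfrac{\eta b^{-1}}{64}$, i.e.\ off by a factor of $2$ in the error term, and without using the additive slack $\theta$. Here is where the hypothesis ``$\theta = 1$ or $\eta \ge 1280$'' enters. I would split into two cases. If $\theta = 1$: we need to show $d^+_v(u)\cdot\tfrac{\eta b^{-1}}{32} \le d^+_v(u)\cdot\tfrac{\eta b^{-1}}{64} + 1$, i.e.\ $d^+_v(u)\cdot\tfrac{\eta b^{-1}}{64} \le 1$. This need \emph{not} hold for large $d^+_v(u)$, so this naive route is too lossy — I'd instead need to be more careful and track how much $d^+(u)$ can have grown since $d^+_v(u)$ was last set. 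The cleaner argument: at the moment $d^+_v(u)$ was last set, it equalled the then-current $d^+(u)$; since then $d^+(u)$ has increased by fewer than $\tfrac{128}{\eta b^{-1}}$ full round-robin-cycles' worth... actually the round-robin guarantees $v$ is refreshed within one pass over $N^+(u)$, and between two refreshes $d^+(u)$ increases by at most one per increment while $v$ waits for at most $|N^+(u)|/\lceil 128/(\eta b^{-1})\rceil$ increments; combined with $|N^+(u)| = d^+(u)$ (for a simple-ish accounting) this bounds the increment by roughly $\tfrac{\eta b^{-1}}{128}d^+(u)$. So I would re-derive, directly from the round-robin schedule rather than from Lemma~\ref{lemma:offset_alltimes}, that $d^+(u) - d^+_v(u) < \tfrac{\eta b^{-1}}{128}\cdot d^+(u)$ at the moment of an increment, then convert $d^+(u)$ to $d^+_v(u)$ on the right as above (losing at most another factor $2$, landing at $\tfrac{\eta b^{-1}}{64}$), which gives the claim with $\theta = 0$ slack — and a fortiori with $\theta = 1$.

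For the case $\eta \ge 1280$ (and $\theta = 0$): the extra factor of $10$ slack in $\eta$ compared to the $\theta=1$ threshold is exactly what lets us squeeze the $\tfrac{1}{128}$ down to fit inside $\tfrac{1}{64}$ even without the additive $+1$; concretely, wherever the $\theta=1$ argument used ``$+1$'' to cover a term bounded by $d^+_v(u)\cdot\tfrac{\eta b^{-1}}{128}$, here I instead note $\eta \ge 1280$ makes the constant-size discrepancies negligible relative to the main error budget, i.e.\ the round-robin refresh happens $10\times$ faster relative to the degree, so the gap $d^+(u)-d^+_v(u)$ is at most $\tfrac{\eta b^{-1}}{1280}d^+(u) \le \tfrac{\eta b^{-1}}{640}d^+_v(u)$ (after the $d^+(u)\to d^+_v(u)$ conversion), comfortably below $\tfrac{\eta b^{-1}}{64}d^+_v(u)$. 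The main obstacle I anticipate is getting the round-robin bookkeeping exactly right: precisely quantifying, at the instant Algorithm~\ref{alg:insertion_alphaless} increments $d^+(u)$, how stale the value $d^+_v(u)$ can be, which requires carefully using (i) that new neighbors are inserted just before the pointer so they are served last, (ii) that each increment advances the pointer past $\lceil 128/(\eta b^{-1})\rceil$ neighbors, and (iii) a clean relation between $|N^+(u)|$ and $d^+(u)$ (for $G^b$ they differ only due to multi-edges, which the duplication-by-$b$ construction controls). Once that staleness bound is pinned down, the rest is the elementary $(1-x)^{-1} \le 1+2x$ manipulation sketched above, and the case split on whether the leftover constant is absorbed by $\theta=1$ or by the $\eta \ge 1280$ slack.
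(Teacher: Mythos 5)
Your first half is on the right track and matches the paper: the round-robin schedule guarantees that between two refreshes of $d^+_v(u)$ the true degree can grow by at most $\max\bigl\{ d^+(u)\tfrac{\eta b^{-1}}{128},\,1\bigr\}$, which gives $d^+(u) < d^+_v(u) + d^+_v(u)\tfrac{\eta b^{-1}}{128} + 1$; and your observation that bootstrapping naively from Lemma~\ref{lemma:offset_alltimes} only yields a $\tfrac{\eta b^{-1}}{32}$ error term is a correct reason to abandon that route. The problem is everything after that. The additive $+1$ of staleness is \emph{not} avoidable in general (right after an increment, before $v$ is informed, $d^+_v(u)$ is off by exactly $1$, no matter how fast the round-robin is), so your claim that a more careful schedule analysis ``gives the claim with $\theta=0$ slack --- and a fortiori with $\theta=1$'' cannot be right; if it were, the hypothesis ``$\theta=1$ or $\eta\ge 1280$'' would be vacuous. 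The hypothesis is a disjunction precisely because the leftover $+1$ must be absorbed somewhere: when $\theta=1$ it is absorbed by the $+\theta$ term (and $\tfrac{1}{128}\le\tfrac{1}{64}$ finishes it), and when $\theta=0$ something else entirely is needed.

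That something else is the missing idea: a \emph{lower bound on the out-degree}, namely that every non-isolated vertex satisfies $d^+(u)\ge b/10$ at all times (proved in the paper by a separate structural argument: the $b$ duplicated copies of the first incident edge force a large initial out-degree, and the deletion algorithm would rather flip an edge toward a high-degree out-neighbor than decrement a vertex below $b/10$), whence also $d^+_v(u)\ge b/10$. Only then does $\eta\ge 1280$ do its job: $d^+_v(u)\tfrac{\eta b^{-1}}{128}\ge \tfrac{b}{10}\cdot\tfrac{\eta b^{-1}}{128}=\tfrac{\eta}{1280}\ge 1$, so the additive $1$ is at most a second copy of the multiplicative term and $\tfrac{1}{128}+\tfrac{1}{128}=\tfrac{1}{64}$. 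Your heuristic that ``$\eta\ge 1280$ makes the round-robin refresh $10\times$ faster relative to the degree'' is not what happens --- the number of neighbors informed per increment is $\lceil 128/(\eta b^{-1})\rceil$ and does not improve with large $\eta$; the constant $1280$ interacts with the degree floor $b/10$, not with the refresh rate. Without establishing that degree floor, your $\theta=0$ case does not close.
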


\begin{proof}
Consider when the algorithm sets the value $d^+_v(u)$.
Every time the value $d^+(u)$ increases, it updates $\frac{128}{\eta b^{-1}}$ of its neighbors. 
Hence, $d^+(u)$ can be increased by at most $\max \{
\left( d^+(u) \frac{\eta b^{-1} }{128} \right),  1 \}$, before re-setting the value $d^+_v(u)$. 
In particular, we have: $
d^+(u) < d^+_v(u) + \left( d^+_v(u) \frac{\eta b^{-1} }{128} \right) + 1$.

If $\theta = 1$, the statement is clear. Otherwise, we claim that $d^+(u) \geq \frac{b}{10}$ at all times. 
Observe first that if $b\leq 10$, this statement is certainly true, so assume otherwise:
denote by $x$ be the first non-isolated vertex, incident to at least one out-edge, such that $d^{+}(x) < \frac{b}{10}$. 
The first inserted edge in $G$ that caused $x$ to be non-isolated has to contribute at least $\frac{b}{10}$ out-edges to $x$. Thus, for this to occur, $x$'s out-degree must be decremented during a call to Algorithm~\ref{alg:deletion_alphaless}. 
If there still exists an out-edge $e_x$ of $x$, then $e_x$ must point to by a vertex of degree at least $\frac{9b}{10} \geq 2d^+(x)$, 
and so $x$ would have recursed on this neighbor instead of decrementing its own degree -- a contradiction.

We claim that $d^+(u) \geq \frac{b}{10}$ implies that $d_v^+(u) \geq \frac{b}{10}$. Indeed, if $d_v^+(u)$ were smaller than $\frac{b}{10}$ then because we update the out-neighbors of $d^+(u)$ so frequently: $d_v^+(u) = d^+(u)$ -- a contradiction. 
It follows from $\eta \geq 1280$ that $1 \leq d^+_v(u) \frac{\eta b^{-1} }{128}$. Thus, we recover the statement in the second case.
\end{proof}

\begin{lemma}
\label{lemma:decrement_bound}
Let for a vertex $v$, $d^+(v)$ be decremented by Algorithm~\ref{alg:deletion_alphaless}. 
Then  $\forall u \in N^-(v)$:
\[
d^+_v(u) \leq \left( 1 + \frac{3 \eta b^{-1}}{4} \right) d^+(v) + \theta 
\]
\end{lemma}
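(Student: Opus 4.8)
\textbf{Proof proposal for Lemma~\ref{lemma:decrement_bound}.}

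The plan is to track the most recent moment at which the perceived value $d^+_v(u)$ was last (re)set by $u$ during a call to Algorithm~\ref{alg:insertion_alphaless} or Algorithm~\ref{alg:deletion_alphaless}, and to compare the state of the graph then with the state now, at the moment $d^+(v)$ is being decremented by Algorithm~\ref{alg:deletion_alphaless}. Call $d_0$ the value $d^+(u)$ had at that last update, so $d^+_v(u) = d_0$ at that time and it has not changed since. The key observation is that at the time of that last update, the algorithm checked (or rather the structure guaranteed, via the round-robin informing scheme) that the invariant was not grossly violated between $u$ and $v$; more precisely, since $\overrightarrow{uv}$ was a directed edge of $\overrightarrow{G}^b$ and the algorithm did not reorient it, we had $d^+(u) \leq (1 + \eta b^{-1}/2) \cdot d^+(v) + \theta$ — or, if this was a round-robin informing step rather than the triggering step, a comparable bound holds. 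I would extract from this that at the time of the last update, $d_0 \leq (1 + \eta b^{-1}/2)\, d^+_{\text{then}}(v) + \theta$, where $d^+_{\text{then}}(v)$ is the out-degree of $v$ at that moment.

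The second ingredient is that between that last update and now, $d^+(v)$ cannot have grown too much relative to $d_0$ without triggering a fresh update of $d^+_v(u)$. Here I would use Lemma~\ref{lemma:offset_alltimes} and Lemma~\ref{lemma:offset_insertion} in the reverse direction: since $u$ informs its out-neighbors every time $d^+(u)$ moves by roughly a $(\eta b^{-1}/64)$-fraction, the perceived value $d^+_v(u)$ and the true value $d^+(u)$ differ multiplicatively by at most $(1 + \eta b^{-1}/64)$ (plus possibly an additive $\theta$), and this in turn controls how far $d^+(v)$ — which is sandwiched against $d^+(u)$ by the maintained invariant — can have drifted. Concretely: now, at the point of decrement, Algorithm~\ref{alg:deletion_alphaless} is about to decrease $d^+(v)$, so just before the decrement the structure on $v$ satisfied the invariant, meaning $d^+_v(u) \leq (1+\eta b^{-1}/2)(d^+(v)) + \theta$ would fail (that's exactly why we're in the reorienting branch) — wait, that's the wrong direction. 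Rather, I want the bound in terms of the \emph{post}-decrement $d^+(v)$, i.e. $d^+(v)$ in the statement refers to the value after decrementing (or I should be careful: the statement says "$d^+(v)$ be decremented", so $d^+(v)$ on the right is the new, smaller value). So I must show $d^+_v(u)$ has not been allowed to get more than a $(1 + 3\eta b^{-1}/4)$-factor above this new value of $d^+(v)$.

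The cleanest route: combine (i) $d^+(u) \leq (1+\eta b^{-1}/2) d^+(v) + \theta$, which holds right now since $\overrightarrow{uv} \in \overrightarrow{G}^b$ and the invariant (Lemma~\ref{lemma:invariant_maintenance}, or rather the fact that we're maintaining Invariant~$\theta$) holds at this instant against the current $d^+(v)$ — using either the pre- or post-decrement value and absorbing the $\pm 1$ into constants — with (ii) $d^+_v(u) \leq d^+(u) \cdot$ (something close to $1$), which is the contrapositive flavour of Lemmas~\ref{lemma:offset_alltimes}/\ref{lemma:offset_insertion}: we never let the \emph{true} degree exceed the \emph{perceived} degree by much, but here I need the other inequality, $d^+_v(u)$ not much bigger than $d^+(u)$. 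That direction comes from the fact that $d^+(u)$ only ever \emph{decreases} after being reported (decrements happen in Algorithm~\ref{alg:deletion_alphaless} and those also trigger round-robin updates), so $d^+_v(u)$ cannot lag \emph{below} $d^+(u)$ by more than one round-robin cycle's worth, i.e. $d^+_v(u) \leq d^+(u) + \max\{ d^+(u)\frac{\eta b^{-1}}{64}, 1\}$ hmm, again need care about which inequality the round-robin gives on decrements. Multiplying out $(1 + \eta b^{-1}/2)(1 + \eta b^{-1}/64) \leq 1 + \eta b^{-1}/2 + \eta b^{-1}/64 + (\eta b^{-1})^2/128 \leq 1 + 3\eta b^{-1}/4$ using $\eta b^{-1} < 1$ (so $(\eta b^{-1})^2/128$ and the $1/64$ term are both tiny, and $1/2 + 1/64 + 1/128 < 3/4$) gives exactly the claimed factor, with the additive $\theta$ surviving from step (i) and any additive $\theta$ from the round-robin lemmas folded in (this is precisely why the hypothesis of Lemma~\ref{lemma:offset_insertion} splits on $\theta = 1$ vs.\ $\eta \geq 1280$).

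\textbf{Main obstacle.} The delicate part is bookkeeping \emph{which} inequality the round-robin informing scheme actually guarantees in the decrement direction, and being careful about whether $d^+(v)$ in the statement is the pre- or post-decrement value — an off-by-one in $\theta$ or a factor $(1 + \eta b^{-1}/64)$ vs.\ $(1 + \eta b^{-1}/128)$ propagates into whether the final constant is $3/4$ or something larger. I expect the proof to mirror Lemma~\ref{lemma:offset_insertion}: isolate the last re-set of $d^+_v(u)$, bound the drift of $d^+(u)$ since then (at most a $\frac{\eta b^{-1}}{64}$ multiplicative factor by the round-robin frequency, exactly as in Lemma~\ref{lemma:offset_alltimes}), bound $d^+(u)$ now against $d^+(v)$ now via the maintained invariant, and multiply the two $(1 + \text{small})$ factors together, checking $\eta b^{-1} < 1$ makes the cross term negligible so that $1/2 + $ (error terms) $\leq 3/4$.
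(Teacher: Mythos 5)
Your proposal goes down a genuinely different road than the paper, and it has two concrete problems. First, the paper's proof of this lemma never touches true out-degrees, the round-robin drift, or the maintained invariant at all: it is a purely local argument about the data structure on $N^-(v)$. Since Algorithm~\ref{alg:deletion_alphaless} chose to decrement $d^+(v)$ rather than flip, its test \emph{failed} for $x = \textnormal{First(First(LinkedList of }N^-(v)\textnormal{))}$, which gives $d^+_v(x) \leq (1 + \eta b^{-1}/2)\, d^+(v) + \theta$ with $d^+(v)$ the post-decrement value (you correctly worried about pre- vs.\ post-decrement; it is the post-decrement value, exactly because the test compares against $d^+(u)-1$). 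The crucial missing idea in your write-up is that $x$ is, up to the bucket granularity, the in-neighbor with \emph{maximum} perceived out-degree: the buckets $B_j$ are sorted from high to low and each bucket spans a multiplicative range $\left[(1+\tfrac{\eta b^{-1}}{64})^j, (1+\tfrac{\eta b^{-1}}{64})^{j+1}\right]$, so every other $u \in N^-(v)$ satisfies $d^+_v(u) \leq (1 + \tfrac{\eta b^{-1}}{64})\, d^+_v(x)$. Multiplying the two inequalities and using $\eta b^{-1} < 1$ gives the factor $1 + \tfrac{3\eta b^{-1}}{4}$ directly. You circle near this when you note "that's exactly why we're in the reorienting branch --- wait, that's the wrong direction," but you then abandon it instead of recognizing that the test failure is precisely the inequality you need, for the extremal in-neighbor.

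Second, the route you actually commit to --- bounding $d^+(u) \leq (1+\eta b^{-1}/2)\,d^+(v) + \theta$ via "the invariant holds at this instant" and then relating $d^+_v(u)$ to $d^+(u)$ via the round-robin --- would not work even if completed. The invariant that is maintained (Lemma~\ref{lemma:invariant_maintenance}) is $d^+(u) \leq (1+\eta b^{-1})\,d^+(v) + 2\theta$, not $(1+\eta b^{-1}/2)\,d^+(v)+\theta$; multiplying $(1+\eta b^{-1})$ by any further factor $\geq 1$ can never land below $1 + \tfrac{3\eta b^{-1}}{4}$, so your final arithmetic starts from an unjustified premise. Worse, invoking the maintained invariant here is circular: Lemma~\ref{lemma:invariant_maintenance} is proved \emph{using} Lemma~\ref{lemma:increment_bound}, which in turn is proved using the present Lemma~\ref{lemma:decrement_bound}. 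The paper's proof is deliberately self-contained (algorithm test plus bucket structure) precisely to sit at the bottom of that chain.
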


\begin{proof}
Per definition of Algorithm~\ref{alg:deletion_alphaless}, we know that
$d^+_v(x) \leq (1 + \frac{\eta b^{-1} }{2}) d^+(v) + \theta$ for: 
\newline \noindent $x \gets$ First(First(LinkedList of $N^-(u)$) (else, the algorithm flips $\overrightarrow{xv}$ instead of decrementing $d^+(v)$). 
Recall that the set $N^-(v)$ is stored in buckets $B_j$ with $j$ from high to low. 
Any $u \in N^-(v)$ is either in the same bucket as $x$ or a later bucket. Thus:
\begin{align*}
 d^+_v(u) & \left( 1 + \frac{\eta b^{-1}}{64} \right)^{-1} \leq d^+_v(x) \leq \left(1 + \frac{\eta b^{-1} }{2} \right) d^+(v) + \theta \Rightarrow   \\
d^+_v(u) &
\left( 1 + \frac{\eta b^{-1}}{64} \right) \left(1 + \frac{\eta b^{-1} }{2} \right) d^+(v) + \theta \left( 1 + \frac{\eta b^{-1}}{64} \right)  \Rightarrow \\
d^+_v(u)  &\leq \left( 1 + \frac{\eta b^{-1}}{64} \right) \left(1 + \frac{\eta b^{-1} }{2} \right) d^+(v) + \frac{\eta b^{-1}}{64} d^+(v) + \theta
\end{align*}
Noting that $\eta b^{-1} < 1$ then implies the lemma.
\end{proof}

\begin{lemma}
\label{lemma:increment_bound}
Let for a vertex $u$, $d^+(u)$ be incremented by Algorithm~\ref{alg:deletion_alphaless}. 
Assuming that $\eta b^{-1} < 1$, then $\forall v \in N^+(u)$:
\[
d^+(u) \leq \left( 1 + \eta b^{-1} \right) d^+(v) + 2\theta 
\]
\end{lemma}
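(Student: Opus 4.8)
The plan is to check that, right after Algorithm~\ref{alg:deletion_alphaless} raises $d^+(u)$, Invariant~$\theta$ holds for every out-edge of $u$. Such a raise can only be the step $\textnormal{Add}(\overrightarrow{ux})$ inside the \textbf{if}-branch of a deletion call on $u$, where $x$ is the in-neighbour of $u$ in the top bucket of the list storing $N^-(u)$. The first thing I would establish is structural: in the chain of recursive calls produced by one deletion, each \textbf{if}-branch is out-degree-neutral for the vertex it processes — the out-edge $\overrightarrow{ux}$ it creates exactly offsets the out-edge that this deletion removes from $u$ — and the single real decrement happens only in the terminating \textbf{else}-branch, \emph{after} all $\textnormal{Add}$-steps. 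Consequently, at the instant $d^+(u)$ is raised, all out-degrees are still equal to their values at the start of the current update, and $d^+(u)$ is back to its pre-update value $D$.

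Now I would split the out-neighbours of $u$ into the old ones and the new one $x$. For an old out-neighbour $v$ (which is neither $x$ nor the out-edge being deleted), both $d^+(u)=D$ and $d^+(v)$ are unchanged from the start of the update, and Invariant~$\theta$ held for $\overrightarrow{uv}$ then, hence it still holds. One may \emph{not} reuse this for $x$, since before the update $x$ was an in-neighbour of $u$, so the pre-update invariant bounds $d^+(x)$ in terms of $d^+(u)$, not the other way around.

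For $v=x$, I would use the test that fired the $\textnormal{Add}$: it held because $d^+_u(x) > (1+\eta b^{-1}/2)(D-1)+\theta$, i.e. $D < \frac{d^+_u(x)-\theta}{1+\eta b^{-1}/2}+1$, while $d^+(x)$ still equals its pre-update value (the chain has not yet reached $x$). It then remains to replace the perceived $d^+_u(x)$ by the true $d^+(x)$. For this I would prove the one-sided analogue of Lemma~\ref{lemma:offset_alltimes}: since every change of $d^+(x)$ re-informs the next $\lceil 128/(\eta b^{-1})\rceil$ out-neighbours of $x$ in the moving-pointer list (and later-inserted out-neighbours are told no later than $u$), the number of changes of $d^+(x)$ since $u$ was last told is at most an $\eta b^{-1}/128$-fraction of $|N^+(x)|\le d^+(x)$, plus one — giving $d^+_u(x)\le (1+\eta b^{-1}/64)\,d^+(x)+1$. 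Substituting and using $\eta b^{-1}<1$: when $\theta=1$ one gets $D<d^+(x)+1$, hence $D\le d^+(x)\le (1+\eta b^{-1})d^+(x)+2$ (the sub-cases with $d^+(x)\le 1$ being immediate anyway); when $\theta=0$ the same estimate leaves only an additive constant, absorbed into the multiplicative factor because $x$ carried $\overrightarrow{xu}$ and so $d^+(x)=\Omega(b)$ in the Invariant~\ref{inv:degrees} regime, once $\eta$ is large enough (as in Lemma~\ref{lemma:offset_insertion}).

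I expect the main obstacle to be exactly that last step — turning the \textbf{if}-test, phrased in perceived degrees, into a bound on the true degree $d^+(x)$: one has to account carefully for the $\lceil\cdot\rceil$-rounding of the round-robin batch and for the drift of $|N^+(x)|$ between successive notifications to $u$, and then keep the $\theta=0$ versus $\theta=1$ bookkeeping of the leftover additive terms straight (for $\theta=0$ also invoking the ``$d^+\ge b/10$ for non-isolated vertices'' fact used in the proof of Lemma~\ref{lemma:offset_insertion}).
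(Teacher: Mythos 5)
Your proof is correct in substance, but it follows a genuinely different route from the paper's. The paper treats every $v\in N^+(u)$ uniformly: it rewinds to the last time $t$ at which $d^+(v)$ was decremented, applies Lemma~\ref{lemma:decrement_bound} at that instant to bound $d^+_v(u)_t$ by $d^+(v)_t$, converts perceived to true degree via Lemma~\ref{lemma:offset_insertion}, bounds the round-robin drift of $d^+(u)$ since time $t$, and finishes with the monotonicity observation $d^+(v)_t\le d^+(v)$. You instead split the out-neighbours: for the old ones you freeze all degrees during the deletion chain (your structural claim that every \textbf{if}-branch is out-degree-neutral and only the terminating \textbf{else}-branch decrements is accurate) and carry over the pre-update invariant; for the newly acquired out-neighbour $x$ you use the \textbf{if}-test that fired together with a one-sided staleness bound $d^+_u(x)\lesssim(1+\eta b^{-1}/64)d^+(x)+1$, which is essentially the content of Observation~\ref{obs:critical_ineq} and the remarks following it (including the $d^+\ge b/10$ fact for the $\theta=0$ regime). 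Your version is more localized — it isolates the one out-edge that is actually at risk, namely the new edge $\overrightarrow{ux}$ — whereas the paper's rewind argument avoids any case split and, importantly, avoids assuming anything about the state before the update.

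One caveat: your treatment of the old out-neighbours relies on the invariant having held at the start of the update, which is not a hypothesis of the lemma as stated. This is harmless where the lemma is actually deployed — inside the ``first violation'' contradiction of Lemma~\ref{lemma:invariant_maintenance} the induction hypothesis is available — but you should state explicitly that you are proving the conditional form, or restructure so that the old-neighbour case is also derived from the algorithm's tests rather than from the prior invariant.
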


\begin{proof}
Consider the last time $t$ where $d^+(v)$ was decremented. At this point, we have by Lemma \ref{lemma:decrement_bound} that $d^+_v(u)_t \leq \left( 1 + \frac{3 \eta b^{-1}}{4} \right) d^+(v)_t + \theta $. 
Lemma \ref{lemma:offset_insertion} gives us that $d^+(u)_t < d^+_v(u)_t + \left( d^+_v(u)_t \frac{\eta b^{-1} }{64} \right) + \theta$. 
Finally, since at this point $d^+(u)$ can have increased by at most $d_{t}^+(u)_t\frac{\eta b^{-1}}{128}$, as new out-neighbours are immediately checked and put at the end of the queue. Combining these things, we find that:
\begin{align*}
d^+(u) &\leq (1+\frac{\eta b^{-1}}{128})d^+(u)_t \leq (1+\frac{\eta b^{-1}}{128})\cdot{}(1+\frac{\eta b^{-1} }{64}) d^+_v(u)_t \\
&\leq ((1 + 3\eta b^{-1} / 4) \cdot d^+(v)_t + \theta) (1+\frac{\eta b^{-1}}{128})\cdot{}(1+\frac{\eta b^{-1} }{64}) \\
& \leq (1 + \eta b^{-1}) \cdot d^+(v)_t + (1 + \eta b^{-1})\theta \leq (1 + \eta b^{-1}) \cdot d^+(v) + 2\theta
\end{align*}
Here, we have used the assumption that $\eta b^{-1} < 1$ multiple times. Note also that since $d^+(v)$ was last decremented at time $t$, it must be the case that $d^{+}(v)_t \leq d^{+}(v)$.
\end{proof}

\begin{lemma}[Maintaining Invariant~$\theta$]
\label{lemma:invariant_maintenance}
Whenever Algorithms~\ref{alg:insertion_alphaless} and~\ref{alg:deletion_alphaless} terminate, they maintain an orientation $\overrightarrow{G}^b$ where: for each edge $\overrightarrow{uv}$ in $\overrightarrow{G}^b$ it must be that: $
d^+(u) \leq (1 + \eta b^{-1} ) \cdot  d^+(v) + 2 \theta.$
\end{lemma}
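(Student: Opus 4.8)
The plan is to verify the Invariant~$\theta$ inequality $d^+(u) \le (1+\eta b^{-1})\,d^+(v) + 2\theta$ for every directed edge $\overrightarrow{uv}$ by considering, for a fixed edge, the last moment that edge's orientation or either endpoint's exact out-degree changed, and then invoking the four technical lemmas already established. The key observation is that an edge $\overrightarrow{uv}$ can only come to violate the invariant if either (i) $d^+(u)$ was just incremented, or (ii) $d^+(v)$ was just decremented, or (iii) the edge was just (re)oriented from $v$ to $u$ — since deletions/insertions elsewhere and decrements of $d^+(u)$ or increments of $d^+(v)$ only help. So I would do a case analysis on which of these three events happened most recently to the pair $(u,v)$ after the current operation terminates.

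First I would handle the case where $d^+(u)$ was last incremented by Algorithm~\ref{alg:insertion_alphaless} (for an edge that already pointed $u \to v$, or equivalently was re-oriented into $u \to v$): at that moment, the algorithm only performs the increment after checking that for the scanned out-neighbour $x = \arg\min$ and in particular for every out-neighbour it examined, $d^+(u)+1 \le (1+\eta b^{-1}/2)\,d^+(x) + \theta$, so in particular this held for $v$ at that time with the \emph{perceived} value; I then need to pass from the perceived value $d^+_v(u)$ back to the true $d^+(u)$, which is exactly what Lemma~\ref{lemma:offset_alltimes} (or Lemma~\ref{lemma:offset_insertion}) provides, giving $d^+(u) < d^+_v(u)(1 + \eta b^{-1}/64) + \theta$. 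Chaining the slack from $/2$ up to $\eta b^{-1}$, using $\eta b^{-1} < 1$, and accounting for the at-most-$d^+(v)\frac{\eta b^{-1}}{128}$ further growth of $d^+(v)$'s counterpart, closes this case — this is essentially the computation packaged in Lemma~\ref{lemma:increment_bound}, so I would just cite it.

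Second, the symmetric case where $d^+(v)$ was last decremented by Algorithm~\ref{alg:deletion_alphaless}: the guard of that algorithm ensures the first in-neighbour $x$ in the top bucket satisfies $d^+_v(x) \le (1+\eta b^{-1}/2)\,d^+(v)+\theta$ \emph{after} the decrement, hence by the bucket structure every $u \in N^-(v)$ — in particular our $u$ — satisfies the same up to a $(1+\eta b^{-1}/64)$ factor; this is Lemma~\ref{lemma:decrement_bound}, and combined with the offset lemma converting perceived to true out-degree it yields $d^+(u) \le (1+\eta b^{-1})d^+(v)+2\theta$, i.e.\ again Lemma~\ref{lemma:increment_bound} applies with the roles as stated. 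Third, if the most recent relevant event is the (re)orientation that created $\overrightarrow{uv}$, I note this is itself triggered inside a recursive \textbf{Insert}/\textbf{Delete} call and reduces to case one or two for the endpoint whose degree just changed. Finally I would remark that when neither endpoint's exact degree has changed and the edge orientation is unchanged since a time when the invariant held, the invariant is preserved verbatim, so the claim holds at termination.

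\textbf{Main obstacle.} The delicate point is the bookkeeping of \emph{when} the perceived value $d^+_v(u)$ was last refreshed relative to when $d^+(u)$ last increased, because the round-robin pointer and the rule that newly-added out-neighbours are inserted \emph{just before} the pointer are what guarantee $v$ is informed before too much drift accumulates; I would need to make sure the quantitative slack ($\eta b^{-1}/64$ vs.\ $\eta b^{-1}/128$, and the extra $+\theta$ or the $d^+(u)\ge b/10$ argument when $\theta=0$) lines up so that the final constant is exactly $(1+\eta b^{-1})$ with additive $2\theta$ and not something slightly larger. Since Lemmas~\ref{lemma:offset_alltimes}--\ref{lemma:increment_bound} have already absorbed precisely this arithmetic, the proof of this lemma itself should be a short assembly: enumerate the three ways the invariant could break, dismiss the non-breaking moves, and cite Lemma~\ref{lemma:increment_bound} in each surviving case.
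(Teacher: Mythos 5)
Your overall strategy --- induct on the first moment the invariant could break, observe that only an increment of $d^+(u)$, a decrement of $d^+(v)$, or a reorientation of the edge can cause a violation, dispatch the increment case by Lemma~\ref{lemma:increment_bound}, and handle the decrement case by combining Lemma~\ref{lemma:decrement_bound} with the perceived-to-true offset lemma --- is exactly the paper's proof, which argues by contradiction at the first violation and splits into the insertion (increment) and deletion (decrement) cases.

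The one place your assembly does not quite close is the decrement case. You write that Lemma~\ref{lemma:decrement_bound} ``combined with the offset lemma \ldots{} yields $d^+(u) \le (1+\eta b^{-1})d^+(v)+2\theta$, i.e.\ again Lemma~\ref{lemma:increment_bound} applies with the roles as stated.'' Lemma~\ref{lemma:increment_bound} does not apply there: it is conditioned on $d^+(u)$ having just been incremented, which is precisely not the event in this case. More substantively, Lemma~\ref{lemma:offset_alltimes} converts the perceived value $d^+_v(u)$ to the true value only up to an additive drift of $d^+(u)\cdot\frac{\eta b^{-1}}{64}$ --- an error proportional to $d^+(u)$ itself, the very quantity you are trying to bound. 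To absorb that term into $(1+\eta b^{-1})d^+(v)$ you must first bound $d^+(u)$ in terms of $d^+(v)$, and the only available tool is the induction hypothesis: since this is the first violation, $d^+(u)\le(1+\eta b^{-1})(d^+(v)+1)+2\theta$ held just before the decrement. The paper makes this self-referential step explicit and then finishes with $(1+\eta b^{-1})<2$ and $d^+(v)\ge 1$; without it the decrement case does not follow from the cited lemmas alone. Your ``main obstacle'' paragraph correctly flags that the arithmetic is delicate, but the missing ingredient is this specific use of the first-violation bound, not the round-robin bookkeeping (which Lemmas~\ref{lemma:offset_alltimes}--\ref{lemma:increment_bound} have indeed already absorbed).
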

\begin{proof}
Consider for the sake of contradiction the first time when our invariant would be violated.\footnote{Many lemmas in this section assume that $d^+(v) \geq 1$. Setting $b \geq 2$ ensures this.}
The violation cannot be caused by an increment due to Lemma~\ref{lemma:increment_bound}, so it must not be violated due to an insertion. Hence WLOG, we may assume that a deletion caused the violation
i.e. that our sequence of flips resulted in a graph $\overrightarrow{G}^b$ where the out-degree of a vertex $v$ was decreased by 1.
These deletions removed an edge $\overrightarrow{vw}$. Afterwards, there must exist some vertex $u$ where:
\begin{itemize}[noitemsep, nolistsep]
    \item The graph $\overrightarrow{G}$ contains the edge $\overrightarrow{uv}$, and  $d^+(u) > (1 + \eta b^{-1} ) \cdot  d^+(v) + 2 \cdot \theta$.
\end{itemize}

\noindent
Since this is our first violation, we know that: $
    d^+(u) \leq (1 + \eta b^{-1}) (d^+(v) + 1) + 2 \cdot \theta$. 
    
\noindent
Moreover, since we just decremented $d^+(v)$ we know by Lemma~\ref{lemma:decrement_bound} that: 

\begin{align*}
d^+_v(u) &\leq \left( 1 + \frac{3 \eta b^{-1}}{4} \right) d^+(v) + \theta  &\Rightarrow &\textnormal{\tiny by Lemma~\ref{lemma:offset_alltimes}} \\
%
    %
    d^+(u) &\leq \left(1 + \frac{3 \eta  b^{-1}}{4} \right) d^+(v) +  d^+(u) \cdot \frac{\eta b^{-1} }{64} + \theta &\Rightarrow &\textnormal{\tiny because of first violation} \\
    d^+(u) &\leq \left(1 + \frac{3 \eta  b^{-1}}{4} \right) d^+(v)  + 
    \bigg( (1 + \eta b^{-1}) (d^+(v) + 1) + 2 \cdot \theta \bigg) 
    \cdot \frac{\eta b^{-1} }{64} + \theta   &\Rightarrow &\textnormal{\tiny by $(1 + \eta b^{-1}) < 2$} \\
    d^+(u) &\leq \left(1 + \frac{3 \eta  b^{-1}}{4} \right) d^+(v)+\bigg( 2 \cdot (d^+(v) + 1) + 2 \theta \bigg) \cdot \frac{\eta b^{-1} }{64} + \theta &\Rightarrow &\textnormal{\tiny observing that $d^+(v)$ is at least 1} \\
        d^+(u) &\leq \left(1 + \frac{3 \eta  b^{-1}}{4} \right) d^+(v) + 8  \cdot d^+(v)\cdot \frac{\eta b^{-1} }{64} + \theta < (1 + \eta  b^{-1}) d^+(v) + \theta & &
    \end{align*}
\noindent This contradicts our assumption that $d^+(u) > (1 + \eta b^{-1} ) \cdot  d^+(v) + 2 \cdot \theta$.
\end{proof}

\subsection{Adaptive out-orientations}

We showed that whenever Algorithms~\ref{alg:insertion_alphaless} and~\ref{alg:deletion_alphaless} terminate, they maintain a graph $\overrightarrow{G}^b$ where Invariant~$\theta$ holds. 
What remains, is to show how much time they require to terminate and to show how this allows us to maintain an approximate minimal out-orientation. 
Just as in the previous section, we restrict our proofs to be insertion only (as the proof for deletions is symmetrical).
To this end, we observe the following consequence of Lemma~\ref{lemma:offset_insertion} (for whenever an insertion invalidates $\overrightarrow{ab}$):

\begin{observation}
\label{obs:critical_ineq}
Suppose the update algorithm reorients an edge $x \rightarrow y$ in $\overrightarrow{G}^b$. Then we have that $d^{+}(x) > (1 + \eta b^{-1} / 4)( d^+(y) -1) +  \theta(1-\frac{\eta b^{-1} }{128})$.
\end{observation}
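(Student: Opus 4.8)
The plan is to unwind the reorientation condition in Algorithm~\ref{alg:insertion_alphaless} and convert the \emph{perceived} degree inequality that triggers a flip into a statement about the \emph{actual} out-degrees, using Lemma~\ref{lemma:offset_insertion} to bound the gap between $d^+(x)$ and $d^+_y(x)$. Concretely, a flip of $x\to y$ happens only when both the actual-degree test $d^+(x)+1 > (1+\eta b^{-1}/2)\,d^+(y)+\theta$ and the perceived-degree test $d^+_y(x) > (1+\eta b^{-1}/2)(d^+(y)-1)+\theta$ fire (the latter is exactly the condition written just before the observation, applied with $u=y$, and also the test inside the \texttt{Insert}/\texttt{Remove} recursion). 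The target inequality $d^+(x) > (1+\eta b^{-1}/4)(d^+(y)-1) + \theta(1-\tfrac{\eta b^{-1}}{128})$ is weaker than a clean $(1+\eta b^{-1}/2)$ bound, so there should be slack to absorb the perceived-vs-actual error term.

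First I would start from the perceived-degree test that caused the flip, $d^+_y(x) > (1+\eta b^{-1}/2)(d^+(y)-1)+\theta$. Then I would invoke Lemma~\ref{lemma:offset_insertion}, which (under the hypotheses $\theta=1$ or $\eta\ge 1280$, and $\eta b^{-1}<1$, which are in force here) gives $d^+(x) < d^+_y(x) + d^+_y(x)\tfrac{\eta b^{-1}}{64} + \theta$; equivalently $d^+_y(x) > (d^+(x)-\theta)\bigl(1+\tfrac{\eta b^{-1}}{64}\bigr)^{-1}$. Chaining these, $(d^+(x)-\theta)\bigl(1+\tfrac{\eta b^{-1}}{64}\bigr)^{-1} > (1+\eta b^{-1}/2)(d^+(y)-1)+\theta$, so $d^+(x) > \bigl(1+\tfrac{\eta b^{-1}}{64}\bigr)(1+\eta b^{-1}/2)(d^+(y)-1) + \theta\bigl(1 + (1+\tfrac{\eta b^{-1}}{64})\bigr)$. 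Wait — I need the sign on the $\theta$ terms to end up matching the target $+\theta(1-\tfrac{\eta b^{-1}}{128})$, which is \emph{smaller} than $\theta$, so I actually want a lower bound that loses rather than gains $\theta$; I would therefore be more careful and instead use Observation's role as a corollary of Lemma~\ref{lemma:offset_insertion} essentially verbatim, rewriting $d^+(x) < d^+_y(x) + \max\{d^+(x)\tfrac{\eta b^{-1}}{128},1\} $-type bounds from that lemma's proof. The cleanest route: from the flip test, $d^+(x)+1 > (1+\eta b^{-1}/2)d^+(y)+\theta \ge (1+\eta b^{-1}/4)(d^+(y)-1) + (1+\eta b^{-1}/4) + \theta - \tfrac{\eta b^{-1}}{4}d^+(y)$, and then I would bound the leftover terms using $\eta b^{-1}<1$ and $d^+(y)\ge 1$ to show the net correction is at least $-\,\theta\tfrac{\eta b^{-1}}{128} + (\text{positive slack from the gap between }\eta b^{-1}/2\text{ and }\eta b^{-1}/4)$.

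So the key steps, in order, are: (1) record the two tests that must hold for a reorientation of $x\to y$; (2) feed the perceived-degree test through Lemma~\ref{lemma:offset_insertion} to replace $d^+_y(x)$ by $d^+(x)$ up to a multiplicative $(1+\tfrac{\eta b^{-1}}{64})$ factor and an additive $\theta$; (3) simplify the resulting inequality using $1+\eta b^{-1}/2$ versus $1+\eta b^{-1}/4$ and $\eta b^{-1}<1$, checking that the multiplicative slack $\tfrac{\eta b^{-1}}{2}-\tfrac{\eta b^{-1}}{4}=\tfrac{\eta b^{-1}}{4}$ dominates both the $\tfrac{\eta b^{-1}}{64}$ error and the $\theta\tfrac{\eta b^{-1}}{128}$ slack we are allowed to give up, which holds comfortably since $d^+(y)\ge 1$ makes the multiplicative slack worth at least $\tfrac{\eta b^{-1}}{4}(d^+(y)-1)+\tfrac{\eta b^{-1}}{4}\ge \tfrac{\eta b^{-1}}{4}$; (4) collect terms into the stated form. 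The main obstacle is bookkeeping the $\theta$-terms with the correct sign — the target has $\theta(1-\tfrac{\eta b^{-1}}{128})$, which is deliberately slightly \emph{less} than $\theta$, so one must be sure the chain of inequalities actually surrenders at most $\tfrac{\eta b^{-1}}{128}\theta$ and not more; everything else is a routine application of $\eta b^{-1}<1$ and $d^+(y)\ge 1$ (the latter guaranteed by $b\ge 2$, cf.\ the footnote in Lemma~\ref{lemma:invariant_maintenance}). I do not expect any genuinely hard step beyond this sign/slack accounting.
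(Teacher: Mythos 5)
There is a genuine gap, and it sits exactly where the observation is non-trivial. First, the two tests you record in step (1) are not simultaneous requirements for a flip: the actual-degree test $d^+(x)+1 > (1+\eta b^{-1}/2)d^+(y)+\theta$ is the trigger in Algorithm~\ref{alg:insertion_alphaless} (insertions), while the perceived-degree test $d^+_y(x) > (1+\eta b^{-1}/2)(d^+(y)-1)+\theta$ is the trigger in Algorithm~\ref{alg:deletion_alphaless} (deletions). The observation must therefore be proved in two separate cases. Your ``cleanest route'' works, and is essentially the paper's argument, for the insertion case only --- there the trigger already involves actual degrees and the claim is routine algebra using $d^+(y)\ge 1$ and $\eta b^{-1}<1$.

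For the deletion case your plan does not go through. Lemma~\ref{lemma:offset_insertion} bounds the \emph{actual} degree from above by the \emph{perceived} one (it controls unpropagated \emph{increments}), i.e.\ it gives $d^+(x) < d^+_y(x)(1+\tfrac{\eta b^{-1}}{64})+\theta$, which rearranges to a \emph{lower} bound on $d^+_y(x)$ in terms of $d^+(x)$. The deletion trigger is also a lower bound on $d^+_y(x)$. Two lower bounds on the same quantity cannot be chained into the inequality you write between $d^+(x)$ and $d^+(y)$; that step is a non-sequitur, and you correctly sense something is off with the $\theta$ bookkeeping but then retreat to the insertion-only argument. What is actually needed --- and what the paper proves inside the observation itself rather than citing Lemma~\ref{lemma:offset_insertion} --- is the opposite direction: because \emph{decrements} of $d^+(x)$ are also announced round-robin to $\lceil\tfrac{128}{\eta b^{-1}}\rceil$ out-neighbors per decrement, the actual degree can fall at most a factor $(1-\tfrac{\eta b^{-1}}{128})$ below the perceived one before $d^+_y(x)$ is refreshed (and if $d^+(x)\tfrac{\eta b^{-1}}{128}<1$ the perceived value is exact). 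Hence $d^+(x)\ge(1-\tfrac{\eta b^{-1}}{128})\,d^+_y(x)$, and multiplying the deletion trigger by $(1-\tfrac{\eta b^{-1}}{128})$ and using $\eta b^{-1}<1$ together with $d^+(y)-1\ge 0$ gives the stated bound. Without this decrement-propagation estimate the deletion case of the observation is unproved.
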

\begin{proof}
If the reorientation happens during an insertion, we have $d^+(x) + 1 > (1 + \eta b^{-1} / 2) d^+(y) +  \theta$. 
If it is during a deletion, we have: $d_y^+(x) > (1 + \eta b^{-1} / 2)( d^+(y) -1) +  \theta$. 
Due to the Round-Robin scheme, $d^{+}(x)$ can be decremented by at most $
\left( d^+(x) \frac{\eta b^{-1} }{128} \right)$ before it re-sets the value $d^+_y(x)$. Note that if $d^+(x) \frac{\eta b^{-1} }{128} < 1$, $d^+(x)$ is of course decremented by 1, but then the estimate $d^+_y(x)$ is update immediately and therefore it is precise. 
It follows that $d^+(x) \geq (1-\frac{\eta b^{-1} }{128}) d^{+}_y(x)$ at all times, when the algorithm is considering $y$ during a deletion, and so we have:
$d^+(x) \geq (1-\frac{\eta b^{-1} }{128}) ((1 + \eta b^{-1} / 2)( d^+(y) -1) +  \theta) \geq (1 + \eta b^{-1} / 4)( d^+(y) -1) +  \theta(1-\frac{\eta b^{-1} }{128})$. Note that here we are relying on the fact that $d^+(y) -1$ is non-negative.
\end{proof}
\noindent
We briefly remark that if at all times $\frac{\eta b^{-1}}{4}d^{+}(y) + \theta(1-\frac{\eta b^{-1} }{128})\geq 1 + \eta b^{-1} / 4,$ then $d^{+}(x) > d^{+}(y)$ in the observation above. 
This ensures that there always exists a way to direct an edge $(u, v)$, that does not violate Invariant~$\theta$. 
We show how to assure this, depending on the invariant that we wish to maintain:

\subparagraph{Invariant 1.}
For $\theta = 1$, ensuring that $\frac{\eta b^{-1}}{4}d^{+}(y) \geq \eta b^{-1} / 4 + \eta b^{-1} / 128$ is sufficient, and so ensuring that $d^{+}(y) \geq 2$ is enough. Note that if $\eta b^{-1} < 1$, then it is sufficient to pick $b = 6$  in order to ensure this. 

\subparagraph{Invariant 0.}
On the other hand for $\theta = 0$, we note that if $\eta b^{-1} < 1$ and $\eta \geq 1280$, then the arguments from Lemma~\ref{lemma:offset_insertion} show that $d^{+}(y) \geq \frac{b}{10}$ at all times, and so we know that $\frac{\eta b^{-1}}{4}d^{+}(y) \geq 3$, and so certainly this inequality still holds. 
We finally note that for all non-negative values of $\theta$, we have that if $d^{+}(y) \geq \frac{128}{\eta b^{-1}}$, then $d^{+}(y) - 1 \geq d^{+}(y)(1-\frac{\eta b^{-1} }{128})$, and so in this case the observation shows that in fact $d^{+}(x) > (1 + \eta b^{-1} / 4)(1-\frac{\eta b^{-1} }{128})d^{+}(y) +  \theta(1-\frac{\eta b^{-1} }{128}) \geq (1 + \eta b^{-1} / 6)d^{+}(y)$ for all non-negative choices of $\theta$.

These observations are  intuitively why we maintain Invariant~\ref{inv:degrees_additive} and Invariant~\ref{inv:degrees} in $O(\log^2 n \log \rho)$ and $O(\log^3 n \log \rho)$ time respectively using perceived values and a round-robin strategy:

\subsubsection{Efficiently maintaining Invariant~\ref{inv:degrees_additive}.}

\begin{theorem}
\label{thm:fast_butworse}
Let $G$ be a graph subject to edge insertions and deletions. 
Choose $c = 2, b = 6$,  $\gamma$ constant and $\eta$ be such that $\eta b^{-1} < \frac{1}{\log n \max \{  \gamma^{-1}, 1\}}$.
We can choose $\gamma$ to maintain an out-orientation $\overrightarrow{G}^b$ and $\overrightarrow{G}$ in $O( \log^2 n \log \rho)$ time s.t.:
\begin{itemize}[noitemsep, nolistsep]
    \item Invariant~\ref{inv:degrees_additive} holds for $\overrightarrow{G}^b$, and
    \item $\forall u$ in $\overrightarrow{G}$ and $\overrightarrow{G}^b$, the out-degree is at most $O(\rho + \log n)$,
\end{itemize}
where $\rho$ is the density of $G$ at the time of the update.
\end{theorem}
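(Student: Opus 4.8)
The plan is to combine three ingredients already assembled in the paper: (i) Lemma~\ref{lemma:invariant_maintenance}, which guarantees that once Algorithms~\ref{alg:insertion_alphaless} and~\ref{alg:deletion_alphaless} terminate, Invariant~\ref{inv:degrees_additive} holds for $\overrightarrow{G}^b$ with $c=2$; (ii) the Structural Theorem~\ref{thm:structural}, which converts the invariant into the out-degree bound $\Delta(\overrightarrow{G}^b)\in O(\rho_b+\log n)$ for an appropriate $\gamma$; and (iii) a running-time analysis that bounds the number of recursive edge re-orientations by $O(\log n\log\rho)$ and the work per recursion level by $O(\log n)$ (not $O(\rho)$), thanks to the round-robin maintenance of perceived degrees. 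Since $b=6$ is a constant here, an update in $G$ triggers $O(1)$ edge insertions in $G^b$, and $\rho_b=O(b\rho)=O(\rho)$, so the bounds for $\overrightarrow{G}^b$ and $\overrightarrow{G}$ agree up to constants.

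First I would dispatch the out-degree bound. With $\theta=1$, $c=2$, $b=6$ and the stated smallness condition $\eta b^{-1}<\frac{1}{\log n\max\{\gamma^{-1},1\}}$, the same computation as in the proof of Theorem~\ref{thm:invariant2} gives $k_{\max}\le \log_e n/\log_e(1+\gamma)$ and hence $(1+\eta b^{-1})^{-k_{\max}}=\Omega(1)$ for constant $\gamma$. Plugging $c=2$ into Theorem~\ref{thm:structural} yields $\Omega(1)\cdot\Delta(\overrightarrow{G}^b)\le (1+\gamma)\rho_b+2(\eta^{-1}b+1)=O(\rho)+O(\eta^{-1})=O(\rho+\log n)$, using $\eta^{-1}=O(\log n)$. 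Translating to $\overrightarrow{G}$: every edge $\overrightarrow{uv}\in\overrightarrow{G}$ corresponds to a strict majority ($\ge b/2=3$) of copies oriented $u\to v$ in $\overrightarrow{G}^b$, so $d^+_{\overrightarrow G}(u)\le \frac{2}{b}\,d^+_{\overrightarrow{G}^b}(u)=O(\rho+\log n)$ as well.

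Next I would handle the running time, which is the real content. The recursion depth: when an edge $x\to y$ is re-oriented, Observation~\ref{obs:critical_ineq} plus the $\theta=1$ discussion (which ensures $d^+(y)\ge 2$, hence $d^+(x)>d^+(y)$ and moreover $d^+(x)>(1+\eta b^{-1}/6)d^+(y)$ once $d^+(y)\ge 128/(\eta b^{-1})$) shows every flip along the chain strictly decreases out-degree, and multiplicatively by a $(1+\Omega(\eta b^{-1}))$ factor in the regime of large degrees. Exactly as in Theorem~\ref{thm:invariant2}, a case split on whether $\rho\lessgtr\log n$ bounds the chain length by $O((\eta b^{-1})^{-1}\log\rho)=O(\log n\log\rho)$ (in the small-$\rho$ regime the additive decrease alone, starting from $O(\log n)$, gives $O(\log n)$). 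For the per-level cost: during an insertion we scan at most $\lceil 128/(\eta b^{-1})\rceil=O(\log n)$ out-neighbours of $u$ before either re-orienting or incrementing; during a deletion the candidate $x$ is found in $O(1)$ via the top bucket; and in either case the round-robin informs $O(\log n)$ out-neighbours, each update to a bucketed linked list $N^-(w)$ costing $O(1)$ (the buckets are geometric, so moving between consecutive buckets is $O(1)$). Hence the work is $O(\log n)$ per recursion level, for a total of $O(\log^2 n\log\rho)$, multiplied by the $O(1)$ copies per update.

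The main obstacle I anticipate is not any single step but the bookkeeping needed to make the degree-bound argument airtight despite perceived values: Theorem~\ref{thm:structural} is stated for orientations satisfying the \emph{exact}-degree invariant $d^+(u)\le(1+\eta b^{-1})d^+(v)+c$, and Lemma~\ref{lemma:invariant_maintenance} is precisely what certifies that the output of Algorithms~\ref{alg:insertion_alphaless}/\ref{alg:deletion_alphaless} meets this exact-degree condition (with $c=2$) at termination, so the two must be invoked in the right order. A secondary subtlety is verifying that the geometric bucketing of $N^-(w)$ with ratio $1+\frac{\eta b^{-1}}{64}$ interacts correctly with the slack in the perceived values (Lemmas~\ref{lemma:offset_alltimes} and~\ref{lemma:offset_insertion}) so that the "First(First(·))" element really is a witness of near-minimal perceived out-degree in $N^-(u)$; this is what makes the deletion step $O(1)$ rather than $O(\rho)$. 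Once these are in place, the theorem follows by assembling the pieces.
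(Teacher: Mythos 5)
Your proposal is correct and follows essentially the same route as the paper's proof: invoke Lemma~\ref{lemma:invariant_maintenance} together with Theorem~\ref{thm:structural} (with $c=2$, constant $\gamma$, and $\eta^{-1}=O(\log n)$) for the $O(\rho+\log n)$ degree bound, then bound the chain length by $O(\log n\log\rho)$ via Observation~\ref{obs:critical_ineq} with the same case split on $\rho$ versus $128/(\eta b^{-1})$, and charge $O(\log n)$ per recursion level to the round-robin updates with $O(1)$-cost geometric bucket moves. The details you flag as potential obstacles (ordering of the invariant certification before applying the structural theorem, and the bucketing interacting with perceived values) are exactly the points the paper's argument also rests on, so no gap remains.
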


\begin{proof}
The proof is near-identical to that of Theorem~\ref{thm:invariant2}. 
Maintaining Invariant~\ref{inv:degrees_additive} for $\overrightarrow{G}^b$ (for $b \geq 2$) guarantees that $\forall u$ in $\overrightarrow{G}$ and $\overrightarrow{G}^b$, the out-degree is at most $O(\rho + \log n)$.
What remains is to show that our algorithms are correct and terminate in $O(\log^2 n \log \rho)$ time:

\noindent
\subparagraph{Correctness.}
We may orient any edge $(u, v)$ in $\overrightarrow{G}^b$: suppose that we cannot insert $\overrightarrow{vu}$.
Then per definition: $d^+(v) + 1 > (1 + \eta b^{-1} / 2) d^+_v(u) + 1$. 
By Observation~\ref{obs:critical_ineq}, this implies that $d^+(u) < d^+(v)$. 
Now suppose that the perceived value $d^+_u(v)$ is larger than the actual value $d^+(v)$. 
Then $d^+(u) < d^+_u(v) \leq (1 + \eta b^{-1} / 2) d^+_u(v)$ and our algorithm will insert $\overrightarrow{uv}$.
If the perceived value is smaller, then $d^+(v)$ must have been incremented at some point.
This fact allows us to apply Lemma~\ref{lemma:offset_insertion} to note that $d^+(u) < (1 + \eta b^{-1} /64 ) d^+_u(v) + 1 < (1 + \eta b^{-1} /2 ) d^+_u(v) + 1$.
Thus $d^+(u) + 1 \leq 1 + \eta b^{-1} /2 ) d^+_u(v) + 1$ and our algorithm will insert $\overrightarrow{uv}$ instead. 
Inserting $\overrightarrow{uv}$ may imply that for some edge $\overrightarrow{ux}$: $d^+(u) + 1 > (1 + \eta b^{-1} / 2) d^+_u(x) + 1$.
We flip $\overrightarrow{ux}$ and by Observation~\ref{obs:critical_ineq} and the subsequent remark: $d^+(u) > d^+(x)$. So, this directed chain of flipped edges has finite length.

\subparagraph{Recursions.}
Suppose that we insert an edge $\overrightarrow{uv}$ and that we recursively flip a directed chain of edges of length $T$ to reach a vertex $x_T$. We now make a case distinction: 

Let $\rho < \frac{128}{\eta b^{-1}}$. The out-degree $d^+(u)$ is at most $O(\frac{1}{\eta b^{-1}})$ and for any two consecutive vertices $(y, z)$ on the chain: $d^+(z) < d^+(y)$. Thus $T \in O(\frac{1}{\eta b^{-1}}) = O(\log n)$. 

Let $\rho \geq \frac{128}{\eta b^{-1}}$. Our analysis shows two things: the out-degree $d^+(u)$ is at most $O(\rho)$ and for any two consecutive vertices $(y, z)$ on the chain: $d^+(y) > (1 + \eta b^{-1} / 6) d^+(z)$ as long as both have out-degree at least $\frac{128}{\eta b^{-1}}$.
As soon as this is no longer the case, we are in the above case and the rest of the chain has length at most $O(\frac{1}{\eta b^{-1}}) = O(\log n)$.
It follows that $d^+(u) > (1 + \eta b^{-1} / 6)^T d^+(x_T)$ and so $T \in O(\log n \log \rho)$.

\subparagraph{Time spent.}
Each insertion in $G$ triggers $O(b) = O(1)$ insertions in $\overrightarrow{G}^b$.
For each insertion in $\overrightarrow{G}^b$, our algorithm spends $O(1)$ time per function call, recurses $O(\log n \log \rho)$ times. We inform or look at $O( (\eta b^{-1})^{-1} ) = O(\log n)$ out-neighbors of every vertex in the chain (in round robin). 
Whenever we update for an out-neighbor $y \in N^+(x_T)$ the value $d_y^+(x_T) \gets d^+(x_T)$ we may need to update the data structure that $y$ has on the set $N^-(y)$. 
By Lemma~\ref{lemma:increment_bound}, the new value of $d^+_v(x)$ can be increased by at most a factor $(1 + \frac{\eta b^{-1}}{64})$ which ensures that we need to move $x_T$ by at most $O(1)$ buckets in $N^-(y)$.
Thus, our algorithm spends $O(\log^2 n \log \rho)$ time per insertion in $G$. Deletions are a factor $O(\log n)$ faster, because they only do the round-robin when the out-degree definitively changes.
\end{proof}

\subsubsection{Efficiently maintaining Invariant~\ref{inv:degrees}.}

\begin{theorem}
\label{thm:fastnonadd}
Let $G$ be a graph subject to edge insertions and deletions. 
Choose $\eta > 1280$, $\gamma$ constant, and $b$ such that $\eta b^{-1} \leq  \frac{1}{100 \log n \max \{ \gamma^{-1}, 1 \} }$. 
Let $G^b$ be the graph where each edge in $G$ is duplicated $b$ times. 
Our algorithms maintain out-orientations $\overrightarrow{G}^b$ and $\overrightarrow{G}$ s.t.:
\begin{itemize}[noitemsep, nolistsep]
    \item Invariant~\ref{inv:degrees} holds for $\overrightarrow{G}^b$,
    \item $\forall u$, the out-degree $d^+(v)$ in $\overrightarrow{G}^b$ is at most $O(b \cdot \rho)$, and
    \item $\forall u$, the out-degree of $u$ in $\overrightarrow{G}$ is at most $O(\rho)$.
\end{itemize}
We use $O( b^3 \cdot \log \rho) = O( \log^3 n \log \rho)$ time per operation in the original graph $G$. Here, $\rho$ is the density of $G$ at the time of the update. 
\end{theorem}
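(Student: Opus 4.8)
The plan is to follow the proofs of Theorem~\ref{thm:invariant1} and Theorem~\ref{thm:fast_butworse}: Lemma~\ref{lemma:invariant_maintenance}, applied with $\theta = 0$, already guarantees that Algorithms~\ref{alg:insertion_alphaless} and~\ref{alg:deletion_alphaless} maintain Invariant~\ref{inv:degrees} on $\overrightarrow{G}^b$ whenever they terminate --- this use of the lemma is legitimate because $\eta > 1280$ activates Lemma~\ref{lemma:offset_insertion} and $b = \Theta(\log n) \ge 2$ keeps all out-degrees positive. It therefore remains to establish (i) the claimed out-degree bounds for $\overrightarrow{G}^b$ and $\overrightarrow{G}$, (ii) that any freshly inserted edge can always be oriented and every resulting cascade of flips is a finite simple directed chain, (iii) that such a chain has length $O(b\log\rho)$, and (iv) that each vertex of the chain costs $O((\eta b^{-1})^{-1}) = O(b)$ work.

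For (i): Lemma~\ref{lemma:invariant_maintenance} says $\overrightarrow{G}^b$ satisfies the hypothesis of Theorem~\ref{thm:structural} with $c = 0$, and exactly as in Theorem~\ref{thm:invariant1} the choice $\eta b^{-1} \le \frac{1}{100\log n\max\{\gamma^{-1},1\}}$ makes $(1+\eta b^{-1})^{-k_{\max}} = \Omega(1)$, so $\Delta(\overrightarrow{G}^b) = O(\rho_b) = O(b\rho)$ using $\rho_b \le b\rho$. Since $\eta \ge 2$, every edge $\overrightarrow{uv}$ of $\overrightarrow{G}$ is backed by at least $\tfrac12 b \ge (\eta b^{-1})^{-1}$ parallel copies oriented $u\to v$ in $\overrightarrow{G}^b$, so $u$ has at most $d^+_{\overrightarrow{G}^b}(u)/(\tfrac12 b) = O(\rho)$ out-neighbours in $\overrightarrow{G}$.

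The crux of (ii) and (iii) is that, with $\theta = 0$, the argument inside Lemma~\ref{lemma:offset_insertion} forces $d^+(v) \ge \tfrac{b}{10}$ for every non-isolated $v$, and since $\eta \ge 1280$ this is at least $\tfrac{128}{\eta b^{-1}}$. Hence the remark following Observation~\ref{obs:critical_ineq} applies to \emph{every} reorientation $x\to y$: $d^+(x) > (1+\eta b^{-1}/6)\,d^+(y)$, so in particular $d^+(x) > d^+(y)$, a cascade never revisits a vertex, and the algorithm terminates. That some orientation of a newly inserted $(u,v)$ is admissible follows as in Theorem~\ref{thm:fast_butworse}, via a short case analysis on whether $d^+_u(v)$ over- or under-estimates $d^+(v)$ (the under-estimate case invoking Lemma~\ref{lemma:offset_insertion}). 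For the length of a chain $x_0,\dots,x_T$, substituting $d^+(x_0) = O(b\rho)$ and $d^+(x_T) = \Omega(b)$ into $d^+(x_0) > (1+\eta b^{-1}/6)^T d^+(x_T)$ gives $T = O\!\big(\tfrac{\log\rho}{\eta b^{-1}}\big) = O(b\log\rho) = O(\log n\log\rho)$; no case split on the magnitude of $\rho$ is needed (unlike in Theorem~\ref{thm:fast_butworse}) because here all out-degrees already lie above the threshold $\tfrac{128}{\eta b^{-1}}$.

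Finally (iv): each update to $G$ triggers $O(b)$ updates to $\overrightarrow{G}^b$; each of those recurses $O(b\log\rho)$ times; at each chain vertex we scan or round-robin-inform $O((\eta b^{-1})^{-1}) = O(b)$ out-neighbours, and by Lemma~\ref{lemma:increment_bound} (resp.\ Lemma~\ref{lemma:decrement_bound}) a perceived value changes only by a factor in $[(1+\tfrac{\eta b^{-1}}{64})^{-1},\,1+\tfrac{\eta b^{-1}}{64}]$, so each touched list $N^-(w)$ needs $O(1)$ bucket moves. Multiplying yields $O(b)\cdot O(b\log\rho)\cdot O(b) = O(b^3\log\rho) = O(\log^3 n\log\rho)$ per update, with deletions symmetric (and a $\log$-factor cheaper since they round-robin only when a degree truly changes). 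The step I expect to take the most care is keeping the $\theta = 0$ bookkeeping consistent across Lemmas~\ref{lemma:offset_alltimes}--\ref{lemma:invariant_maintenance}: the bound $d^+(v)\ge b/10$ --- equivalently $d^+(v) \ge (\eta b^{-1})^{-1}$, equivalently the multiplicative gap in Observation~\ref{obs:critical_ineq} --- is simultaneously what makes an admissible orientation exist, what makes cascades terminate, and what pins the recursion depth, and it holds only because $\eta>1280$ and $b=\Theta(\log n)$ are set exactly as in the statement.
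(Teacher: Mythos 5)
Your proposal is correct and follows essentially the same route as the paper's own proof: invoke Lemma~\ref{lemma:invariant_maintenance} with $\theta=0$ for the invariant, apply Theorem~\ref{thm:structural} with $c=0$ plus majority rounding for the degree bounds, use the lower bound $d^+(v)\ge b/10\ge 128/(\eta b^{-1})$ (valid since $\eta>1280$) together with Observation~\ref{obs:critical_ineq} to get the multiplicative gap $d^+(x)>(1+\eta b^{-1}/6)\,d^+(y)$ along a flip chain, and multiply $O(b)\cdot O(b\log\rho)\cdot O(b)$ for the running time. The only differences are presentational — you make the dependence on Lemma~\ref{lemma:invariant_maintenance} explicit and correctly observe that no case split on the magnitude of $\rho$ is needed here, both of which are consistent with (and slightly cleaner than) the paper's argument.
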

\begin{proof}
The proof is near-identical to the proof of Theorem~\ref{thm:invariant2}.
Maintaining Invariant~\ref{inv:degrees} for $\overrightarrow{G}^b$ gives the desired upper bound on the out-degrees of vertices in $\overrightarrow{G}^b$ and $\overrightarrow{G}$. What remains is to show that our algorithms terminate in the appropriate time. 

\subparagraph{Correctness.}
We claim that for all vertices $u$ in $\overrightarrow{G}^b$ which have at least one edge $\overrightarrow{uv}$: $d^+(u)$ and $d^+_v(u)$ are at least $4 \left(\eta b^{-1}  \right)^{-1}$.
Indeed, if $d^+(u)$ is incident to at least one edge then its out-orientation is at least $\frac{b}{2}$ and $\eta > 1280$. 
Suppose that we cannot insert $\overrightarrow{vu}$. Then $d^+(v) + 1 > (1 + \eta b^{-1} / 2) d^+_v(u)$. By Observation~\ref{obs:critical_ineq} this implies that $d^+(v) + 1 > (1 + \eta b^{-1} / 4) d^+(y)$.
We may again apply Lemma~\ref{lemma:increment_bound} to conclude that we may insert $\overrightarrow{uv}$ instead.

Just as in Theorem~\ref{thm:fast_butworse}, inserting $\overrightarrow{uv}$ may cause our algorithm to flip a directed chain of edges. Consider an edge $\overrightarrow{yz}$ in this chain. 
Before inserting $\overrightarrow{uv}$ it must be that $d^+(y) + 1 > (1 + \eta b^{-1} / 2) d^+_y(z)$ and, by Observation~\ref{obs:critical_ineq}, $d^+(y) > d^+(z)$. This chain must hence be finite. 

\subparagraph{Recursions.}
Suppose that we insert an edge $\overrightarrow{uv}$ and that we recursively flip a directed chain of edges of length $T$ to reach a vertex $x_T$. Our analysis shows two things: the out-degree $d^+(u)$ is at most $O(\rho)$ and for any two consecutive vertices $(y, z)$ on the chain: $d^+(y) > (1 + \eta b^{-1} / 6) d^+(z)$, since $d^{+}(y) \geq \frac{b}{10} \geq \frac{128}{\eta b^{-1}}$ as we discussed earlier. 
It follows that $d^+(u) > (1 + \eta b^{-1} / 4)^T d^+(x_T)$ and so $T \in O( b \log \rho) = O(\log n \log \rho)$.

\subparagraph{Time spent.}
Each insertion in $G$ triggers $O(b)$ insertions in $\overrightarrow{G}^b$.
For each insertion in $\overrightarrow{G}^b$, our algorithm  recurses $O(b \log \rho)$ times. At each step of the recursion look at or inform at most $O( (\eta b^{-1})^{-1} ) = O(b)$ out-neighbors in a round-robin fashion.
Whenever we update for such an out-neighbor $y \in N^+(x_T)$ the value $d_y^+(x_T) \gets d^+(x_T)$ we may need to update the data structure that $y$ has on the set $N^-(y)$. 
By Lemma~\ref{lemma:increment_bound}, the new value of $d^+_v(x)$ can be increased by at most a factor $(1 + \frac{\eta b^{-1}}{4})$ which ensures that we need to move $x_T$ by at most $O(1)$ buckets in $N^-(y)$.
Thus, our algorithm spends $O(b \cdot b \cdot b \log \rho) = O( \log^3 n \log \rho)$ time per insertion in $G$.
Again, deletions require a factor $O(\log n)$ fewer time: as the round robin is executed only for the last step in the recursion.
\end{proof}

\section{Obtaining  \texorpdfstring{$(1 + \eps)$}{(1+e)} Approximations }
\label{sec:onepluseps}
Finally, we note that we can choose our variables carefully to obtain a $(1 + \eps)$ approximations  of the maximal subgraph density or minimal out-degree. 
 Theorem~\ref{thm:fastnonadd} implies that, for suitable choices of $\eta$ and $b$, we can for any graph $G$ maintain a directed graph $\overrightarrow{G}^b$ (where $G^b$ is the graph $G$ with every edge duplicated $b$ times) such that $\overrightarrow{G}^b$  maintains Invariant~\ref{inv:degrees}.
 By Theorem~\ref{thm:structural}, $\overrightarrow{G}^b$ approximates the densest subgraph of $G$ and the minimal out-orientation of $G^b$ (where the approximation factor is dependent on $\beta$ and $\eta$). 
  The running time of the algorithm is $O( b^3 \cdot \log \alpha)$ where $\alpha$ is the arboricity of the graph. 
  In this section we show that for any $0 < \eps < 1$, whenever we choose $b \in O(\eps^{-2} \log n)$ to ensure that $\overrightarrow{G}^b$ maintains a:
  
  \begin{itemize}
      \item $(1 + \eps)$-approx. of \textbf{the maximal densest subgraph} of $G$ in $O(\eps^{-6} \log^3 n \log \alpha)$ time.
      \item $(1 + \eps)$-approx. of the minimal out-orientation of $G^b$. This implies an explicit \textbf{$(2 + \eps)$-approximation of the minimal out-orientation} of $G$  in $O(\eps^{-6} \log^3 n \log \alpha)$ time.
      \item  $(1 + \eps)$-approx. of the minimal out-orientation of $G^b$. 
      Through applying clever rounding introduced by Christiansen and Rotenberg~\cite{christiansenICALP} we obtain an explicit \textbf{$(1 + \eps)$-approximation of the minimal out-orientation} of $G$. 
   By slightly opening their black-box algorithm, we can show that applying their technique does not increase our running time.  Thus, our total running time is thus $O(\eps^{-6} \log^3 n \log \alpha)$.
  \end{itemize}

\subsection*{Obtaining a $(1 + \eps)$ Approximation for Densest Subgraph}

\begin{corollary}\label{cor:eps_subgraphdensity}
Let $G$ be a dynamic graph subject to edge insertions and deletions with  adaptive maximum subgraph density $\rho$. Let $G^b$ be $G$ where every edge is duplicated $b$ times. Let $0 \leq \epsilon < 1$. We can maintain an orientation $\overrightarrow{G}^b$ such that 
\[
\rho \leq  b^{-1} \cdot \Delta( \overrightarrow{G}^b) \leq (1+\eps)\rho
\]
with update time $O(\eps^{-6}\log^3(n)\log \rho )$ per operation in $G$.
\end{corollary}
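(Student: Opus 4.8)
The plan is to instantiate Theorem~\ref{thm:fastnonadd} and Theorem~\ref{thm:structural} with carefully chosen parameters $\eta$, $\gamma$, and $b$ so that the multiplicative and additive slack in the structural inequality is absorbed into a $(1+\eps)$ factor. First I would recall from Theorem~\ref{thm:fastnonadd} that, with $\eta > 1280$ and $b$ chosen with $\eta b^{-1} \leq \frac{1}{100 \log n \max\{\gamma^{-1},1\}}$, we maintain $\overrightarrow{G}^b$ satisfying Invariant~\ref{inv:degrees} (so $c=0$) in time $O(b^3 \log \rho)$. Applying Theorem~\ref{thm:structural} with $c=0$ gives a value $k_{\max} \leq \log_{1+\gamma} n$ with $(1+\eta b^{-1})^{-k_{\max}} \Delta(\overrightarrow{G}^b) \leq (1+\gamma)\rho_b$, and since $\rho_b$, the maximum subgraph density of $G^b$, equals exactly $b\cdot\rho$ (duplicating every edge $b$ times scales every subgraph density by exactly $b$), this reads $b^{-1}(1+\eta b^{-1})^{-k_{\max}} \Delta(\overrightarrow{G}^b) \leq (1+\gamma)\rho$. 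The lower bound $\rho \leq b^{-1}\Delta(\overrightarrow{G}^b)$ is immediate from the Picard--Queyranne fact $\lceil \rho_b\rceil = \min_{\overrightarrow{G}^b}\Delta(\overrightarrow{G}^b) \leq \Delta(\overrightarrow{G}^b)$ together with $\rho_b = b\rho$.

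The heart of the argument is then the choice of constants so that both $(1+\eta b^{-1})^{k_{\max}}$ and $(1+\gamma)$ are at most $1+O(\eps)$, after which rescaling $\eps$ by a constant finishes. I would set $\gamma := \eps/3$ (a constant once $\eps$ is fixed), so $(1+\gamma) \leq 1+\eps/3$ and $k_{\max} \leq \log_{1+\eps/3} n = O(\eps^{-1}\log n)$. Then I need $(1+\eta b^{-1})^{k_{\max}} \leq 1+\eps/3$; using $1+x \leq e^x$ it suffices to have $\eta b^{-1} \cdot k_{\max} \leq \eps/4$, i.e. $\eta b^{-1} \leq \frac{\eps}{4 k_{\max}} = \Theta\!\left(\frac{\eps^2}{\log n}\right)$. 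This is compatible with (indeed stronger than, for small $\eps$) the requirement $\eta b^{-1} \leq \frac{1}{100\log n \max\{\gamma^{-1},1\}} = \Theta\!\left(\frac{\eps}{\log n}\right)$ from Theorem~\ref{thm:fastnonadd}, so taking $\eta := 1281$ a fixed constant and $b := \lceil C \eta \eps^{-2}\log n\rceil \in O(\eps^{-2}\log n)$ for a suitable absolute constant $C$ satisfies every constraint simultaneously. Combining the two bounded factors gives $b^{-1}\Delta(\overrightarrow{G}^b) \leq (1+\eps/3)^2 \rho \leq (1+\eps)\rho$ for $\eps \leq 1$, and together with the lower bound this is exactly the claimed sandwich.

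For the running time, I would simply substitute $b = O(\eps^{-2}\log n)$ into the bound $O(b^3 \log \rho)$ from Theorem~\ref{thm:fastnonadd}, obtaining $O(\eps^{-6}\log^3 n \log \rho)$ per operation in $G$; one should remember that each edge update in $G$ triggers $O(b)$ updates in $G^b$, but this factor is already accounted for in the statement of Theorem~\ref{thm:fastnonadd} (whose time bound is stated per operation in $G$). The maintained object is the explicit orientation $\overrightarrow{G}^b$, from which $b^{-1}\Delta(\overrightarrow{G}^b)$ is read off in $O(1)$ time if one keeps a max-priority structure on out-degrees, or recomputed as needed.

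The main obstacle, such as it is, is bookkeeping the chain of constant choices so that all of the hypotheses of Theorems~\ref{thm:fastnonadd} and~\ref{thm:structural} hold simultaneously — in particular checking that the constraint $\eta b^{-1} \leq \frac{1}{100\log n\max\{\gamma^{-1},1\}}$ with $\gamma = \eps/3$ is implied by the stronger $\eps^2$-scaled constraint needed to kill $(1+\eta b^{-1})^{k_{\max}}$, and that $\eta$ can be taken to be an absolute constant rather than depending on $\eps$. No genuinely new idea beyond Theorems~\ref{thm:structural} and~\ref{thm:fastnonadd} is required; this corollary is purely a parameter-tuning consequence, and the only mild subtlety is the exact identity $\rho_b = b\rho$, which holds because duplicating edges scales $|E(H)|$ by $b$ while leaving $|V(H)|$ unchanged for every subgraph $H$.
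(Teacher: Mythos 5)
Your proposal is correct and follows essentially the same route as the paper: instantiate Theorem~\ref{thm:fastnonadd} (hence Invariant~\ref{inv:degrees}, $c=0$) with $\eta$ a fixed constant exceeding $1280$, $\gamma = \Theta(\eps)$, and $b = O(\eps^{-2}\eta\log n)$, then bound $(1+\eta b^{-1})^{k_{\max}} \leq e^{\eta b^{-1}k_{\max}} \leq 1+O(\eps)$ and rescale $\eps$ by a constant; the paper's proof is a terser version of exactly this parameter-tuning, and your added details (the exact identity $\rho_b = b\rho$ and the Picard--Queyranne lower bound) are consistent with what the paper uses implicitly.
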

\begin{proof}
We apply Theorem~\ref{thm:fastnonadd} in order to maintain an out-orientation with $\rho(G') \leq \Delta(\overrightarrow{G}^b) \leq (1+\gamma)(1+\eta\cdot{}b^{-1})^{k_{\max}} \rho(G')$. 
By setting $\eta > 1280$, $b = O(\eps'^{-2}\eta \log{n})$ and $\gamma = \eps'$, we satisfy the conditions of the Theorem. Since $k_{\max} \leq \log_{1+\gamma} n = O(\gamma^{-1}\log n)$, we find that
\[
(1+\eta\cdot{}b^{-1})^{k_{\max}} \leq e^{\eta b^{-1} \cdot k_{\max}} \leq e^{\eps'} \leq 1+2\eps'
\]
where the last inequality comes from the fact that for $0 \leq x \leq 1$, we have $e^x \leq 1+2x$. Finally, by setting $\eps' \leq \frac{\eps}{10}$, we recover the statement.
\end{proof}

\begin{observation}
The algorithm of Corollary~\ref{cor:eps_subgraphdensity} can in $O(1)$ time per operation, maintain the integers: $b^{-1}$, $\Delta(\overrightarrow{G}^b)$ and thus a $(1 + \eps)$ approximation of the value of the density of $G$. 
\end{observation}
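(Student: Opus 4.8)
The plan is to piggyback a tiny global maximum-tracking gadget onto the data structure of Corollary~\ref{cor:eps_subgraphdensity}. Concretely, I would maintain a (lazily allocated) array $C$ with $C[j] = |\{v : d^+(v) = j \text{ in } \overrightarrow{G}^b\}|$, together with a single scalar $M$ holding the current value $\Delta(\overrightarrow{G}^b) = \max_v d^+(v)$. The parameter $b$ is fixed at initialization, so $b$ and $b^{-1}$ are known constants requiring no maintenance at all; the only quantity that genuinely evolves is $M$, and everything reduces to keeping $M$ correct cheaply.

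The crucial structural fact to invoke is that the algorithm underlying Corollary~\ref{cor:eps_subgraphdensity} (Algorithms~\ref{alg:insertion_alphaless} and~\ref{alg:deletion_alphaless}) never alters the out-degree of a vertex of $\overrightarrow{G}^b$ by more than $1$ at a time: every step is the insertion or deletion of one directed edge, or one edge flip, i.e.\ a removal followed by an insertion. Such an elementary $\pm 1$ change to some $d^+(v)$ can be mirrored in $C$ and $M$ in $O(1)$ time. On an increment of $d^+(v)$ from $j$ to $j+1$: decrement $C[j]$, increment $C[j+1]$, and if $j = M$ set $M \gets j+1$. On a decrement from $j$ to $j-1$: decrement $C[j]$, increment $C[j-1]$, and if $j = M$ and $C[j]$ has dropped to $0$, set $M \gets j-1$. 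Consequently this bookkeeping adds only $O(1)$ work per reorientation performed by the algorithm, hence does not change its asymptotic update time, and the estimate $b^{-1}M$ can be reported in $O(1)$ time at any moment. The last sentence of the statement is then immediate: by Corollary~\ref{cor:eps_subgraphdensity} we have $\rho \le b^{-1}M = b^{-1}\Delta(\overrightarrow{G}^b) \le (1+\eps)\rho$, so $b^{-1}M$ is the claimed $(1+\eps)$-approximation of the (maximum subgraph) density $\rho$ of $G$.

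The only point needing a moment's care — the main, though small, obstacle — is the correctness of the decrement rule for $M$: I would argue that when $j$ was the maximum and $C[j]$ becomes $0$, the new maximum is exactly $j-1$, because immediately before the step every out-degree was at most $j$, and immediately after $d^+(v) = j-1 \ge 0$, so $C[j-1] \ge 1$. Thus $M$ never jumps by more than one and never requires a scan. (If one prefers to avoid a count array entirely, the same effect is obtained by keeping the non-empty out-degree values in a sorted doubly linked list of buckets, analogous to the $N^-(\cdot)$ structure already maintained.) I would also add the clarifying remark that "$O(1)$ time per operation" here means $O(1)$ per elementary reorientation, so that the total overhead charged to an update in $G$ is subsumed by the $O(\eps^{-6}\log^3 n \log \rho)$ already paid by the algorithm of Corollary~\ref{cor:eps_subgraphdensity}.
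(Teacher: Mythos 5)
Your proposal is correct and is exactly the bookkeeping the paper intends: the observation is stated without proof, and the natural argument is the one you give — the algorithm already maintains each $d^+(v)$ explicitly and only ever changes an out-degree by $\pm 1$, so a degree-count array plus a scalar for the maximum can be kept current in $O(1)$ per elementary change, with your justification of the decrement rule (the decremented vertex itself witnesses $C[j-1]\ge 1$) being the one small point that needs saying. Your clarification that ``$O(1)$ per operation'' means $O(1)$ overhead per elementary reorientation, subsumed by the existing update time, is also the right reading of the statement.
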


\noindent
However, to actually output any such realizing subgraph, a bit more of a data structure is needed:

\begin{lemma} \label{lemma:SDE}
For a fully-dynamic graph $G$, there is an algorithm 
that explicitly maintains a $(1+\eps)$ approximation of the maximum subgraph density in 
$O(\eps^{-6}\log ^3 n \log \alpha )$ total
time per operation, and that
can output a subgraph realizing this density in  $O(c)$ time where $c$ is the size of the output.
\end{lemma}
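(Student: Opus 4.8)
The plan is to combine Corollary~\ref{cor:eps_subgraphdensity} with the structural machinery of Theorem~\ref{thm:structural} to make the approximately-densest subgraph \emph{explicit}, rather than just its density. Recall from the proof of Theorem~\ref{thm:structural} that if $\overrightarrow{G}^b$ maintains Invariant~\ref{inv:degrees}, then there is an index $k \le k_{\max} = \log_{1+\gamma} n$ such that the ``level set'' $T_k = \{v : d^+(v) \ge (1+\eta b^{-1})^{-k}\Delta(\overrightarrow{G}^b)\}$ (with $c=0$, so the correction term vanishes) satisfies $\rho_b(T_{k+1}) \ge (1+\eta b^{-1})^{-k_{\max}}\Delta(\overrightarrow{G}^b)$, which by Corollary~\ref{cor:eps_subgraphdensity} is within a $(1+\eps)$ factor of $b\cdot\rho$. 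Hence $G[T_{k+1}]$ is the subgraph we want to output. So the task reduces to: (i) continuously knowing a valid index $k$, and (ii) being able to enumerate $T_{k+1}$ in time $O(|T_{k+1}|)$.

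First I would maintain, as an auxiliary structure, a bucketing of the vertices of $\overrightarrow{G}^b$ by their exact out-degree $d^+(v)$ — e.g. an array of doubly-linked lists indexed by out-degree value, together with $\Delta(\overrightarrow{G}^b)$ itself (the maximum nonempty index). Every time Algorithm~\ref{alg:insertion_alphaless} or~\ref{alg:deletion_alphaless} changes some $d^+(v)$ by $\pm 1$, we move $v$ between two adjacent buckets in $O(1)$ time and update $\Delta(\overrightarrow{G}^b)$ in $O(1)$; since each update to $G$ triggers $O(b\log\rho) = O(\polylog n)$ such degree changes, this adds no asymptotic overhead, keeping the total update time at $O(\eps^{-6}\log^3 n\log\alpha)$. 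Next, to locate a good $k$: since the thresholds $\tau_i := (1+\eta b^{-1})^{-i}\Delta(\overrightarrow{G}^b)$ for $i = 0,1,\dots,k_{\max}$ partition the out-degree range, and $|T_i|$ is nondecreasing in $i$ with $|T_0|\ge 1$ and $|T_{k_{\max}+1}| \le n$, there must exist $k \le k_{\max}$ with $|T_{k+1}| < (1+\gamma)|T_k|$; I can compute all the $|T_i|$ — and hence find such a $k$ — in $O(k_{\max}\log n) = O(\eps^{-1}\log^2 n)$ time by a prefix-count over the out-degree buckets. This recomputation can either be run lazily/periodically (rebuilt every $\Theta(1)$ updates, which is fine within the stated budget since $k_{\max}\log n$ is dominated by $\eps^{-6}\log^3 n$), or, more carefully, maintained incrementally; either way the stored value of $k$ (and hence the threshold $\tau_{k+1}$) is always valid for the current graph.

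Finally, the output routine: given the stored threshold index $k+1$, we enumerate $T_{k+1} = \{v : d^+(v) \ge \tau_{k+1}\}$ by walking the out-degree buckets from $\Delta(\overrightarrow{G}^b)$ downward until we cross $\tau_{k+1}$, listing each vertex encountered; this takes $O(|T_{k+1}|)$ time. If the induced edge set $E[T_{k+1}]$ is also desired, we spend an additional $O(\sum_{v\in T_{k+1}} d^+(v)) = O(|E[T_{k+1}]|)$ by scanning the stored out-adjacency lists $N^+(v)$ and retaining edges with both endpoints in $T_{k+1}$ (membership testable in $O(1)$ via a flag set during enumeration). Both quantities are $O(c)$ where $c$ is the output size. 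Correctness of the density guarantee for the reported subgraph is exactly Theorem~\ref{thm:structural} with $c=0$ combined with the $(1+\eps)$ bound of Corollary~\ref{cor:eps_subgraphdensity}, after rescaling by $b^{-1}$ to pass from $G^b$ back to $G$ (a subgraph on vertex set $S$ has $\rho_b(S) = b\cdot\rho_G(S)$, so the densest subgraph of $G$ is realized on the same vertex set as that of $G^b$).

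The main obstacle I anticipate is keeping the choice of $k$ \emph{valid at every point in time} without paying more than the update budget: a single edge update can shift $\Delta(\overrightarrow{G}^b)$ and many out-degrees at once, potentially invalidating the previously stored $k$, so one must argue that recomputing $k$ from the out-degree buckets costs only $O(\eps^{-1}\log^2 n) = O(\eps^{-6}\log^3 n)$ and can be folded into each update — or, alternatively, that it suffices to recompute $k$ only at query time (in which case the ``explicitly maintains the density'' claim still follows from the $O(1)$-maintained value $b^{-1}\Delta(\overrightarrow{G}^b)$, and only the subgraph-extraction query pays the $O(\eps^{-1}\log^2 n)$ search plus $O(c)$ reporting). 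I would adopt the latter, cleaner route unless the statement's wording forces the former.
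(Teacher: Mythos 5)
Your overall strategy is the same as the paper's: maintain $\overrightarrow{G}^b$ via Corollary~\ref{cor:eps_subgraphdensity}, use the level sets $T_i$ from Theorem~\ref{thm:structural} (with $c=0$), locate the first $k$ with $|T_{k+1}|<(1+\gamma)|T_k|$ by counting vertices above each of the $O(\eps^{-1}\log n)$ thresholds, and report the level set by a linked traversal in $O(c)$ time. The paper recomputes $k$ after every operation (costing $O(\eps^{-1}\log^2 n)$ extra, within budget), which also matches your first option.

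The one genuine weak point is your data structure for the counting step. You propose a plain array of buckets indexed by exact out-degree and claim you can ``compute all the $|T_i|$ \ldots in $O(k_{\max}\log n)$ time by a prefix-count over the out-degree buckets,'' but a bucket array does not support rank queries in $O(\log n)$: counting the vertices with $d^+(v)\ge\tau_i$ by scanning buckets costs either $\Theta(\Delta(\overrightarrow{G}^b)-\tau_i)$ array slots (and $\Delta(\overrightarrow{G}^b)$ can be $\Theta(b\rho)$, far above polylogarithmic) or, if you sweep only nonempty buckets, up to $\Theta(|T_{k_{\max}+1}|)$ items, which can be $\Theta(n)$. The same issue affects your enumeration of $T_{k+1}$ ``by walking the out-degree buckets downward,'' which passes through empty slots. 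The paper's fix is exactly what is missing here: store the vertices as leaves of a balanced binary search tree sorted by out-degree, with subtree sizes at internal nodes (giving each threshold count in $O(\log n)$) and a doubly linked list threaded through the leaves (giving $O(c)$ reporting); since each operation in $G$ changes only $O(b)=O(\eps^{-2}\log n)$ out-degrees, this tree is maintained in $O(\eps^{-2}\log^2 n)$ extra time per operation. With that substitution your argument goes through.
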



\begin{proof}
We use Corollary~\ref{cor:eps_subgraphdensity} to dynamically maintain an orientation $\overrightarrow{G}^b$ in $O(\eps^{-6}\log^3(n)\log \rho )$ per operation in $G$.
Recall (Theorem~\ref{thm:structural}) that we defined for non-negative integers $i$ the sets:
\[
T_i := \left\{ v \mid d^+(v) \geq \Delta\left(\overrightarrow{G}^b\right) \cdot \left(1 + \eta\cdot b^{-1}\right)^{-i} \right\} 
\]

(note that since we maintain Invariant~\ref{inv:degrees}, the constant $c$ in the previous definition is zero).

Let $k$ be the smallest integer such that $|T_{k+1}| < (1 + \gamma) |T_k|$). Moreover, we showed in Corollary~\ref{cor:eps_subgraphdensity} that $k$ is upper bounded by $O( \eps^{-1} \log n)$.  
Salwani and Wang~\cite{sawlani2020near} show that (the induced subgraph of the vertex set) $T_k$ is an approximation of the densest subgraph of $\overrightarrow{G}^b$ (and therefore of $G$).
We store the vertices of $\overrightarrow{G}^b$ as leaves a balanced binary tree, sorted on their outdegree. 
Since every change in $G$, changes at most $O(b) = O(\eps^{-2} \log n)$ out-degrees in $\overrightarrow{G}$, we can maintain this binary tree in $O(\eps^{-2} \log^2 n)$ additional time per operation in $G$. 

Each internal node of the balanced binary tree stores the size of the subtree rooted at that node. 
Moreover, we store the maximal out-degree $\Delta(\overrightarrow{G^b})$ as a separate integer, and a doubly linked list amongst the leaves.

After each operation in $G$, for each integer $i \in [0, \eps^{-1} \log n]$, we determine how many elements there are in $T_i$ as follows: 
first, we compute the value $V_i = \Delta(\overrightarrow{G}^b) \cdot (1 + \eta\cdot b^{-1})^{-i} - c\sum_{j = 1}^i (1 + \eta\cdot b^{-1})^{-j}$ in $O(1)$ time (using the previous sum $c\sum_{j = 1}^{i-1} (1 + \eta\cdot b^{-1})^{-j}$ from when we computed $V_{i-1}$). 
Then, we identify in $O(\log n)$ time how many vertices have out-degree at least $V_i$ (thus, we determine the size of $T_i$). 
It follows that we identify $T_k$ in $O(\eps^{-1} \log^2 n)$ additional time. We store a pointer to the first leaf that is in $T_k$. 
If we subsequently want to output the densest subgraph of $G$, we traverse the $c$ elements of $T_k$ in $O(c)$ total time by traversing the doubly linked list of our leaves.
\end{proof}

\subparagraph{Related Work}
While results for densest subgraph \cite{BahmaniKV12,BhattacharyaHNT15,EpastoLS15} can be used to estimate maximum degree of the best possible out orientation, it is also interesting in its own right.
Sawlani and Wang~\cite{sawlani2020near} maintain a $(1 - \eps)$-approximate densest subgraph in worst-case time $O(\eps^{-6}\log^4 n )$ per update where they maintain an \emph{implicit} representation of the approximately-densest subgraph. They write that they can, in $O(\log n)$ time, identify the subset $S \subseteq V$ where $G[S]$ is the approximately-densest subgraph and they can report it in $O(|S|)$

\subsection*{Obtaining an almost $(1 + \eps)$ Approximation for Minimal Out-orientation }

By Corollary~\ref{cor:eps_subgraphdensity}, we can dynamically maintain for every graph $G$,  a directed graph $\overrightarrow{G}^b$ (where each edge in $G$ is duplicated $b$ times) such that the maximal out-degree in $\overrightarrow{G}^b$ is at most a factor $(1 + \eps)$ larger than the minimal out-orientation of $G^b$. 
For every edge $(u, v)$ in $G$, we can now store a counter indicating how many edges point (in $G^b$) from $u$ to $v$, or the other way around. 
The naive rounding scheme, states that the edge $(u, v)$ is directed as $\overrightarrow{uv}$ whenever there are more edges directed from $u$ to $v$. For any edge, we can decide its rounding in $O(1)$ time, thus we conclude:

\begin{observation}\label{obs:rounding}
We can maintain for a graph $G$ an orientation $\overrightarrow{G}$ where each vertex has an out-degree of at most 
$(2+\varepsilon)\alpha + 1$ 
 with update time $O(\eps^{-6}\log^3(n)\log \rho )$ per operation.
\end{observation}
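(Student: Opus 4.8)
The plan is to take the fractional (rescaled) orientation supplied by Corollary~\ref{cor:eps_subgraphdensity} and round it by a simple majority vote, arguing that this rounding costs at most a factor of $2$ in the maximum out-degree. Concretely, I would first invoke Corollary~\ref{cor:eps_subgraphdensity} with parameter $\varepsilon' := \varepsilon/2$; this maintains, in $O(\varepsilon'^{-6}\log^3 n \log \rho) = O(\varepsilon^{-6}\log^3 n \log \rho)$ time per update in $G$, an orientation $\overrightarrow{G}^b$ of $G^b$ (with $b \in O(\varepsilon^{-2}\log n)$) satisfying $\rho \le b^{-1}\Delta(\overrightarrow{G}^b) \le (1+\varepsilon')\rho$. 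Along the way, for every edge $(u,v)$ of $G$ I keep a counter of how many of its $b$ parallel copies are currently oriented from $u$ to $v$; each copy-reorientation updates one counter in $O(1)$ time, so this bookkeeping does not change the asymptotic running time.

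Next, define $\overrightarrow{G}$ by orienting $(u,v)$ according to the majority of its $b$ copies in $\overrightarrow{G}^b$ (ties, which only occur for even $b$ with an exact $b/2$ split, broken arbitrarily). An edge of $G$ changes direction only when one of its copies flips, so computing and maintaining this rounding adds only $O(1)$ work per copy-reorientation and is asymptotically free. The key inequality is a charging argument: if $(u,v)$ is oriented $u \to v$ in $\overrightarrow{G}$, then at least $\lceil b/2 \rceil \ge b/2$ of its copies point $u \to v$ in $\overrightarrow{G}^b$. Summing over all out-edges of $u$ in $\overrightarrow{G}$ gives
\[
\tfrac{b}{2}\, d^+_{\overrightarrow{G}}(u) \;\le\; d^+_{\overrightarrow{G}^b}(u) \;\le\; \Delta(\overrightarrow{G}^b) \;\le\; (1+\varepsilon')\,b\,\rho ,
\]
so that $d^+_{\overrightarrow{G}}(u) \le 2(1+\varepsilon')\rho = (2+\varepsilon)\rho$. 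Finally, for the densest subgraph $H$ one has $\rho = |E(H)|/|V(H)| < |E(H)|/(|V(H)|-1) \le \alpha$, hence $\rho \le \alpha$; combined with the integrality of $d^+_{\overrightarrow{G}}(u)$ this yields $d^+_{\overrightarrow{G}}(u) \le \lfloor (2+\varepsilon)\rho \rfloor \le (2+\varepsilon)\alpha + 1$, which is the claimed bound, with update time inherited from Corollary~\ref{cor:eps_subgraphdensity}.

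There is no deep obstacle here — the statement is essentially immediate from Corollary~\ref{cor:eps_subgraphdensity} together with the fact that majority-rounding a $b$-fold fractional orientation loses at most a factor $2$. The points that need care are the constant bookkeeping (the rounding contributes exactly the factor $2$, so $\varepsilon' = \varepsilon/2$ suffices to land at $(2+\varepsilon)\alpha+1$), the even-$b$ tie case (a weak majority still comprises $\ge b/2$ copies, which is all the charging uses), the passage from a density guarantee to an arboricity guarantee via $\rho \le \alpha$, and verifying that touching the $O(b) = O(\varepsilon^{-2}\log n)$ counters per update is negligible against $O(\varepsilon^{-6}\log^3 n \log \rho)$.
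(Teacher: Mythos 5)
Your proposal is correct and matches the paper's own argument: both invoke Corollary~\ref{cor:eps_subgraphdensity}, maintain per-edge counters of how the $b$ copies are oriented, and round by majority, with the weak-majority charging $\tfrac{b}{2}\,d^+_{\overrightarrow{G}}(u)\le d^+_{\overrightarrow{G}^b}(u)\le(1+\varepsilon')b\rho$ and $\rho\le\alpha$ giving the stated bound. The paper leaves these charging details implicit (calling it the ``naive rounding scheme''), so your write-up is simply a more explicit version of the same proof.
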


\noindent
Obtaining a $(1 + \eps)$-approximation of the minimal out-orientation of $G$ is somewhat more work. 
Christiansen and Rotenberg~\cite{christiansenICALP} show how to dynamically maintain an explicit out-orientation on $G$ of at most $(1 + \eps) \alpha + 2$ out-edges. In their proofs, Christiansen and Rotenberg~\cite{christiansenICALP} rely upon the algorithm by Kopelowitz, Krauthgamer, Porat and Solomon~\cite{KopelowitzKPS13}. 
By replacing the KKRS algorithm by ours in a black-box like manner, we obtain the following:

\begin{theorem}
\label{thm:epsapprox}
Let $G$ be a dynamic graph subject to edge insertions and deletions. We can maintain an orientation $\overrightarrow{G}$ where each vertex has an out-degree of at most $(1+\varepsilon)\alpha + 2$ 
 with update time $O(\eps^{-6} \log^3 n \log \alpha)$ per operation in $G$, where $\alpha$ is the arboricity at the time of the update.
\end{theorem}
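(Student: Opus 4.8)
The plan is to use our fast algorithm from Theorem~\ref{thm:fastnonadd} as a black-box replacement for the out-orientation subroutine of Kopelowitz, Krauthgamer, Porat and Solomon~\cite{KopelowitzKPS13} inside the $(1+\eps)$-approximation scheme of Christiansen and Rotenberg~\cite{christiansenICALP}. Concretely, by Corollary~\ref{cor:eps_subgraphdensity} (applied with approximation parameter $\eps' = \Theta(\eps)$ and $b \in O(\eps^{-2}\log n)$) we maintain a directed graph $\overrightarrow{G}^b$ such that $b^{-1}\Delta(\overrightarrow{G}^b)$ is a $(1+\eps')$-approximation of the maximum subgraph density $\rho$ of $G$, in worst-case time $O(\eps^{-6}\log^3 n\log\alpha)$; since $\lceil\rho\rceil$ equals the minimum possible maximum out-degree, $\overrightarrow{G}^b$ is simultaneously a $(1+\eps')$-approximate minimal out-orientation of $G^b$. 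The rounding machinery of~\cite{christiansenICALP} is what turns such an approximate fractional orientation into an explicit integral orientation with only $(1+\eps)\alpha + 2$ out-edges per vertex.

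The key steps, in order, are: (1) recall precisely which guarantees the CR rounding routine demands of its input orientation oracle — essentially, a dynamically maintained $(1+\eps')$-approximate (fractional) out-orientation together with the ability to locally reorient along short augmenting paths, and bounded recourse per update; (2) verify that the orientation $\overrightarrow{G}^b$ produced by Theorem~\ref{thm:fastnonadd} meets each of these requirements — the approximation ratio follows from Theorem~\ref{thm:structural}/Corollary~\ref{cor:eps_subgraphdensity}, the recourse is (as stated in the introduction) a $\log n$ factor below the update time, hence $O(\eps^{-6}\log^2 n\log\alpha)$ reorientations, and the fractional values $\alpha_{u,v}$ take the discretized form $\{0,\tfrac1b,\dots,1\}$ that their rounding tolerates; (3) substitute our oracle for the KKPS oracle in their construction and re-run their analysis verbatim to conclude the out-degree bound $(1+\eps)\alpha+2$; (4) re-account the running time, checking that each elementary operation the CR scheme performs (examining a bounded neighborhood in $\overrightarrow{G}^b$, triggering a local reorientation, updating auxiliary structures) costs no more than what we already pay, so that the total update time remains $O(\eps^{-6}\log^3 n\log\alpha)$ rather than inheriting the $\alpha^2$ factor of~\cite{KopelowitzKPS13}.

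The main obstacle is step (4): the CR algorithm was designed around the KKPS oracle, which has an $\Omega(\alpha)$ per-update cost baked in, so some of their subroutines may be written in a way that implicitly assumes they can afford to scan all out-edges of a vertex. To get the claimed bound we must "slightly open their black box" (as the excerpt puts it) and confirm that every place where they invoke the oracle, or iterate over an adjacency list, either touches only $O(\poly(\eps^{-1},\log n))$ objects or can be rerouted through the perceived-degree / round-robin data structures of Section~\ref{app:improved} at no asymptotic cost. In particular one must check that the local reorientation chains triggered by CR's rounding interact correctly with the bucketed $N^-(u)$ structure and with Invariant~\ref{inv:degrees}, so that maintaining their rounded orientation does not force extra full-neighborhood rescans. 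Once this bookkeeping is done, the approximation guarantee is immediate from their analysis and the time bound is $O(\eps^{-6}\log^3 n\log\alpha)$ as claimed.
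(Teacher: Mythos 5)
Your proposal follows essentially the same route as the paper: maintain $\overrightarrow{G}^b$ with $b\in O(\eps^{-2}\log n)$ via Theorem~\ref{thm:fastnonadd}/Corollary~\ref{cor:eps_subgraphdensity} in place of the KKPS oracle, then feed it to the Christiansen--Rotenberg rounding scheme, with the crux being exactly the accounting you flag in step (4) — namely that the CR rounding costs only $O(\log n)$ per \emph{combinatorial change} (edge flip) in $\overrightarrow{G}^b$, and our algorithm's recourse is a $\log n$ factor below its update time, so the total stays $O(\eps^{-6}\log^3 n\log\alpha)$. This matches the paper's argument, which likewise "opens the black box" of CR's Theorem~26 only far enough to verify that the rounding is charged per reorientation rather than per unit of oracle time.
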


\noindent
The proof follows immediately from the proof Theorem 26 by Christiansen and Rotenberg~\cite{christiansenICALP} (using Corollary~\ref{cor:eps_subgraphdensity} as opposed to~\cite{KopelowitzKPS13}). 
For the reader's convenience, we will briefly elaborate on how this result is obtained and how we can apply Corollary~\ref{cor:eps_subgraphdensity}.
For the full technical details, we refer to the proof of Theorem 26 in~\cite{christiansenICALP}.

\begin{enumerate}
    \item 
Christiansen and Rotenberg consider a graph $G$ with arboricity $\alpha$.
Moreover, they construct a directed graph $\overrightarrow{G}^b$ which is the graph $G$ where every edge in $G$ is duplicated $b \in O(\eps^{-2} \log n)$ times.\footnote{In \cite{christiansenICALP}, Christiansen and Rotenberg choose the duplication constant to be $\gamma$ and write $G^\gamma$.} 
Every operation in $G$ triggers $O(b)$ operations in $\overrightarrow{G}^b$.

\item On the graph $G^b$, they run the algorithm by~\cite{KopelowitzKPS13} to maintain an orientation of $\overrightarrow{G}^b$ where each vertex has an out-degree of at most $\Delta(\overrightarrow{G}^b) = (1 + \eps)\alpha \cdot b + \log_{(1 + \eps)} n$. 
The KKPS algorithm uses per operation in $G^b$:\footnote{Christiansen and Rotenberg deliberately use the adaptive variant of KKPS.}
\begin{itemize}[noitemsep, nolistsep]
    \item $O\left( \left(\Delta(\overrightarrow{G}^b) \right)^2 \right) = O \left( (1 + \eps)^2 \alpha^2 b^2 + \eps^{-4} \log^2 n \right)  = O( \eps^{-4} \alpha^2 \log^2 n)$ time, and
    \item $O\left( \Delta(\overrightarrow{G}^b) \right) = O \left( (1 + \eps)\alpha b + \eps^{-2} \log n \right)  = O( \eps^{-2} \alpha \log n)$ \emph{combinatorial changes } in $\overrightarrow{G}^b$.
    (here, a combinatorial change either adds, removes, or flips an edge in $\overrightarrow{G}^b$).
\end{itemize}

\item Finally, they deploy a clever rounding scheme to transform the orientation $\overrightarrow{G}^b$ into an orientation of $G$ where the out-degree of each vertex in $\overrightarrow{G}$ is at most a factor $\frac{1}{b}$ the out-orientation of $\overrightarrow{G}^b$, plus two.
Thus, they ensure that each vertex has an out-degree of at most: 
\[
(1 + \eps) \alpha + b^{-1} \log_{1 + \eps} n  + 2 \leq (1 + \eps) \alpha + \frac{\eps^2}{\log n} \cdot \frac{\log n}{\eps^2} + 2 = (1 + \eps) \alpha + 2
\]
They achieve this in $O(\log n)$ additional time per combinatorial change in $\overrightarrow{G}^b$.
Specifically:
\begin{itemize}
    \item They consider for every edge $(u, v)$ in $G$ its partial orientation (i.e. how many edges in $G^b$ point from $u$ to $v$ or vice versa). 
If the partial orientation contains sufficiently many edges directed from $u$ to $v$, the edge in $G$ gets rounded (directed from $u$ to $v$). 
\item Let $H$ be a (not necessarily maximal) set of  edges in $G$ whose direction can be determined in this fashion. They call $H$ a \emph{refinement}. Christiansen and Rotenberg choose $H$ such that in the directed graph $H$ each vertex has an out-degree of at most $(1 + \eps) \alpha$.
\item 
Christiansen and Rotenberg show that  $G - H$ is a forest.
For all edges in $G - H$, they no longer explicitly store the $b$ copies in $\overrightarrow{G}^b$. Instead, they store for edges in $G - H$ their (partial) orientation as an integer in $[0, b]$. 
The forest $G - H$ gets stored in a top tree where each interior node stores the minimal and maximal partial orientation of all its children.
For any path or cycle in $G - H$, they can increment or decrement all orientation integers by 1 in $O(\log n)$ time by lazily updating these maxima and minima in the top tree. 
For each edge in $G - H$, one can obtain the exact partial orientation in $O(\log n)$ additional time by adding all lazy updates in the root-to-leaf path of the top tree.
\item 
In addition, they show how to dynamically maintain a $2$-orientation on the forest $G - H$ in $O(\log n)$ update time per insertion in the forest. Adding the directed edges from the forest to $G$ ensures that each vertex has an out-degree of at most $(1 + \eps) \alpha + 2$.
\item For each combinatorial change in $\overrightarrow{G}^b$, they spend $O(\log n)$ time.
Specifically:
\begin{itemize}
    \item each combinatorial change in $\overrightarrow{G}^b$ may remove an edge from the forest. The edge can be rounded in $O(1)$ time and removed from the top tree in $O(\log n)$ time. 
    \item each combinatorial change may force an edge in the refinement $H$ out of the refinement, and into the forest (creating a cycle).
    \item When creating a cycle, the authors augment the cycle such that at least one edge may be added to the refinement. They (implicitly) increment or decrement all orientation integers along the cycle using the lazy top tree in $O(\log n)$ total time. 
    \item Augmenting a cycle causes the out-degree to remain the same for all elements on the cycle. Hence, the Invariants of KKPS (and our Invariant~\ref{inv:degrees}) to stay unchanged and the augmentation does not trigger any further operations in $\overrightarrow{G}^b$.
\end{itemize}
\item The final edge along the augmented path may subsequently be rounded and added to $H$. Thus, spending $O(\log n)$ time per combinatorial change in $\overrightarrow{G}^b$.
\end{itemize}
\end{enumerate}

\noindent
It follows through these three steps that the algorithm in \cite{christiansenICALP} has a running time of: 
\[
O\left( b \cdot \left( \left( \Delta(\overrightarrow{G}^b) \right)^2 +  \Delta(\overrightarrow{G}^b) \log n \right) \right) = O\left( \eps^{-6} \alpha^2 \log^3 n \right)
\]

\noindent
Given the results in this paper, we can instead apply our results as follows: 

\begin{enumerate}
    \item We again choose $b \in O(\eps^{-2} \log n)$. Each operation in $G$ triggers $O(b)$ operations in $\overrightarrow{G}^b$. 
    \item We apply Theorem~\ref{thm:fastnonadd} (or conversely Corollary~\ref{cor:eps_subgraphdensity}) to maintain $\overrightarrow{G}^b$ such that each vertex has an out-degree of at most $\Delta(\overrightarrow{G}^b) = (1 + \eps) \alpha b$. 
    We proved that this algorithm takes: 
    \begin{itemize}
        \item  $O(b^2 \log \alpha)$ time per operation in $\overrightarrow{G}^b$, but
        \item only triggers $O(b \log \alpha)$ combinatorial changes (edge flips) in $\overrightarrow{G}^b$.
    \end{itemize}
    \item Finally, we apply the rounding scheme by Christiansen and Rotenberg which requires $O(\log n)$ time per \emph{combinatorial change} in $\overrightarrow{G}^b$.
\end{enumerate}

\noindent
Our total running time is (our algorithm + rounding scheme per combinatorial change):
\[
O\left(b \cdot b \cdot b \log \alpha  + b \cdot b \log \alpha \cdot \log n \right) = O\left(\eps^{-6} \log^3 n \log \alpha\right).
\]

\subparagraph{Related Work}
Historically, four criteria are considered when designing dynamic out-orientation algorithms: the maximum out-degree, the update time (or the recourse), amortized versus worst-case updates, and the adaptability of the algorithm to the current arboricity. 

Brodal and Fagerberg~\cite{Brodal99dynamicrepresentations} were the first to consider the out-orientation problem in a dynamic setting. 
They showed how to maintain an $\mathcal{O}(\alpha_{\max})$ out-orientation with an amortized update time of $\mathcal{O}(\alpha_{\max}+ \log{n})$, where $\alpha_{\max}$ is the maximum arboricity throughout the entire update sequence.
Thus, their result is adaptive to the current arboricity as long as it only increases. 
He, Tang, and Zeh~\cite{HeTZ14} and Kowalik~\cite{10.1016/j.ipl.2006.12.006} provided different analyses of Brodal and Fagerbergs algorithm resulting in faster update times at the cost of worse bounds on the maximum out-degree of the orientations. 
Henzinger, Neumann, and Wiese~\cite{henzinger2020explicit} gave an algorithm able to adapt to the current arboricity of the graph, achieving an out-degree of $\mathcal{O}(\alpha)$ and an amortized update time \emph{independent} of $\alpha$, namely $O(\log^2 n)$.
Kopelowitz, Krauthgamer, Porat, and Solomon~\cite{KopelowitzKPS13} showed how to maintain an $\mathcal{O}(\alpha+\log n)$ out-orientation with a worst-case update time of $\mathcal{O}(\alpha^2 + \log^2 n)$ fully adaptive to the arboricity. 
Christiansen and Rotenberg~\cite{christiansenICALP,christiansenMFCS} lowered the maximum out-degree to $(1+\varepsilon)\alpha+2$ incurring a worse update time of $\mathcal{O}(\varepsilon^{-6}\alpha^2\log^3 n)$.
Finally, Brodal and Berglin~\cite{berglinetal:LIPIcs:2017:8263} gave an algorithm with a different trade-off; they show how to maintain an $\mathcal{O}(\alpha_{\max}+\log n)$ out-orientation with a worst-case update time of $\mathcal{O}(\log n)$. This update time is faster and independent of $\alpha$, however the maximum out-degree does not adapt to the current value of $\alpha$.

\section{Applications}
\label{app:applications}
In this section, we show how to combine our two trade-offs for out-orientations (theorems \ref{thm:fast_butworse}, \ref{thm:fastnonadd}) with existing or folklore reductions, obtaining improved algorithms for maximal matching, arboricity decomposition, and matrix-vector product.

\subsection{Maximal matchings}
  
  For our application in maximal matchings, we first revisit the following result.
  The authors have not seen this theorem stated in this exact generality in the literature, but similar statements appear in~\cite{PelegS16}, \cite{NeimanS16}, and \cite{berglinetal:LIPIcs:2017:8263}
  
\begin{lemma}[Folklore] \label{thm:folklore}
Suppose one can maintain an edge-orientation of a dynamic graph, 
that has $t_u$ update time, 
that for each update performs at most $r_u$ edge re-orientations (direction changes), and that maintains a maximal out-degree of $\le n_o$. Then there is a dynamic maximal matching algorithm\footnote{When the update time $t_u$ is worst-case, the number of re-orientations $r_u$ is upper bounded by $t_u$.} whose update time is $O(t_u+r_u+n_o)$.
\end{lemma}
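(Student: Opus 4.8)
The plan is to maintain, alongside the out-orientation, an explicit maximal matching $M$ together with, for every vertex $v$, a flag indicating whether $v$ is currently matched. The key structural fact is that in a maximal matching, an unmatched vertex must have all of its neighbors matched; and to re-establish maximality after an edge update it suffices to locally repair a constant number of vertices, provided we can quickly enumerate the neighbors of a vertex whose matched/unmatched status may have changed. The orientation is exactly the tool that lets us enumerate neighbors cheaply: out-neighbors of $v$ are listed in $O(n_o)$ time, and in-neighbors of $v$ are precisely those vertices $w$ with $v \in N^+(w)$, so we additionally store at each vertex $v$ a list $N^-(v)$ of its in-neighbors, updated in $O(1)$ time whenever an edge incident to $v$ is (re-)oriented.

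First I would handle the edge update itself. When an edge $\{u,v\}$ is deleted: if $\{u,v\} \notin M$ nothing changes; if $\{u,v\} \in M$, remove it, mark $u$ and $v$ unmatched, and run a \emph{repair} routine on each of $u$ and $v$ (described below). When an edge $\{u,v\}$ is inserted: if both $u$ and $v$ are unmatched, add $\{u,v\}$ to $M$ and mark both matched; otherwise maximality is preserved and we do nothing to $M$. In all cases we also feed the update to the out-orientation data structure, which costs $t_u$ and produces at most $r_u$ re-orientations; each re-orientation of an edge $\{x,y\}$ requires an $O(1)$ update to $N^-(x)$ or $N^-(y)$, so the bookkeeping for the in-neighbor lists costs $O(r_u)$ total. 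Crucially, re-orientations never change $M$ or any matched-flag, so they cannot destroy maximality.

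Next, the \emph{repair} routine on a vertex $w$ that has just become unmatched: scan all neighbors of $w$ — the out-neighbors via $N^+(w)$ in $O(n_o)$ time, the in-neighbors via $N^-(w)$ — and if some neighbor $z$ is unmatched, add $\{w,z\}$ to $M$ and mark both $w$ and $z$ matched. Scanning $N^-(w)$ is the only subtlety for the running time: a vertex can have $\Omega(n)$ in-neighbors, so we cannot afford to scan it fully. The standard fix is to observe that in this setting a vertex only becomes unmatched $O(1)$ times per update and to charge the in-neighbor scan differently; the clean route, and the one I would take, is to note that we only ever need to scan $N^-(w)$ until we either find an unmatched in-neighbor or exhaust the list, and that whenever repair at $w$ fails to match $w$, \emph{every} neighbor of $w$ was matched at that moment — which is exactly the maximality certificate we need, so correctness holds regardless of cost; for the running time, I would instead maintain at each vertex a count of unmatched in-neighbors (incremented/decremented in $O(1)$ whenever a neighbor changes status, which happens $O(1)$ times per update and $O(r_u)$ times for re-orientations), letting repair locate an unmatched in-neighbor in $O(1)$ when the count is positive and skip the scan entirely when it is zero. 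This makes each repair cost $O(n_o)$ for the out-side plus $O(1)$ for the in-side.

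Finally, I would bound the cascade. A single update directly unmatches at most two vertices ($u$ and $v$ from a deleted matching edge); repairing each of them matches at most one previously-unmatched vertex and unmatches nobody new, so the total number of repair calls per update is $O(1)$. Hence the matching maintenance costs $O(n_o)$, the orientation costs $t_u$, and the in-neighbor-list and counter bookkeeping costs $O(r_u)$, for a total of $O(t_u + r_u + n_o)$ worst-case per update, as claimed. The footnote's remark that $r_u \le t_u$ when $t_u$ is worst-case is immediate, since each re-orientation takes at least unit time, so in the worst-case regime the bound simplifies to $O(t_u + n_o)$. The main obstacle in writing this up carefully is the in-neighbor scan: making sure the "count of unmatched in-neighbors" is updated in $O(1)$ at \emph{every} event (status flips and re-orientations alike) and that these events number $O(1 + r_u)$ per update — everything else is routine.
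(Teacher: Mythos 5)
Your overall strategy is the same as the paper's: maintain the matching explicitly, repair locally after each update, scan the $\le n_o$ out-neighbors directly, and handle the potentially huge in-neighborhood by keeping per-vertex bookkeeping that is updated in $O(1)$ whenever a neighbor's matched status flips or an edge is re-oriented. The correctness argument and the cost accounting ($O(1)$ status changes per update, each costing $O(n_o)$ to broadcast to out-neighbors; $O(1)$ bookkeeping per re-orientation; $O(1)$ repairs per update) all match the paper.

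There is, however, one concrete defect in the mechanism you chose for the in-side. A \emph{count} of unmatched in-neighbors lets you detect in $O(1)$ that an unmatched in-neighbor exists, but it does not let you \emph{locate} one in $O(1)$: if the count is positive you would still have to scan $N^-(w)$, which can have $\Omega(n)$ entries, to find the witness. As written, the step ``letting repair locate an unmatched in-neighbor in $O(1)$ when the count is positive'' fails. The fix is exactly what the paper does: instead of a single list $N^-(w)$ plus a counter, keep \emph{two} doubly-linked in-lists per vertex --- one holding the matched in-neighbors and one holding the available (unmatched) ones --- with each vertex storing back-pointers to its positions in its out-neighbors' lists. Then a status change of $v$ still costs $O(1)$ per out-neighbor of $v$ (move $v$ between the two in-lists), a re-orientation still costs $O(1)$ (move the endpoint between in- and out-structures), and repair at $w$ finds an available in-neighbor by reading the head of the available in-list in $O(1)$. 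With that substitution your argument goes through and gives the claimed $O(t_u + r_u + n_o)$ bound.
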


\begin{proof}[Proof of Lemma~\ref{thm:folklore}]
Each vertex maintains two doubly-linked lists over its in-neighbors (one for the matched, and one for the available in-neighbors) called \emph{in-lists}
and a doubly-linked list of its out-neighbors called the \emph{out-list}. When a vertex becomes available because of an edge deletion, it may match with the first available in-vertex if one exists.
If no such in-vertex exists, it may propose a matching to its $\le n_o$ out-neighbors in the out-list, and then match with an arbitrary one of these if any is available. When a vertex $v$ changes status between matched and available, it notifies all vertices in its out-list, who move $v$ between in-lists in $O(1)$ time. Finally, when an edge changes direction, each endpoint needs to move the other endpoint between in- and out-lists. 

The bookkeeping of moving vertices between unordered lists takes constant time.
For each edge insertion or deletion, we may spend additionally $O(n_o)$ time  proposing to or notifying to out-neighbors to a vertex, for at most two vertices for each deletion or insertion respectively.
\end{proof}

\noindent
With this application in mind, some desirable features of out-orientation algorithms become evident: 
\begin{itemize}[noitemsep, nolistsep]
    \item we want the number of out-edges $\mout$ to be (asymptotically) low, and
    \item we want the update time to be efficient, preferably deterministic and worst-case.
\end{itemize}

\noindent
Here, a parameter for having the number of out-edges asymptotically as low as possible, can be sparseness measures such as the maximal subgraph density or the arboricity of the graph. 
An interesting challenge for dynamic graphs is that the density may vary through the course of dynamic updates, and we prefer not to have the update time in our current sparse graph to be affected by a brief occurrence of density in the past. 
In the work of Henzinger, Neumann, and Wiese, they show how it is possible to adjust to the current graph sparseness in the amortized setting~\cite{henzinger2020explicit}. In this paper, however, we are interested in the case where both the update time is worst-case and the number of re-orientations is bounded. 
One previous approach to this challenge is to take a fixed upper bound on the sparseness as parameter to the algorithm, and then use $\log n$ data structures in parallel~\cite{sawlani2020near}. Since we want the number of re-orientations to also be bounded, we cannot simply change between two possibly very different out-orientations that result from different bounds on the sparseness. Any scheme for deamortising the switch between structures would be less simple than the approach we see in this paper.

\begin{corollary}\label{cor:match}
    There is a deterministic dynamic maximal matching algorithm with worst-case $O(\alpha + \log ^2 n \log \alpha)$ update time, where $\alpha$ is the current arboricity of the dynamic graph. The algorithm also implies a $2$-approximate vertex cover in the same update time. 
\end{corollary}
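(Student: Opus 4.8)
The plan is to combine the fast out-orientation of Theorem~\ref{thm:fast_butworse} with the folklore reduction of Lemma~\ref{thm:folklore}. First I would instantiate Theorem~\ref{thm:fast_butworse} with its stated parameters ($c = 2$, $b = 6$, and $\eta b^{-1} < \frac{1}{\log n \max\{\gamma^{-1},1\}}$) to \emph{explicitly} maintain an orientation $\overrightarrow{G}$ in which every vertex has out-degree $n_o = O(\rho + \log n) = O(\alpha + \log n)$ (using $\rho \le \alpha$), with worst-case update time $t_u = O(\log^2 n \log \rho) = O(\log^2 n \log \alpha)$. The point I would stress is that the number $r_u$ of edge re-orientations per update is only $O(\log n \log \alpha)$ — a $\log n$ factor below $t_u$ — because every update to $G$ triggers $O(b) = O(1)$ edge insertions/deletions in $\overrightarrow{G}^b$, each of which flips a directed chain of length $O(\log n \log \rho)$ (the ``Recursions'' bound in the proof of Theorem~\ref{thm:fast_butworse}), and the rounding $\overrightarrow{G}^b \mapsto \overrightarrow{G}$ changes at most $O(b) = O(1)$ orientations of $\overrightarrow{G}$ per such flip.

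Next I would feed the triple $(t_u, r_u, n_o) = \bigl(O(\log^2 n \log \alpha),\, O(\log n \log \alpha),\, O(\alpha + \log n)\bigr)$ into Lemma~\ref{thm:folklore}, which yields a deterministic worst-case maximal matching algorithm with update time $O(t_u + r_u + n_o) = O(\alpha + \log^2 n \log \alpha)$; it is adaptive to the current arboricity because $t_u$, $r_u$, and $n_o$ all depend on the current density $\rho \le \alpha$ rather than on any past value. For the vertex cover claim I would invoke the classical observation that the set $C$ of endpoints of a maximal matching $M$ is a vertex cover with $|C| = 2|M|$, while any vertex cover has size at least $|M|$; hence $C$ is a $2$-approximate minimum vertex cover, and it is maintained at no extra asymptotic cost since the matched-vertex set is updated alongside $M$.

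The main subtlety to address — rather than a genuine obstacle — is precisely the one flagged in the Related Work discussion: Lemma~\ref{thm:folklore} requires a \emph{single, explicit} orientation with a worst-case bound on $r_u$, which is exactly what rules out both amortized orientation algorithms and the implicit multi-estimate scheme of Sawlani and Wang (switching between estimates of $\rho$ there makes $r_u$ unbounded). I would therefore make explicit that Theorem~\ref{thm:fast_butworse} maintains one orientation $\overrightarrow{G}$ stored explicitly with worst-case $r_u$, so the reduction applies verbatim and no deamortization or coordination between estimates is needed. A final routine check is that the additive ``$+2$'' slack of Invariant~\ref{inv:degrees_additive} and the $\overrightarrow{G}^b \to \overrightarrow{G}$ rounding inflate $n_o$ only by constant factors and a low-order $O(\log n)$ term, all of which are already absorbed into $O(\alpha + \log^2 n \log \alpha)$.
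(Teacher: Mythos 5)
Your proposal is correct and follows exactly the paper's intended route: instantiate Theorem~\ref{thm:fast_butworse} to get an explicit orientation with $n_o = O(\alpha+\log n)$, $t_u = O(\log^2 n\log\alpha)$, and $r_u \le t_u$ (indeed a $\log n$ factor smaller, as the paper notes), then plug into the folklore reduction of Lemma~\ref{thm:folklore}, with the vertex cover claim following from the standard endpoints-of-a-maximal-matching argument. No gaps.
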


\subparagraph{Related Work}
Matchings have been widely studied in dynamic graph models. Under various plausible conjectures, we know that a maximum matching cannot be maintained even in the incremental setting and even for low arboricity graphs (such as planar graphs) substantially faster than $\Omega(n)$ update time \cite{AbboudD16,AbboudW14,HenzingerKNS15,KopelowitzPP16,Dahlgaard16}.
Given this, we typically relax the requirement from maximum matching to maintaining matchings with other interesting properties. 
One such relaxation is to require that the maintained matching is only \emph{maximal}. The ability to retain a maximal matching is frequently used by other algorithms, notably it immediately implies a $2$-approximate vertex cover. 
In incremental graphs, maintaining a maximal matching is trivially done with the aforementioned greedy algorithm. 
For decremental\footnote{Maintaining an \emph{approximate maximum matching} decrementally is substantially easier than doing so for fully dynamic graphs. Indeed, recently work by \cite{AssadiBD22} matches the running times for approximate maximum matching in incremental graphs \cite{GLSSS19}. However, for maximal matching, we are unaware of work on decremental graphs that improves over fully dynamic results.} or fully dynamic graphs, there exist a number of trade-offs (depending on whether the algorithm is randomized or determinstic, and whether the update time is worst case or amortized).  Baswana, Gupta, and Sen~\cite{BaswanaGS15} and
Solomon~\cite{Solomon16} gave randomized algorithms maintaining a maximal matching with $O(\log n)$ and $O(1)$ amortized update time. These results were subsequently deamortized by Bernstein, Forster, and Henzinger \cite{BernsteinFH21} with only a $\polylog n$ increase in the update time. For deterministic algorithms, maintaining a maximal matching is substantially more difficult.
Ivkovic and Lloyd~\cite{IvkovicL93} gave a deterministic algorithm with $O((n+m)^{\sqrt{2}/2})$ worst case update time. This was subsequently improved to $O(\sqrt{m})$ worst case update time by Neiman and Solomon \cite{NeimanS16}, which remains the fastest deterministic algorithm for general graphs.

Nevertheless, there exist a number of results improving this result for low-arboricity graphs. Neiman and Solomon \cite{NeimanS16} gave a deterministic algorithm that, assuming that the arboriticty of the graph is always bounded by $\alpha_{\max}$, maintains a maximal matching in amortized time $O(\min_{\beta>1}\{\alpha_{\max} \cdot \beta + \log_{\beta} n\})$, which can be improved to $O(\log n/\log\log n)$ if the arboricity is always upper bounded by a constant. Under the same assumptions, He, Tang, and Zeh \cite{HeTZ14} improved this to $O(\alpha_{\max} + \sqrt{\alpha_{\max}\log n})$ amortized update time.
Without requiring that the arboricity be bounded at all times, the work by Kopelowitz, Krauthgsamer, Porat, and Solomon~\cite{KopelowitzKPS13} implies a deterministic algorithm with $O(\alpha^2 + \log^2 n)$ worst case update time, where $\alpha$ is the arboricity of the graph when receiving an edge-update.

\subsection{Dynamic $\Delta+1$ coloring}

\begin{lemma} \label{thm:folkloreish}
Suppose one can maintain an edge-orientation of a dynamic graph, 
that has $t_u$ update time, 
that for each update performs at most $r_u$ edge re-orientations (direction changes), and that maintains a maximal out-degree of $\le n_o$. Then there is a dynamic $\Delta+1$-coloring algorithm whose update time is $O(t_u+r_u+n_o)$.
\end{lemma}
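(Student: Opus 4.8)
The plan is to mirror the proof of Lemma~\ref{thm:folklore}, replacing the matched/available in-lists by a per-vertex structure that tracks the \emph{colours} of a vertex's in-neighbours. Each vertex $v$ would store: its current colour $\chi(v)$; the lists $N^+(v),N^-(v)$ with the same $O(1)$-per-reorientation bookkeeping as in Lemma~\ref{thm:folklore} (an edge flip moves each endpoint between the other's in/out list; total $O(r_u)$); a count array $c_v[\cdot]$ of length $n$, zero-initialised, where $c_v[\chi]$ is the number of in-neighbours $u\in N^-(v)$ with $\chi(u)=\chi$ (this is $O(n)$ space per vertex, which I will not try to optimise); and a doubly linked list $L_v$ of the colours $\chi\in\{1,\dots,\deg(v)+1\}$ with $c_v[\chi]=0$, with back-pointers so additions and removals cost $O(1)$. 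The invariant I would maintain is $\chi(v)\le\deg(v)+1\le\Delta+1$ for every $v$, together with properness; the invariant alone certifies that the colouring uses at most $\Delta+1$ colours. The orientation is used exactly as in the matching reduction: a vertex may scan all of $N^+(v)$ ($\le n_o$ vertices) on demand but never scans $N^-(v)$; information about in-neighbours reaches $v$ only via the rule that whenever a vertex $u$ recolours it notifies all of $N^+(u)$ ($\le n_o$ vertices), and $v\in N^+(u)$ precisely when $u\in N^-(v)$, so $c_v$ stays correct.

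For an insertion of $(u,v)$, I would first invoke the orientation algorithm ($t_u$ time, $\le r_u$ flips, each processed in $O(1)$); say the edge becomes $u\to v$. Update $c_v[\chi(u)]$ and $L_v$, and update $\deg(u),\deg(v)$, each of which may push exactly one colour in or out of $L_u$ resp.\ $L_v$, decided by an $O(1)$ lookup in the $c$-array. If $\chi(u)\ne\chi(v)$ we are done. If $\chi(u)=\chi(v)$, recolour $v$: scan $N^+(v)$ to collect the set $S$ of colours of out-neighbours ($|S|\le d^+(v)\le n_o$, held as a bitmap over $\{1,\dots,\deg(v)+1\}$), then walk $L_v$ until a colour $\chi^\star\notin S$ is found. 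The counting fact that makes this fast is $|L_v|\ge(\deg(v)+1)-d^-(v)=d^+(v)+1$, since at most $d^-(v)$ of the colours $\{1,\dots,\deg(v)+1\}$ are blocked by an in-neighbour; hence at most $d^+(v)$ members of $L_v$ lie in $S$, so probing $\le d^+(v)+1\le n_o+1$ of them suffices. Set $\chi(v)\gets\chi^\star$, fix $c_v,L_v$ in $O(1)$, and notify every $w\in N^+(v)$ to fix $c_w,L_w$ in $O(1)$ each, for $O(n_o)$ total. Since $\chi^\star$ avoids every neighbour of $v$, no new monochromatic edge is created, so the recolouring does not cascade; and $\chi^\star\le\deg(v)+1$ restores the invariant. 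A deletion is symmetric: run the orientation algorithm first; a deletion creates no monochromatic edge, so the only possible violations are the colours of the (at most two) endpoints whose degree dropped now exceeding their degree-plus-one, and each such vertex is recoloured by the same $O(n_o)$ procedure.

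Putting the costs together: $t_u$ for the orientation, $O(r_u)$ for reorientation bookkeeping, and $O(n_o)$ for the constantly-many recolourings that an update can trigger (each $O(n_o)$), giving update time $O(t_u+r_u+n_o)$. The one step that is not pure bookkeeping — and the main thing I would need to argue carefully — is that a recoloured vertex locates a globally free colour in $O(n_o)$ time despite having unbounded in-degree; this is exactly what the pairing of the notification scheme (keeping $c_v$ exact, affordable because each notifier has out-degree $\le n_o$) with the \emph{bounded-range} free-list $L_v$ over the palette $\{1,\dots,\deg(v)+1\}$ (guaranteeing more than $d^+(v)$ free candidates, so the out-neighbour scan cannot exhaust them) buys us. Beyond that, I would just double-check the two easy invariants used above: a single recolouring never triggers another (the new colour dodges all neighbours), and $\chi(v)\le\deg(v)+1$ is re-established after every degree change, which together cap the number of recolourings per update at a constant.
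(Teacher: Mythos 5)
Your proposal is correct and follows essentially the same route as the paper's proof: maintain per-vertex in-free-colour structures over the palette $\{1,\dots,\deg(v)+1\}$ kept accurate by having each recoloured vertex notify its $\le n_o$ out-neighbours, recolour by scanning out-neighbours and using the counting fact that at least $d^+(v)+1$ palette colours are in-free, and charge $O(1)$ bookkeeping per edge re-orientation. The only differences are cosmetic implementation choices (a length-$n$ count array versus the paper's degree-sized \texttt{taken}/\texttt{free} arrays), so there is nothing further to add.
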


\begin{proof}
For a vertex $v$, say a color is \emph{in-free} if no in-neighbor of $v$ has that color. 
For a vertex of degree $d$, keep a doubly linked list of in-free colors from the palette 
$0,1,\ldots,d$. Keep an array \texttt{taken} of size $d+1$ where the $i$'th entry points to a doubly-linked list of in-neighbors of color $i$, and an array \texttt{free} of size $d+1$ where the $i$'th entry points to the $i$'th color in the list of in-free colors if the $i$'th color is in-free.

The color of a vertex $v$ is found by finding a color that is both in-free and out-free: examine the $\le n_o$ out-neighbors, and use the \texttt{free}-array to temporarily move the $\le n_o$ out-taken colors to a list \texttt{out-taken}. Give $v$ an arbitrary free color from the remaining list, and undo the \texttt{out-taken} list. This takes $O(n_o)$ time, and gives $v$ a color between $0$ and its degree.

When an edge changes direction, this incurs $O(1)$ changes to linked lists and pointers. When an edge update incurs $r_u$ edge re-orientations, we thus have $O(r_u)$ such changes. When an edge is inserted/deleted from a properly colored graph, at most one vertex needs to be recolored, either because there is a color conflict, or because its color number is larger than its degree. This vertex can be recolored in $O(n_o)$ time. Thus, the total time per edge insertion or deletion is $O(t_u + r_u + n_o)$.
\end{proof}

\begin{corollary}\label{cor:col}
    There is a deterministic dynamic $\Delta+1$ coloring algorithm with worst-case $O(\alpha + \log ^2 n \log \alpha)$ update time, where $\alpha$ is the current arboricity of the dynamic graph. 
\end{corollary}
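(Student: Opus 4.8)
The plan is to obtain Corollary~\ref{cor:col} by composing the out-orientation algorithm of Theorem~\ref{thm:fast_butworse} with the generic reduction of Lemma~\ref{thm:folkloreish}. First I would invoke Theorem~\ref{thm:fast_butworse} to maintain an explicit orientation $\overrightarrow{G}$ of the dynamic graph in which every vertex has out-degree $O(\rho+\log n)$, where $\rho$ is the \emph{current} maximum subgraph density, in worst-case time $O(\log^2 n\log\rho)$ per edge update. Since for every $S\subseteq V$ with $|S|\ge 2$ we have $\frac{|E[S]|}{|S|}\le\frac{|E[S]|}{|S|-1}$, it follows that $\rho\le\alpha$, so $O(\rho+\log n)=O(\alpha+\log n)$ and $\log\rho=O(\log\alpha)$ (the degenerate case $\alpha\le 1$, where $G$ is a forest, being handled trivially). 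Thus the orientation subroutine supplies parameters $n_o=O(\alpha+\log n)$ and $t_u=O(\log^2 n\log\alpha)$; and because $t_u$ is a worst-case bound, the per-update recourse satisfies $r_u\le t_u=O(\log^2 n\log\alpha)$ as well (one could instead cite the sharper $r_u=O(\log n\log\alpha)$ recourse bound, but it is not needed).

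Next I would feed these three quantities into Lemma~\ref{thm:folkloreish}, which turns any such orientation data structure into a deterministic dynamic $\Delta+1$-coloring algorithm with update time $O(t_u+r_u+n_o)$. Substituting gives
\[
O(t_u+r_u+n_o)=O\bigl(\log^2 n\log\alpha+\alpha+\log n\bigr)=O(\alpha+\log^2 n\log\alpha),
\]
which is exactly the claimed bound. Determinism and the worst-case guarantee are inherited from both ingredients, and arboricity-adaptivity follows because every quantity above is controlled by the \emph{current} density $\rho\le\alpha$ and never by past values of the density.

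I do not expect a real obstacle here: the substance sits entirely in Theorem~\ref{thm:fast_butworse} and in the reduction of Lemma~\ref{thm:folkloreish}, and what remains is parameter bookkeeping plus the elementary inequality $\rho\le\alpha$. The one point I would double-check when writing it out is that the reduction truly pays only $O(1)$ per reoriented edge and $O(n_o)$ only for the (at most one) vertex whose color must be repaired after an insertion or deletion — in particular that a reorientation chain produced by the orientation subroutine does not set off a recoloring cascade; this is precisely what the proof of Lemma~\ref{thm:folkloreish} establishes. I would also remark that plugging in Theorem~\ref{thm:fastnonadd} instead would yield the stronger out-degree $O(\alpha)$ but a worse $O(\log^3 n\log\alpha)$ update time, so Theorem~\ref{thm:fast_butworse} is the appropriate trade-off here, exactly as in the maximal matching application of Corollary~\ref{cor:match}.
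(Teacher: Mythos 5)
Your proposal is correct and matches the paper's (implicit) derivation exactly: Corollary~\ref{cor:col} is obtained by plugging the orientation of Theorem~\ref{thm:fast_butworse} (with $n_o=O(\rho+\log n)=O(\alpha+\log n)$, $t_u=O(\log^2 n\log\alpha)$, and $r_u\le t_u$) into the reduction of Lemma~\ref{thm:folkloreish}. The bookkeeping, including the use of $\rho\le\alpha$ and the choice of Theorem~\ref{thm:fast_butworse} over Theorem~\ref{thm:fastnonadd}, is exactly as intended.
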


\subparagraph{Related Work}
Previous work presented randomized algorithms with constant amortized update time per edge insertion/deletion \cite{BhattacharyaGKL22,HenzingerP22}. For deterministic algorithms, \cite{BhattacharyaCHN18} showed that if one is willing to use $(1+o(1))\cdot \Delta$, colors, a $\polylog(\Delta)$ amortized update time is possible. Solomon and Wein \cite{SolomonW20} extended the algorithm by \cite{BhattacharyaCHN18} and further showed that it is possible to maintain an $\alpha\log^2 n$ coloring in $\polylog(n)$ amortized update time.


\subsection{Dynamic matrix vector product}

Suppose we have an $n \times n$ dynamic matrix $A$, and a dynamic $n$-vector $x$, and we want to maintain a data structure that allows us to efficiently query entries of $Ax$.  
The problem is related to the Online Boolean Matrix-Vector Multiplication (OMV), which is commonly used to obtain conditional lower bounds \cite{CliffordGL15,HenzingerKNS15,LarsenW17,Patrascu10}.
If $A$ is symmetric and sparse, in the sense that the undirected graph $G$ with $A$ as adjacency matrix has low arboricity, then we can use an algorithm for bounded out-degree orientation as a black-box to give an efficient data structure as follows:

\begin{lemma}[Implicit in Thm. A.3 in~\cite{KopelowitzKPS13}]
    Suppose one can maintain an edge-orientation of a dynamic graph with adjacency matrix $A$, that has $t_u$ update time, that for each update performs at most $r_u$ edge re-orientations (direction changes), and that maintains a maximal out-degree of $\le n_o$. Then there is a dynamic matrix-vector product algorithm that supports entry-pair changes to $A$ in $O(t_u+r_u)$ time, entry changes to the vector $x$ in $O(n_o)$ time, and queries to the an entry of product $Ax$ in $O(n_o)$ time.
\end{lemma}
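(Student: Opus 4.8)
The plan is to maintain, alongside a bounded-out-degree orientation $\overrightarrow{G}$ of the graph $G$ whose adjacency matrix is $A$ (using the assumed black-box algorithm), the following data at every vertex $i$: the lists $N^+(i)$ and $N^-(i)$ with the weight $A_{ij}$ stored on each incident edge, the diagonal entry $A_{ii}$, the scalar $x_i$, and one extra scalar $T(i)$. The invariant to keep is $T(i)=\sum_{j\in N^-(i)}A_{ij}x_j$ — that is, $T(i)$ is precisely the part of $(Ax)_i$ contributed by the in-edges of $i$ — together with consistency of $N^+,N^-$ with the current orientation. Given this, a query for $(Ax)_i$ is answered as $A_{ii}x_i+T(i)+\sum_{j\in N^+(i)}A_{ij}x_j$: the first two terms cost $O(1)$ and the sum scans the $\le n_o$ out-neighbours of $i$, for $O(n_o)$ total.

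I would then handle the three update types. For a change of $x_j$ by $\delta$: observe that $x_j$ occurs in a stored sum $T(i)$ only when $j\in N^-(i)$, equivalently $i\in N^+(j)$; so I set $x_j\gets x_j+\delta$ and, for each of the $\le n_o$ out-neighbours $i$ of $j$, set $T(i)\gets T(i)+A_{ij}\delta$. The other occurrences of $x_j$ (inside $(Ax)_i$ for in-neighbours $i$ of $j$, and inside $A_{jj}x_j$) are recomputed at query time and need no maintenance, so this costs $O(n_o)$. For an entry-pair change of $A$ at $(u,v)$ and $(v,u)$ that does not alter the support, no edge of $G$ changes: I update the stored weight and, if the edge is oriented $u\to v$, add $(A_{uv}^{\mathrm{new}}-A_{uv}^{\mathrm{old}})x_u$ to $T(v)$, all in $O(1)$. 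For an entry-pair change that inserts or deletes $\{u,v\}$, I feed that single edge update to the black-box orientation algorithm, paying $t_u$ time and incurring at most $r_u$ re-orientations; each re-orientation of an edge $\{a,b\}$ from $a\to b$ to $b\to a$ is handled in $O(1)$ by moving $b$ between $N^+(a)$ and $N^-(a)$ (and $a$ symmetrically in the lists of $b$), subtracting $A_{ab}x_a$ from $T(b)$ and adding $A_{ab}x_b$ to $T(a)$; and the insertion or deletion of $\{u,v\}$ itself, once oriented, changes exactly one $T(\cdot)$ by $\pm A_{uv}x_{(\cdot)}$ in $O(1)$. Altogether this is $O(t_u+r_u)$.

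Finally I would verify by induction that every operation re-establishes the invariant $T(i)=\sum_{j\in N^-(i)}A_{ij}x_j$ (the base case being the empty matrix, where all $T(i)=0$), whence the query formula is exact. The only delicate point, which I expect to be the crux of the write-up, is the accounting of which stored quantities are touched by a change of $x_j$ or by a re-orientation: this is exactly where the asymmetry of the design matters — $T$ aggregates over in-edges (a vertex may have unboundedly many), but a change to $x_j$ propagates only to the $\le n_o$ out-neighbours of $j$, while out-edge contributions to $(Ax)_i$ are recomputed lazily at query time rather than stored. Everything else is routine pointer bookkeeping in unordered doubly linked lists, which is $O(1)$ per elementary change.
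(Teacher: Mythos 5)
Your proposal is correct and follows essentially the same route as the paper: both maintain at each vertex the partial sum of $(Ax)_i$ over in-edges, propagate vector-entry changes to the $\le n_o$ out-neighbours, handle each edge (re-)orientation with an $O(1)$ adjustment of the affected sums, and recompute the out-edge contributions lazily at query time. Your write-up is merely more explicit about the diagonal entries and the invariant-restoration argument.
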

\begin{proof}
    Let each node $i$ store the sum $s_i=\sum_{j\in N^-(i)}A_{ij}x_j$, i.e. the sum of the terms of $(Ax)_i=\sum_{j\in N(i)}A_{ij}x_j$ corresponding to incoming edges at $i$.  Changing entry $A_{ij}=A_{ji}$ in the matrix to or from $0$ corresponds to deleting or inserting an edge, which takes $t_u$ time and does at most $r_u$ edge re-orientations. Updating the $O(1)$ affected sums after inserting, deleting, re-orienting, or re-weighting an edge takes worst case $O(1)$ time. Any entry update to the matrix $A$ thus takes $O(t_u+r_u)$ time. 
    When a vector entry $x_j$ changes, we need to update the at most $n_o$ sums $\{s_i\}_{i\in N^+(j)}$, which can be done in worst case $O(n_o)$ time.
    Finally, the query for $(Ax)_i$ is computed as $(Ax)_i=s_i+\sum_{j\in N^+(i)}A_{ij}x_j$ in worst case $O(n_o)$ time.
\end{proof}

This result is used in~\cite[Theorem A.3]{KopelowitzKPS13} to give an algorithm for dynamic matrix vector product with running time $O(\alpha^2+\log^2 n)$ for updating the matrix, and $O(\alpha+\log n)$ for updating the vector and for queries.

Combining this theorem with our Theorem~\ref{thm:fast_butworse} gives us an algorithm for dynamic matrix vector product with slightly improved time for updating the matrix:
\begin{corollary}\label{cor:dynamicmatrix}
    Let $A$ be a symmetric $n\times n$ matrix, and let $G$ be the undirected graph whose adjacency matrix is $A$. Let $x$ be an $n$ dimensional vector. Then we can support changes to $A$ in $O(\log^2 n\log\alpha)$ worst case time, changes to $x$ in $O(\alpha+\log n)$ worst case time, and for each $i\in \{1,\ldots,n\}$ we can report $\sum_{j=1}^nA_{ij}x_j$ in worst case $O(\alpha+\log n)$ time.
\end{corollary}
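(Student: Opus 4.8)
The plan is to obtain the corollary as a direct composition of Theorem~\ref{thm:fast_butworse} with the black-box reduction stated in the preceding lemma (implicit in Theorem~A.3 of~\cite{KopelowitzKPS13}). First I would recall that Theorem~\ref{thm:fast_butworse} maintains, for a fully dynamic graph $G$, an explicit orientation $\overrightarrow{G}$ in worst-case update time $t_u = O(\log^2 n \log \rho)$ with maximum out-degree $n_o = O(\rho + \log n)$, where $\rho$ is the maximum subgraph density of $G$ at the time of the update. Since this update time is worst-case, the number of edge re-orientations triggered per update satisfies $r_u \le t_u = O(\log^2 n \log \rho)$ (in fact an $O(\log n)$ factor smaller, but the crude bound already suffices here).

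Next I would convert the density-parameterized guarantees into arboricity-parameterized ones. Comparing the definitions of $\rho$ and $\alpha$ from Section~\ref{sec:prelim}, and using that $|V(H)|/(|V(H)|-1) \ge 1$ for every subgraph $H$ with $|V(H)| \ge 2$, one gets $\rho \le \alpha$. Hence $n_o = O(\rho + \log n) = O(\alpha + \log n)$, and $\log \rho = O(\log \alpha)$ (reading $\log(\cdot)$ as $\max\{1,\log(\cdot)\}$), so $t_u = O(\log^2 n \log \alpha)$ and $r_u = O(\log^2 n \log \alpha)$. Because Theorem~\ref{thm:fast_butworse} is fully adaptive, all of these bounds are expressed in terms of the value of $\alpha$ at the moment of the update, which is precisely what the corollary claims.

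Finally I would instantiate the lemma with $A$ read as the adjacency matrix of $G$: a symmetric entry-pair change to $A$ corresponds to inserting, deleting, or re-weighting an edge of $G$, costing $O(t_u + r_u) = O(\log^2 n \log \alpha)$; an entry change to $x$ touches the at most $n_o$ stored partial sums $\{s_i\}_{i \in N^+(j)}$, costing $O(n_o) = O(\alpha + \log n)$; and a query for $(Ax)_i = s_i + \sum_{j \in N^+(i)} A_{ij} x_j$ scans the $O(n_o)$ out-neighbors of $i$, costing $O(\alpha + \log n)$. Substituting the parameters above yields exactly the stated running times.

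I do not expect a genuine obstacle: the only point requiring any care is the $O(1)$-per-flip bookkeeping inside the lemma (updating the two partial sums $s_i$ at the endpoints of a flipped edge and moving the endpoints between the in-/out-adjacency structures), and since there are $O(r_u)$ flips per update this is already absorbed into the $O(t_u + r_u)$ term. The conceptual content of the proof is therefore just the observation that $\rho \le \alpha$, which lets the density-adaptive structure of Theorem~\ref{thm:fast_butworse} supply the arboricity-adaptive orientation that the reduction needs.
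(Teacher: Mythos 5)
Your proposal is correct and follows the same route as the paper: the paper obtains Corollary~\ref{cor:dynamicmatrix} precisely by plugging Theorem~\ref{thm:fast_butworse} (out-degree $n_o = O(\rho+\log n) = O(\alpha+\log n)$, worst-case update time $t_u = O(\log^2 n\log\alpha)$, hence $r_u \le t_u$) into the reduction of the preceding lemma. The only detail you spell out that the paper leaves implicit is the translation $\rho \le \alpha$ between the density-parameterized theorem and the arboricity-parameterized corollary, and that translation is valid.
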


If we instead combine with our Theorem~\ref{thm:fastnonadd} we get an algorithm for dynamic matrix vector product with slightly worse time for updating the matrix, but improved time for updating the vector and for queries:
\begin{corollary}
    Let $A$ be a symmetric $n \times n$ matrix, and let $G$ be the undirected graph whose adjacency matrix is $A$. Let $x$ be an $n$ dimensional vector. Then we can support changes to $A$ in $O(\log^3 n\log\alpha)$ worst case time, changes to $x$ in $O(\alpha)$ worst case time, and for each $i\in \{1,\ldots,n\}$ we can report $\sum_{j=1}^nA_{ij}x_j$ in worst case $O(\alpha)$ time.
\end{corollary}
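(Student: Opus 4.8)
The plan is to instantiate the black-box reduction stated just above (the lemma ``Implicit in Thm.~A.3 in~\cite{KopelowitzKPS13}'') with the out-orientation algorithm of Theorem~\ref{thm:fastnonadd} in place of Theorem~\ref{thm:fast_butworse}; in other words this corollary is the exact analogue of Corollary~\ref{cor:dynamicmatrix} obtained by swapping in the other trade-off. First I would identify the three quantities the reduction consumes: the update time $t_u$, the worst-case recourse $r_u$ (number of edge re-orientations per update), and the maximum out-degree $n_o$ of the maintained orientation $\overrightarrow{G}$ of the support graph $G$ of $A$.

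Next I would read these off from Theorem~\ref{thm:fastnonadd}. It maintains $\overrightarrow{G}$ with $n_o = O(\rho)$ in $t_u = O(b^3 \log \rho) = O(\log^3 n \log \rho)$ time per update in $G$, where $\rho$ is the current maximum subgraph density. Its proof further shows that each update in $G$ triggers only $O(b)$ operations in $\overrightarrow{G}^b$, each causing $O(b \log \rho)$ edge flips in $\overrightarrow{G}^b$; since a single flip in $\overrightarrow{G}^b$ changes the majority orientation of at most one edge of $G$, the number of re-orientations of $\overrightarrow{G}$ is $r_u = O(b^2 \log \rho) = O(\log^2 n \log \rho) \le t_u$. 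Finally, because $|E(H)|/|V(H)| \le |E(H)|/(|V(H)|-1)$ for every $H$ with $|V(H)| \ge 2$, we have $\rho \le \alpha$, so $n_o = O(\alpha)$, $t_u = O(\log^3 n \log \alpha)$, and $r_u = O(\log^2 n \log \alpha)$.

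Plugging these into the reduction then yields the claim directly: an entry change to $A$ that toggles the entry between zero and nonzero is handled as an edge insertion/deletion in $O(t_u + r_u) = O(\log^3 n \log \alpha)$ time, the $O(1)$ updates to the stored partial sums $s_i$ for inserted, deleted, re-oriented, or re-weighted edges being absorbed into this bound, while a change between two nonzero values costs only $O(1)$; a change to an entry of $x$ touches the $n_o = O(\alpha)$ sums $s_i$ with $i$ an out-neighbour of the changed coordinate; and a query for $(Ax)_i = s_i + \sum_{j \in N^+(i)} A_{ij} x_j$ costs $O(d^+(i)) = O(\alpha)$.

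There is no genuine obstacle here: the corollary is a mechanical composition and all the substance lives in Theorem~\ref{thm:fastnonadd} and the reduction lemma. The only two points I would state with a little care are (i) that the recourse bound plugged in as $r_u$ is the number of re-orientations of $\overrightarrow{G}$, which is dominated by the number of combinatorial changes in $\overrightarrow{G}^b$ that the proof of Theorem~\ref{thm:fastnonadd} already controls, and (ii) the elementary inequality $\rho \le \alpha$ that lets us restate the bounds in terms of the current arboricity, as the statement requires.
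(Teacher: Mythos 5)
Your proposal is correct and matches the paper's (implicit) argument exactly: the paper proves this corollary simply by plugging Theorem~\ref{thm:fastnonadd} into the same reduction lemma used for Corollary~\ref{cor:dynamicmatrix}, with $t_u = O(\log^3 n\log\rho)$, $n_o = O(\rho) = O(\alpha)$, and $r_u \le t_u$ (as the reduction's footnote already notes for worst-case update times). Your extra care in bounding the recourse via the $O(b^2\log\rho)$ combinatorial changes and in justifying $\rho\le\alpha$ is sound but not a departure from the paper's route.
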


\subsection{Dynamic arboricity decomposition}

\begin{lemma}[\cite{henzinger2020explicit,christiansenICALP}] \label{thm:arbdecomp}
Suppose one can maintain an edge-orientation of a dynamic graph, 
that has $t_u$ update time, 
and that maintains a maximal out-degree of $\le n_o$. Then there is an algorithm for maintaining a decomposition into $2n_o$ forests whose update time is $O(t_u)$.
\end{lemma}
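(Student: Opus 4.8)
The plan is to run the given out-orientation algorithm as a black box and convert its maintained orientation $\overrightarrow{G}$ into $2n_o$ forests by a two-level bucketing. First I would maintain, for every vertex $u$, an injective labelling of the at most $n_o$ out-edges of $u$ with labels from $\{1,\dots,n_o\}$: keeping a stack of currently free labels at each vertex lets us assign, release, and re-assign labels in $O(1)$ time whenever an edge incident to $u$ is inserted, deleted, or re-oriented (a re-orientation releases the label at the old tail and grabs a free label at the new tail). For a label $i$, let $P_i$ be the subgraph of all edges labelled $i$. Since two out-edges of the same vertex never share a label, every vertex has out-degree at most $1$ in $P_i$, so each $P_i$ is a pseudoforest: every connected component of $P_i$ contains at most one cycle, and $P_1,\dots,P_{n_o}$ partition $E(G)$.

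The second step is to split each pseudoforest $P_i$ into two forests. In every component of $P_i$ that contains a cycle, pick one edge of that cycle; the chosen edges form a set $F_2^i$ that is a matching (hence a forest), and $F_1^i := P_i \setminus F_2^i$ is acyclic in every component (removing one cycle edge from a unicyclic connected graph yields a tree), hence a forest. Over all $i$ this yields the desired $2n_o$ forests. To maintain the split dynamically I would keep, for each $i$, a dynamic-forest structure (Euler-tour / link-cut / top trees) together with the designated cycle edge of each currently unicyclic component. Each edge update to $G$, together with the re-orientations it triggers, amounts to $O(1 + r_u)$ elementary insertions/deletions of labelled edges into the $P_i$'s, where $r_u$ is the number of re-orientations per update (which is at most $t_u$ since $t_u$ is worst-case); for each elementary change one queries the dynamic-forest structure to decide whether an inserted edge closes a cycle (recording it in $F_2^i$ if so), and on a deletion restores the invariant depending on whether the removed edge was a designated cycle edge or a tree edge. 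This is exactly the bookkeeping carried out in~\cite{henzinger2020explicit,christiansenICALP}, which I would invoke; organized as there, the total overhead stays within $O(t_u)$.

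The step I expect to be the main obstacle is precisely this dynamic maintenance of the two-forest split of a pseudoforest: keeping track, under arbitrary insertions and deletions, of which components are unicyclic and which concrete edge currently plays the role of ``the'' cycle edge — in particular handling the case where deleting an $F_1^i$ edge lying on the cycle turns a unicyclic component into a tree whose two $F_1^i$-pieces are then bridged only by the $F_2^i$ edge, so that connectivity queries must be routed through $F_1^i \cup F_2^i$ rather than $F_1^i$ alone, possibly re-designating cycle edges lazily. I would handle this, and the accompanying $O(t_u)$ time accounting, by citing the detailed constructions of~\cite{henzinger2020explicit,christiansenICALP}; the contribution of the present lemma is just the black-box reduction, so a careful restatement of their maintenance suffices.
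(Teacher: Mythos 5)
Your first step (labelling out-edges with $1,\dots,n_o$ to get $n_o$ pseudoforests, $O(1)$ bookkeeping per re-orientation) matches the paper. The gap is in your second step. You split each pseudoforest $P_i$ into two forests by globally detecting cycles with a dynamic-forest / connectivity structure (Euler-tour, link-cut or top trees) and designating one cycle edge per unicyclic component. Every elementary insertion, deletion, or re-orientation then requires at least one connectivity query and possibly a re-designation, each costing $\Omega(\log n)$; since an update triggers up to $r_u \le t_u$ elementary changes, you obtain $O(t_u \log n)$, not the claimed $O(t_u)$. Deferring the maintenance to \cite{henzinger2020explicit,christiansenICALP} does not close this, because those papers do not implement the scheme you describe: they avoid connectivity queries entirely. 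So as written, your reduction proves a weaker statement than the lemma.

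The device you are missing is a purely local rule that exploits the orientation. In $P_i$ every vertex has out-degree at most one, so every cycle of $P_i$ is a \emph{directed} cycle. When an edge $e=\overrightarrow{uv}$ enters $P_i$, its head $v$ has at most one out-edge $e'$ currently in $P_i$; place $e$ into whichever of the two forests $f_i, f_i'$ does not contain $e'$. This is $O(1)$ time with no auxiliary structure. Correctness: if a cycle of $P_i$ were ever contained in, say, $f_i$, consider the most recently inserted edge $\overrightarrow{ab}$ among its current edges; at that insertion the successor edge $\overrightarrow{bc}$ on the (directed) cycle was already present and was exactly the out-edge of the head $b$, so $\overrightarrow{ab}$ was placed in the other forest --- a contradiction. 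Deletions and re-orientations only remove edges from a forest (a re-orientation is a deletion plus an insertion), which cannot create cycles. This yields the $O(1)$ overhead per elementary change, hence $O(t_u)$ overall. Your static decomposition (one cycle edge per unicyclic component forming a matching) is combinatorially fine, but it is the wrong object to maintain dynamically at the required cost.
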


\begin{proof}
    Firstly, as noted in \cite{henzinger2020explicit,christiansenICALP}: By assigning the $i$'th out-edge of a vertex $u$ to subgraph $S_i$, one obtains a decomposition into $n_o$ subgraphs, each of which is a pseudoforest. Every vertex has at most one out-edge in each pseudoforest $S_i$, and thus, the at most one cycle in each tree of the pseudoforest is a directed cycle according to the orientation.

    For maintaining this dynamic pseudoforest decomposition, there is only an $O(1)$ overhead per edge-reorientation, yielding an $O(n_o)$-time algorithm for maintaining $n_o$ pseudoforests.

    Then, as noted in \cite{henzinger2020explicit}, we may split each pseudoforest $S_i$ into two forests $f_i$ and $f_i'$ by the following simple algorithm: given a new edge $e$ in $f_i$, notice that there is at most one edge $e'$ in $f_i$ incident to its head. Now, one can safely insert $e$ in any of the two forests $\{f_i , f_i'\}$ that does not contain this at most one edge $e'$. Thus, consequently, neither $f_i$ nor $f_i'$ will contain a cycle.
\end{proof}

Thus, by applying Theorem~\ref{thm:fastnonadd}, we obtain the following:

\begin{corollary}\label{cor:decomp}
    There is a deterministic algorithm for maintaining an arboricity decomposition into $O(\alpha)$ forests, whose worst-case update time is $O(\log ^3 n \log \alpha)$,
    where $\alpha$ is the current arboricity of the dynamic graph.
\end{corollary}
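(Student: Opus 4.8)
The plan is to compose two results already established in the paper: the worst-case $O(\alpha)$ out-orientation algorithm of Theorem~\ref{thm:fastnonadd}, and the black-box reduction from bounded-out-degree orientations to forest decompositions of Lemma~\ref{thm:arbdecomp}. First I would run the algorithm of Theorem~\ref{thm:fastnonadd}, which maintains an explicit orientation $\overrightarrow{G}$ with $\Delta(\overrightarrow{G}) = O(\rho)$ in worst-case $O(\log^3 n \log \rho)$ time per edge update in $G$, where $\rho$ is the current maximum subgraph density. Using the standard relations between density and arboricity from Section~\ref{sec:prelim}, namely $\rho \le \lceil \rho \rceil = \min_{\overrightarrow{G}} \Delta(\overrightarrow{G}) \le \alpha$ and $\alpha = O(\rho)$, this orientation has maximal out-degree $n_o := \Delta(\overrightarrow{G}) = O(\alpha)$, and since $\log \rho = O(\log \alpha)$ the update time is $t_u := O(\log^3 n \log \alpha)$.

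Next I would feed this orientation into Lemma~\ref{thm:arbdecomp} with the parameters $n_o = O(\alpha)$ and $t_u = O(\log^3 n \log \alpha)$. The lemma assigns the $i$-th out-edge of each vertex to pseudoforest $S_i$, obtaining a decomposition of $E(G)$ into $n_o$ pseudoforests, and then splits each $S_i$ into two forests; the only per-update overhead is $O(1)$ work for each edge of $\overrightarrow{G}$ that changes direction during the update, which is subsumed by $t_u$. This gives a decomposition into $2n_o = O(\alpha)$ forests, maintained deterministically in worst-case $O(t_u) = O(\log^3 n \log \alpha)$ time, which is exactly the claim.

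I do not expect a genuinely hard step here: the statement is a composition of two black boxes. The one point that deserves an explicit check is that the per-reorientation bookkeeping of Lemma~\ref{thm:arbdecomp} stays within budget. This holds because the update time of Theorem~\ref{thm:fastnonadd} is worst-case, so the number of reorientations of $\overrightarrow{G}$ triggered by a single update to $G$ is bounded by $t_u$ (indeed, as remarked after the out-orientation theorems, the recourse is a $\log n$ factor below the update time), whence the $O(1)$-per-reorientation cost is dominated. A second, more cosmetic point is that Theorem~\ref{thm:fastnonadd} internally maintains the blown-up graph $\overrightarrow{G}^b$ and performs $O(b)$ elementary operations there per update to $G$; but the forest decomposition is built directly on the rounded orientation $\overrightarrow{G}$ of $G$, and the number of changes to $\overrightarrow{G}$ per update is already accounted for inside $t_u$, so no additional factor of $b$ enters the final bound.
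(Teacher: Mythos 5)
Your proposal is correct and matches the paper's own argument, which likewise obtains Corollary~\ref{cor:decomp} by plugging the $O(\rho)=O(\alpha)$ out-orientation of Theorem~\ref{thm:fastnonadd} into the pseudoforest-splitting reduction of Lemma~\ref{thm:arbdecomp}. Your two explicit checks (the reorientation count being dominated by the worst-case update time, and the decomposition living on the rounded orientation $\overrightarrow{G}$ rather than on $\overrightarrow{G}^b$) are exactly the points the paper leaves implicit, so nothing is missing.
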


\subparagraph{Related Work}
While an arboricity decomposition of a graph; a partition of its edges into minimally few forests; is conceptually easy to understand, computing an arboricity decomposition is surprisingly nontrivial. Even computing it exactly has received much attention~\cite{GabowW,Gabow95,Edmonds1965MinimumPO,PicardQueyranne82}. 
The state-of-the-art for computing an exact arboricity decomposition runs in $\tilde{O}(m^{3/2})$ time \cite{GabowW,Gabow95}.
In terms of not-exact algorithms there is a 2-approximation algorithm~\cite{ArikatiMZ97,Eppstein94} as well as an algorithm for computing an $\alpha+2$ arboricity decomposition in near-linear time~\cite{blumenstock2019constructive}.

For dynamic arboricity decomposition, Bannerjee et al.~\cite{Banerjee} give a dynamic algorithm for maintaining the current arboricity. The algorithm has a near-linear update time. They also provide a lower bound of $\Omega(\log{n})$. 
Henzinger Neumann Wiese~\cite{henzinger2020explicit} provide an $O(\alpha)$ arboricity decomposition in $O(\poly(\log n , \alpha))$ time; their result also goes via out-orientation, and they provide a dynamic algorithm for maintaining a $2\alpha'$ arboricity decomposition, given access to any black box dynamic $\alpha'$ out-degree orientation algorithm. 
Most recently, there are algorithms for maintaining $(\alpha+2)$ forests in $O(\operatorname{poly} (\log(n) , \alpha))$ update-time~\cite{christiansenICALP}, and $(\alpha + 1)$ forests in $\tilde{O}(n^{3/4}\operatorname{poly}(\alpha))$ time~\cite{christiansenMFCS}.

\bibliographystyle{alpha}
\bibliography{refs}

\newcommand{\etalchar}[1]{$^{#1}$}
\begin{thebibliography}{BGK{\etalchar{+}}22}

\bibitem[ABD22]{AssadiBD22}
Sepehr Assadi, Aaron Bernstein, and Aditi Dudeja.
\newblock Decremental matching in general graphs.
\newblock In Mikolaj Bojanczyk, Emanuela Merelli, and David~P. Woodruff,
  editors, {\em 49th International Colloquium on Automata, Languages, and
  Programming, {ICALP} 2022, July 4-8, 2022, Paris, France}, volume 229 of {\em
  LIPIcs}, pages 11:1--11:19. Schloss Dagstuhl - Leibniz-Zentrum f{\"{u}}r
  Informatik, 2022.

\bibitem[AD16]{AbboudD16}
Amir Abboud and S{\o}ren Dahlgaard.
\newblock Popular conjectures as a barrier for dynamic planar graph algorithms.
\newblock In {\em {IEEE} 57th Annual Symposium on Foundations of Computer
  Science, {FOCS} 2016, 9-11 October 2016, Hyatt Regency, New Brunswick, New
  Jersey, {USA}}, pages 477--486, 2016.

\bibitem[AMZ97]{ArikatiMZ97}
Srinivasa~Rao Arikati, Anil Maheshwari, and Christos~D. Zaroliagis.
\newblock Efficient computation of implicit representations of sparse graphs.
\newblock {\em Discret. Appl. Math.}, 78(1-3):1--16, 1997.

\bibitem[AW14]{AbboudW14}
Amir Abboud and Virginia~Vassilevska Williams.
\newblock Popular conjectures imply strong lower bounds for dynamic problems.
\newblock In {\em 55th {IEEE} Annual Symposium on Foundations of Computer
  Science, {FOCS} 2014, Philadelphia, PA, USA, October 18-21, 2014}, pages
  434--443, 2014.

\bibitem[BB17]{berglinetal:LIPIcs:2017:8263}
Edvin Berglin and Gerth~St{\o}lting Brodal.
\newblock {A Simple Greedy Algorithm for Dynamic Graph Orientation}.
\newblock In {\em 28th International Symposium on Algorithms and Computation
  (ISAAC 2017)}, volume~92 of {\em Leibniz International Proceedings in
  Informatics (LIPIcs)}, pages 12:1--12:12, Dagstuhl, Germany, 2017. Schloss
  Dagstuhl--Leibniz-Zentrum fuer Informatik.

\bibitem[BCHN18]{BhattacharyaCHN18}
Sayan Bhattacharya, Deeparnab Chakrabarty, Monika Henzinger, and Danupon
  Nanongkai.
\newblock Dynamic algorithms for graph coloring.
\newblock In Artur Czumaj, editor, {\em Proceedings of the Twenty-Ninth Annual
  {ACM-SIAM} Symposium on Discrete Algorithms, {SODA} 2018, New Orleans, LA,
  USA, January 7-10, 2018}, pages 1--20. {SIAM}, 2018.

\bibitem[BF99]{Brodal99dynamicrepresentations}
Gerth~St{\o}lting Brodal and Rolf Fagerberg.
\newblock Dynamic representations of sparse graphs.
\newblock In {\em In Proc. 6th International Workshop on Algorithms and Data
  Structures (WADS)}, pages 342--351. Springer-Verlag, 1999.

\bibitem[BF20]{blumenstock2019constructive}
Markus Blumenstock and Frank Fischer.
\newblock A constructive arboricity approximation scheme.
\newblock In {\em {SOFSEM} 2020: Theory and Practice of Computer Science - 46th
  International Conference on Current Trends in Theory and Practice of
  Informatics, {SOFSEM} 2020, Limassol, Cyprus, January 20-24, 2020,
  Proceedings}, volume 12011 of {\em Lecture Notes in Computer Science}, pages
  51--63. Springer, 2020.

\bibitem[BFH21]{BernsteinFH21}
Aaron Bernstein, Sebastian Forster, and Monika Henzinger.
\newblock A deamortization approach for dynamic spanner and dynamic maximal
  matching.
\newblock {\em {ACM} Trans. Algorithms}, 17(4):29:1--29:51, 2021.

\bibitem[BGK{\etalchar{+}}22]{BhattacharyaGKL22}
Sayan Bhattacharya, Fabrizio Grandoni, Janardhan Kulkarni, Quanquan~C. Liu, and
  Shay Solomon.
\newblock Fully dynamic ({\(\Delta\)} +1)-coloring in \emph{O}(1) update time.
\newblock {\em {ACM} Trans. Algorithms}, 18(2):10:1--10:25, 2022.

\bibitem[BGS15]{BaswanaGS15}
Surender Baswana, Manoj Gupta, and Sandeep Sen.
\newblock Fully dynamic maximal matching in o(log n) update time.
\newblock {\em {SIAM} J. Comput.}, 44(1):88--113, 2015.

\bibitem[BHNT15]{BhattacharyaHNT15}
Sayan Bhattacharya, Monika Henzinger, Danupon Nanongkai, and Charalampos~E.
  Tsourakakis.
\newblock Space- and time-efficient algorithm for maintaining dense subgraphs
  on one-pass dynamic streams.
\newblock In Rocco~A. Servedio and Ronitt Rubinfeld, editors, {\em Proceedings
  of the Forty-Seventh Annual {ACM} on Symposium on Theory of Computing, {STOC}
  2015, Portland, OR, USA, June 14-17, 2015}, pages 173--182. {ACM}, 2015.

\bibitem[BKV12]{BahmaniKV12}
Bahman Bahmani, Ravi Kumar, and Sergei Vassilvitskii.
\newblock Densest subgraph in streaming and mapreduce.
\newblock {\em Proc. {VLDB} Endow.}, 5(5):454--465, 2012.

\bibitem[BRS20]{Banerjee}
Niranka Banerjee, Venkatesh Raman, and Saket Saurabh.
\newblock Fully dynamic arboricity maintenance.
\newblock {\em Theor. Comput. Sci.}, 822:1--14, 2020.

\bibitem[BS15]{BernsteinS15}
Aaron Bernstein and Cliff Stein.
\newblock Fully dynamic matching in bipartite graphs.
\newblock In {\em Automata, Languages, and Programming - 42nd International
  Colloquium, {ICALP} 2015, Kyoto, Japan, July 6-10, 2015, Proceedings, Part
  {I}}, pages 167--179, 2015.

\bibitem[BS16]{BernsteinS16}
Aaron Bernstein and Cliff Stein.
\newblock Faster fully dynamic matchings with small approximation ratios.
\newblock In {\em Proceedings of the Twenty-Seventh Annual {ACM-SIAM} Symposium
  on Discrete Algorithms, {SODA} 2016, Arlington, VA, USA, January 10-12,
  2016}, pages 692--711, 2016.

\bibitem[CGL15]{CliffordGL15}
Rapha{\"{e}}l Clifford, Allan Gr{\o}nlund, and Kasper~Green Larsen.
\newblock New unconditional hardness results for dynamic and online problems.
\newblock In Venkatesan Guruswami, editor, {\em {IEEE} 56th Annual Symposium on
  Foundations of Computer Science, {FOCS} 2015, Berkeley, CA, USA, 17-20
  October, 2015}, pages 1089--1107. {IEEE} Computer Society, 2015.

\bibitem[CHRT22]{christiansenMFCS}
Aleksander B.~G. Christiansen, Jacob Holm, Eva Rotenberg, and Carsten
  Thomassen.
\newblock {On Dynamic $\alpha + 1$ Arboricity Decomposition and
  Out-Orientation}.
\newblock In Stefan Szeider, Robert Ganian, and Alexandra Silva, editors, {\em
  47th International Symposium on Mathematical Foundations of Computer Science
  (MFCS 2022)}, volume 241 of {\em Leibniz International Proceedings in
  Informatics (LIPIcs)}, pages 34:1--34:15, Dagstuhl, Germany, 2022. Schloss
  Dagstuhl -- Leibniz-Zentrum f{\"u}r Informatik.

\bibitem[CQ22]{ChekuriKent}
Chandra Chekuri and Kent Quanrud.
\newblock $(1-\varepsilon)$-approximate fully dynamic densest subgraph: linear
  space and faster update time, 2022.

\bibitem[CR22]{christiansenICALP}
Aleksander B.~G. Christiansen and Eva Rotenberg.
\newblock {Fully-Dynamic $\alpha + 2$ Arboricity Decompositions and Implicit
  Colouring}.
\newblock In Miko{\l}aj Boja\'{n}czyk, Emanuela Merelli, and David~P. Woodruff,
  editors, {\em 49th International Colloquium on Automata, Languages, and
  Programming (ICALP 2022)}, volume 229 of {\em Leibniz International
  Proceedings in Informatics (LIPIcs)}, pages 42:1--42:20, Dagstuhl, Germany,
  2022. Schloss Dagstuhl -- Leibniz-Zentrum f{\"u}r Informatik.

\bibitem[Dah16]{Dahlgaard16}
S{\o}ren Dahlgaard.
\newblock On the hardness of partially dynamic graph problems and connections
  to diameter.
\newblock In {\em 43rd International Colloquium on Automata, Languages, and
  Programming, {ICALP} 2016, July 11-15, 2016, Rome, Italy}, pages 48:1--48:14,
  2016.

\bibitem[Edm65]{Edmonds1965MinimumPO}
Jack Edmonds.
\newblock Minimum partition of a matroid into independent subsets.
\newblock {\em Journal of Research of the National Bureau of Standards Section
  B Mathematics and Mathematical Physics}, page~67, 1965.

\bibitem[ELS15]{EpastoLS15}
Alessandro Epasto, Silvio Lattanzi, and Mauro Sozio.
\newblock Efficient densest subgraph computation in evolving graphs.
\newblock In Aldo Gangemi, Stefano Leonardi, and Alessandro Panconesi, editors,
  {\em Proceedings of the 24th International Conference on World Wide Web,
  {WWW} 2015, Florence, Italy, May 18-22, 2015}, pages 300--310. {ACM}, 2015.

\bibitem[Epp94]{Eppstein94}
David Eppstein.
\newblock Arboricity and bipartite subgraph listing algorithms.
\newblock {\em Inf. Process. Lett.}, 51(4):207–211, August 1994.

\bibitem[FMN03]{FrigioniMN03}
Daniele Frigioni, Alberto Marchetti{-}Spaccamela, and Umberto Nanni.
\newblock Fully dynamic shortest paths in digraphs with arbitrary arc weights.
\newblock {\em J. Algorithms}, 49(1):86--113, 2003.

\bibitem[Gab95]{Gabow95}
Harold~N. Gabow.
\newblock Algorithms for graphic polymatroids and parametric s-sets.
\newblock In {\em Proceedings of the Sixth Annual ACM-SIAM Symposium on
  Discrete Algorithms}, SODA '95, page 88–97, USA, 1995. Society for
  Industrial and Applied Mathematics.

\bibitem[GLS{\etalchar{+}}19]{GLSSS19}
Fabrizio Grandoni, Stefano Leonardi, Piotr Sankowski, Chris Schwiegelshohn, and
  Shay Solomon.
\newblock {(1} + {\(\epsilon\)})-approximate incremental matching in constant
  deterministic amortized time.
\newblock In Timothy~M. Chan, editor, {\em Proceedings of the Thirtieth Annual
  {ACM-SIAM} Symposium on Discrete Algorithms, {SODA} 2019, San Diego,
  California, USA, January 6-9, 2019}, pages 1886--1898. {SIAM}, 2019.

\bibitem[GSSU22]{GSSU22}
Fabrizio Grandoni, Chris Schwiegelshohn, Shay Solomon, and Amitai Uzrad.
\newblock Maintaining an {EDCS} in general graphs: Simpler, density-sensitive
  and with worst-case time bounds.
\newblock In Karl Bringmann and Timothy Chan, editors, {\em 5th Symposium on
  Simplicity in Algorithms, SOSA@SODA 2022, Virtual Conference, January 10-11,
  2022}, pages 12--23. {SIAM}, 2022.

\bibitem[GW88]{GabowW}
Harold Gabow and Herbert Westermann.
\newblock Forests, frames, and games: Algorithms for matroid sums and
  applications.
\newblock In {\em Proceedings of the Twentieth Annual ACM Symposium on Theory
  of Computing}, STOC '88, page 407–421, New York, NY, USA, 1988. Association
  for Computing Machinery.

\bibitem[HKNS15]{HenzingerKNS15}
Monika Henzinger, Sebastian Krinninger, Danupon Nanongkai, and Thatchaphol
  Saranurak.
\newblock Unifying and strengthening hardness for dynamic problems via the
  online matrix-vector multiplication conjecture.
\newblock In Rocco~A. Servedio and Ronitt Rubinfeld, editors, {\em Proceedings
  of the Forty-Seventh Annual {ACM} on Symposium on Theory of Computing, {STOC}
  2015, Portland, OR, USA, June 14-17, 2015}, pages 21--30. {ACM}, 2015.

\bibitem[HNW20]{henzinger2020explicit}
Monika Henzinger, Stefan Neumann, and Andreas Wiese.
\newblock Explicit and implicit dynamic coloring of graphs with bounded
  arboricity.
\newblock {\em CoRR}, abs/2002.10142, 2020.

\bibitem[HP22]{HenzingerP22}
Monika Henzinger and Pan Peng.
\newblock Constant-time dynamic ({\(\Delta\)} +1)-coloring.
\newblock {\em {ACM} Trans. Algorithms}, 18(2):16:1--16:21, 2022.

\bibitem[HTZ14]{HeTZ14}
Meng He, Ganggui Tang, and Norbert Zeh.
\newblock Orienting dynamic graphs, with applications to maximal matchings and
  adjacency queries.
\newblock In Hee{-}Kap Ahn and Chan{-}Su Shin, editors, {\em Algorithms and
  Computation - 25th International Symposium, {ISAAC} 2014, Jeonju, Korea,
  December 15-17, 2014, Proceedings}, volume 8889 of {\em Lecture Notes in
  Computer Science}, pages 128--140. Springer, 2014.

\bibitem[IL93]{IvkovicL93}
Zoran Ivkovic and Errol~L. Lloyd.
\newblock Fully dynamic maintenance of vertex cover.
\newblock In {\em Graph-Theoretic Concepts in Computer Science, 19th
  International Workshop, {WG} '93, Utrecht, The Netherlands, June 16-18, 1993,
  Proceedings}, pages 99--111, 1993.

\bibitem[KK06]{KowalikK06}
Lukasz Kowalik and Maciej Kurowski.
\newblock Oracles for bounded-length shortest paths in planar graphs.
\newblock {\em {ACM} Trans. Algorithms}, 2(3):335--363, 2006.

\bibitem[KKPS14]{KopelowitzKPS13}
Tsvi Kopelowitz, Robert Krauthgamer, Ely Porat, and Shay Solomon.
\newblock Orienting fully dynamic graphs with worst-case time bounds.
\newblock In Javier Esparza, Pierre Fraigniaud, Thore Husfeldt, and Elias
  Koutsoupias, editors, {\em Automata, Languages, and Programming - 41st
  International Colloquium, {ICALP} 2014, Copenhagen, Denmark, July 8-11, 2014,
  Proceedings, Part {II}}, volume 8573 of {\em Lecture Notes in Computer
  Science}, pages 532--543. Springer, 2014.

\bibitem[Kow07]{10.1016/j.ipl.2006.12.006}
\L{}ukasz Kowalik.
\newblock Adjacency queries in dynamic sparse graphs.
\newblock {\em Inf. Process. Lett.}, 102(5):191–195, May 2007.

\bibitem[KPP16]{KopelowitzPP16}
Tsvi Kopelowitz, Seth Pettie, and Ely Porat.
\newblock Higher lower bounds from the 3sum conjecture.
\newblock In {\em Proceedings of the Twenty-Seventh Annual {ACM-SIAM} Symposium
  on Discrete Algorithms, {SODA} 2016, Arlington, VA, USA, January 10-12,
  2016}, pages 1272--1287, 2016.

\bibitem[LW17]{LarsenW17}
Kasper~Green Larsen and R.~Ryan Williams.
\newblock Faster online matrix-vector multiplication.
\newblock In Philip~N. Klein, editor, {\em Proceedings of the Twenty-Eighth
  Annual {ACM-SIAM} Symposium on Discrete Algorithms, {SODA} 2017, Barcelona,
  Spain, Hotel Porta Fira, January 16-19}, pages 2182--2189. {SIAM}, 2017.

\bibitem[NS16]{NeimanS16}
Ofer Neiman and Shay Solomon.
\newblock Simple deterministic algorithms for fully dynamic maximal matching.
\newblock {\em {ACM} Trans. Algorithms}, 12(1):7:1--7:15, 2016.

\bibitem[OSSW20]{OnakSSW20}
Krzysztof Onak, Baruch Schieber, Shay Solomon, and Nicole Wein.
\newblock Fully dynamic {MIS} in uniformly sparse graphs.
\newblock {\em {ACM} Trans. Algorithms}, 16(2):26:1--26:19, 2020.

\bibitem[Pat10]{Patrascu10}
Mihai Patrascu.
\newblock Towards polynomial lower bounds for dynamic problems.
\newblock In Leonard~J. Schulman, editor, {\em Proceedings of the 42nd {ACM}
  Symposium on Theory of Computing, {STOC} 2010, Cambridge, Massachusetts, USA,
  5-8 June 2010}, pages 603--610. {ACM}, 2010.

\bibitem[PQ82]{PicardQueyranne82}
Jean-Claude Picard and Maurice Queyranne.
\newblock A network flow solution to some nonlinear 0-1 programming problems,
  with applications to graph theory.
\newblock {\em Networks}, 12(2):141--159, 1982.

\bibitem[PS16]{PelegS16}
David Peleg and Shay Solomon.
\newblock Dynamic {(1} + $\varepsilon$)-approximate matchings: {A}
  density-sensitive approach.
\newblock In Robert Krauthgamer, editor, {\em Proceedings of the Twenty-Seventh
  Annual {ACM-SIAM} Symposium on Discrete Algorithms, {SODA} 2016, Arlington,
  VA, USA, January 10-12, 2016}, pages 712--729. {SIAM}, 2016.

\bibitem[Sol16]{Solomon16}
Shay Solomon.
\newblock Fully dynamic maximal matching in constant update time.
\newblock In {\em {IEEE} 57th Annual Symposium on Foundations of Computer
  Science, {FOCS} 2016, 9-11 October 2016, Hyatt Regency, New Brunswick, New
  Jersey, {USA}}, pages 325--334, 2016.

\bibitem[Sol18]{Solomon18}
Shay Solomon.
\newblock Dynamic approximate matchings with an optimal recourse bound.
\newblock {\em CoRR}, abs/1803.05825, 2018.

\bibitem[SW20a]{sawlani2020near}
Saurabh Sawlani and Junxing Wang.
\newblock Near-optimal fully dynamic densest subgraph.
\newblock In {\em Proceedings of the 52nd Annual ACM SIGACT Symposium on Theory
  of Computing}, pages 181--193, 2020.

\bibitem[SW20b]{SolomonW20}
Shay Solomon and Nicole Wein.
\newblock Improved dynamic graph coloring.
\newblock {\em {ACM} Trans. Algorithms}, 16(3):41:1--41:24, 2020.

\end{thebibliography}

\end{document}